\title{Band edge limit of the scattering matrix \\ for quasi-one-dimensional discrete Schr\"odinger operators}
\author{Miguel Ballesteros$^1$, Gerardo Franco $^1$, \\ Guillermo Garro$^1$,   Hermann Schulz-Baldes$^2$
\\
\\
{\small $^1$ IIMAS, UNAM, Mexico}
\\   
{\small $^2$Department Mathematik, Friedrich-Alexander-Universit\"at Erlangen-N\"urnberg, Germany}
}
\date{ }
\newtheorem{theo}{Theorem}
\newtheorem{defini}[theo]{Definition}
\newtheorem{proposi}[theo]{Proposition}
\newtheorem{lemma}[theo]{Lemma}
\newtheorem{rem}[theo]{Remark}
\newtheorem{as}[theo]{Assumption}
\newcommand{\CM}{{\mathbb C}}
\newcommand{\NM}{{\mathbb N}}
\newcommand{\RM}{{\mathbb R}}
\newcommand{\SM}{{\mathbb S}}
\newcommand{\ZM}{{\mathbb Z}}
\newcommand{\DM}{{\mathbb D}}
\newcommand{\Ss}{{\cal S}}
\newcommand{\Nn}{{\cal N}}
\newcommand{\Mm}{{\cal M}}
\newcommand{\Ll}{{\cal L}}
\newcommand{\Ker}{\mbox{\rm Ker}}
\newcommand{\one}{{\bf 1}}
\newcommand{\overz}{1/z}
\newcommand{\overzb}{1/\overline{z}}
\newcommand{\overzz}{z^{-1}}
\newcommand{\zb}{\overline{z}}
\newcommand{\bsm}{\left(\begin{smallmatrix}} 
\newcommand{\esm}{\end{smallmatrix}\right)}  
\definecolor{GR}{rgb}{.35,.7,.35}
\begin{document}

\maketitle

\begin{abstract}

 This paper is about the scattering theory for one-dimensional matrix Schr\"odinger operators with a matrix potential having a finite first moment. The transmission coefficients are analytically continued and extended to the band edges. An explicit expression is given for these extensions. {The} limits  of the reflection coefficients at the band edges is also calculated.

\vspace{.1cm}

\noindent Keywords: Jost solutions, scattering matrix, half-bound states
\\
MSC2010 database: 47A40, 81U05, 47B36




\end{abstract}

\vspace{.5cm}


\section{Introduction}

\label{sec-intro}

We  study scattering theory for the one-dimensional matrix  Schr\"odinger  Hamiltonians  on $\mathbb{Z}$ of the form 
\begin{align}\label{introeq}
H = H_{0}  + V, 
\end{align}
where $H_{0}$ is {(an energy shift of)} the discrete Laplace operator and $V$ is a self-adjoint  matrix multiplication operator {with finite first moment. See Section~\ref{sec-Hams} below for a detailed definition of $H$.} This model is widely  used in the context of low-energy phenomena in solid state physics. Moreover, $H$ is a tridiagonal operator, also called a Jacobi operator, which is the discrete analogue of a Sturm-Liouville operator.  Its analysis is connected to orthogonal (matrix) polynomials and, via its spectral theory,  to {matrix-valued measures} on the real line.  {Many authors have studied direct and inverse scattering theory for this class of operators, see for example Case and Kac \cite{Cas}, Serebryakov \cite{Ser}, Aptekarev and Nikishin \cite{APT,Nik}, Geronimo \cite{Ger} and Guseinov \cite{Gus,Gus3}. More recent contributions are \cite{Ego,Vol}.}

\vspace{.1cm}

The energy is parametrized by $E = z + 1/z $, for {$z \in \overline{\mathbb{D}}\setminus\{0\}$ where $\DM=\{ z \in \mathbb{C} \: : \:  |z| < 1 \} $} and this parametrization is used in  the  analysis of the scattering matrix which is constructed from the Jost solutions, namely asymptotically free generalized eigenfunctions. This permits to extend  the transmission coefficients {meromorphically} to $  \mathbb{D}\setminus \{ 0\}  $ using the properties of the Wronskian. We compute the limit {of the transmission coefficients} when $z$ tends to $1$  and give an explicit  expression for them (with a limit taken in 
$\overline{\mathbb{D}}$). We also prove that the limits of the reflection coefficients exist when $z$ tends to $1$ 
(these coefficients are only defined for $z \in \mathbb{S}^1 $, and $\mathbb{S}^1$ is regarded as a subset of $\mathbb{C}$).  The limits when $z$ tends to $-1$ can be studied in a similar fashion and, for this reason, we omit them. The above results are useful for inverse scattering theory and the proof of Levinson's theorem, and we will address these problems in future works.

\vspace{.1cm}

{Scattering} theory for matrix Schr\"odinger operators on $\ZM$ of the form \eqref{introeq}, but with compactly supported $V$, is studied in \cite{BCS}. However, in \cite{BCS} a different analytical approach is followed, namely the solutions of the eigenvalue equation of $H$ are calculated in terms of the transfer matrix, whereas here the Volterra equation is used. This implies a major difference and there is practically  no intersection between  the proofs of  the present   manuscript and the proofs in \cite{BCS}.  Moreover, a compactly supported potential admits meromorphic continuations of the scattering matrix to the whole complex plane, something that is not possible in the present framework.  Nevertheless, we essentially stick with the notations of \cite{BCS} in this paper, but there are some new technical objects that are only addressed in the present analysis. 

\vspace{.1cm}

Let us briefly comment on other earlier contributions. 
{In the continuous scalar case,  \cite{Cha} contains everything from Jost solutions to low-energy behavior of reflection and transmission coefficients (see Ch. XVII).  In  Sect. XVII4.4 and in  page 381, some ideas and references for the matrix-valued case are presented. Much of this is now driven by interest in completely integrable systems, in particular, by studies of the matrix-valued KdV and Toda equations. The inverse scattering theory approach was developed in great detail in \cite{Cal,Cal2,Mar,Olm,WDT,G}, and  the half-line case was addressed in \cite{New}. For further references, see \cite{Dei,Fir,Ges,Bon}.}

\vspace{.1cm}

{The low-energy behavior of the scattering matrix is studied in \cite{Kla} for the scalar continuous case and in \cite{HKS} for the scalar discrete case.} For continuous matrix-valued Schr\"odinger operators,  {  there are works by  Wadati and Kamijo \cite{WDT}, Mart\'inez Alonso and Olmedilla \cite{Mar,Olm}, Corona-Corona \cite{G}, Newton and Jost  \cite {New} as well as } by Aktosun, Klaus, Van Der Mee and Weder \cite{AKV,AKW,AW}. We would like to stress here that our proof  for the limit of the scattering matrix at the band edges is shorter and closely tied to a conceptual treatment of the Wronskian. Actually we do not use the Jordan decomposition as in \cite{AKV}. {There are also works in the non-linear context, see for example \cite{Cal,Cal2,CN2018,N2018}}. Bound states for the half-space version of the  discrete matrix Schr\"odinger equation with compactly supported potentials have been analyzed in \cite{ACP}.  


%
\subsection{Mathematical framework and main results}
\label{sec-Hams}

In this manuscript, $\CM^L$ denotes the $L$-dimensional complex vector space and $\mathcal{M}_{d\times e}(\mathbb{C}) =  \mathcal{M}_{d\times e}$ the vector space of matrices  with $d$ rows and $e$ columns and with coefficients in $\mathbb{C}$. Furthermore, $M^*$ denotes the adjoint of a matrix $M$, that is, its conjugate transpose.

\subsubsection{Hamiltonians}

Scattering theory compares asymptotic time evolution of two systems. The simpler one is called free system and the other is the interaction system. In this paper, the free system is described by the   discrete  Laplacian on the Hilbert space $  \ell^2(\mathbb{Z}, \mathbb{C}^L) $.
It is given by
\begin{align}\label{eq-H0}
(H_0 \phi)(n) \; := \;   \phi(n+1) \; + \; \phi(n-1), \hspace{2cm} \phi \in  \ell^2(\mathbb{Z}, \mathbb{C}^L).
\;
\end{align}
We denote by $V  \in   \mathcal{M}_{L\times L}(\mathbb{C})^{\mathbb{Z}}   $  the interaction, which is a matrix-valued multiplication operator  
 defined by
\begin{align}\label{VVV}
(V \Psi)(n) :=   V(n)\Psi(n), \hspace{2cm} \Psi \in  \ell^2(\mathbb{Z}, \mathbb{C}^L),
\end{align}
and assume that $V(n) = V(n)^*$ for every $n \in \mathbb{Z}$, and that       
\begin{align}\label{shortrange}
\sum_{n \in \mathbb{Z}} \| n  V(n)\| < \infty. 
\end{align}
Now the interaction Hamiltonian is defined by \eqref{introeq}, namely $H=H_0 + V$, with domain  $ \ell^2(\mathbb{Z}, \mathbb{C}^L) $. 
We denote  by $\mathcal{F} :    \ell^2(\mathbb{Z}, \mathbb{C}^L)  \to  L^2([- \pi, \pi], \mathbb{C}^{L}) $ the Fourier transform and by $\mathcal{F}^{-1}$ its inverse. They are given by
\begin{align} \label{Fourier}
\mathcal{F}(\phi)(k) =  \frac{1}{\sqrt{2 \pi}} \sum_{n } e^{ik n } \phi(n),  \hspace{2cm}   \mathcal{F}^{-1}(\psi)(n)
= \int_{- \pi }^{\pi}   \frac{1}{\sqrt{2 \pi}}  e^{-ik n } \psi(k).
\end{align}
A direct calculation leads us to
\begin{align}\label{H0fourier}
\mathcal{F}H_0  \mathcal{F}^{-1} \psi (k) = (e^{ik} + e^{-i k})\psi(k) = 2 \cos(k) \psi(k).
\end{align}
As the Fourier transform is unitary, then {the spectrum of $H_0$ is $ \sigma(H_0) = [-2,2] $ and} it is purely absolutely continuous. The essential spectrum of $H$ is {thus} $ [-2, 2] $ (see Section XIII.4 in \cite{RaS}).  

\vspace{.1cm}

This paper studies stationary scattering theory. Then, naturally,  eigenvalue equations for $H$ and $H_0$ are relevant in this manuscript. As usual, we study generalized eigenvalues and, moreover, do not only address real energies, but also study generalized eigenvectors corresponding to complex energies. We use the same symbols $H$ and $H_0$ to denote the operators defined (with the same expressions as above) either on $(\mathcal{M}_{L \times L}  )^{\mathbb{Z}}$    or  $(\mathbb{C}^L)^{\mathbb{Z}}$ and parametrize the eigenvalues in the form 
\begin{equation}
\label{eq-Ez}
E = z + 1/z
\;,
\qquad
z \in \mathbb{C} \setminus \{ 0\}.
\end{equation}
Then, the eigenvalue equations take the form 
\begin{align}\label{solutions}
Hu = Eu,    \\
H_{0}u = Eu,   \label{Fsolutions} 
\end{align}
and generally we take   $u \in (\mathcal{M}_{L \times L})^{\mathbb{Z}}$.  In the context of this article we simply call solutions the functions $u$ satisfying \eqref{solutions}.  The solutions of \eqref{Fsolutions} are referred {to as the} free solutions. Of particular importance are the Jost solutions, which are solutions with prescribed data at $- \infty $ or $\infty$  given by free solutions. They are essential objects for the scattering matrix. We also study solutions with prescribed data {at $0$ and $1$}, because they are important in technical elements in our proofs.

\subsubsection{Jost solutions}

\vspace{.2cm}

Jost solutions are the key ingredient for the construction of the scattering matrix. They are solutions of the system that behave as free waves (plane waves) at infinity. {Standard properties and the construction of these solutions can be found in the books \cite{AGR,Cha}.} 
In \eqref{H0fourier},  $k$ might be seen as a momentum and the Fourier modes, {namely} the functions
$ n \mapsto  e^{i n k }$ represent plane waves. These are generalized eigenvalues of $H_0$ and satisfy the equation
\begin{align}\label{planewavesenergy}
H_0 e^{ i n  k} \alpha =   (e^{ik} + e^{-i k}) e^{ in  k}\alpha = 2\cos(k)e^{in  k}  \alpha ,
\end{align}
for every $   \alpha \in \mathbb{C}^L$. 
In the previous equation we identify, as usual, the function $ n \mapsto  e^{i n k }  \alpha $ with $ e^{i n k }   \alpha $. For $k \notin \{-\pi, 0 ,\pi  \}  $,  the functions $\{e^{in k} \alpha, e^{-ik n}  \alpha :   \alpha \in \mathbb{C}^{L}\}$ define a  $ 2 L $-dimensional vector space and, as $H_0$ has only two discrete derivatives, they generate all solutions of \eqref{planewavesenergy} (see also \eqref{Fsolutions}). Of course, $2\cos(k) = E$ is interpreted as the energy of the corresponding plane waves.  In this generalized sense, the  solutions of the time dependent Schr\"odinger equation
$$ i \frac{d}{dt }\phi = H_0 \phi   $$
are of the form
$$  e^{-i (E t - kn)}  \alpha , \hspace{2cm} e^{-i (E t + kn)}  \alpha.    $$
For positive energies $E$, the first wave function above moves in the direction of $k$ and the second in the direction of $-k$. For negative energies it is the other way around.  Heuristically, we understand a wave traveling to the right
allowing $n$ to be real (as in the continuous case) and looking at the equation $ Et - kn = 0  $ (taking the phase to be zero): for positive $E$ and $k$ (for example), a positive increment in time leads to a positive increment in position.

\vspace{.1cm}

It is convenient to change our notation and take $ z = e^{ik} $. With this notation,  we take $z \in \mathbb{C} $ and not only $ z = e^{ik} $ as before, {\it i.e.} we analytically continue the solutions. From the discussion above, we obtain that a complex number  $ z = e^{ik} $ represents  a wave traveling to the right
if  its real and imaginary parts have the same sign. Otherwise, it travels to the left.  This implies that if $ z $ represents a wave traveling to the right, then $1 /z$ represents a wave traveling to the left, and vice versa.
Notice that this holds true only when  $ z = e^{ik} $, if this is not fulfilled then there is no interpretation for the direction of traveling.
We analyze matrix valued solutions in order to consider all vector valued solutions at once. We set (for $z \ne 0$) 
\begin{align}\label{matrixsolution}
   u_0^z(n)
   \;: =\;
   z^n\,\one ,
\end{align}
where  $\one$ is the identity in $\mathcal{M}_{L\times L}$. It satisfies the complex extension of \eqref{planewavesenergy}, notably with $E=z+1/z$,
\begin{align}\label{planewavesmatrix}
H_0 u_0^z =  E u_0^z,  
\end{align}
where we use matrix multiplication,
and the vector valued solutions of \eqref{planewavesenergy} are of the form $  u_0^z {a}  $.
Notice that for any fixed value of the energy $E$ and $ z \notin \{-1,0, 1 \} $, the columns of $ u_0^z $ and
$  u_0^{1/z} $ generate all solutions. This is not the case for $z = 1$ or $z =  -1 $ (which correspond to $E= 2$, or $E = -2$, respectively) because in this situation $ u_0^z = u_0^{1/z}  $. In order to provide all solutions, also for $E= 2$ and $E= -2$, we define
\begin{align}\label{vuno}
v_0^{\pm} (n) := (\pm 1)^n\,n.
\end{align}
Then, the columns of the matrices $  u_0^{\pm 1} $ and $ v_0^{\pm}   $ generate the space of free solutions (generalized eigenvectors) of \eqref{planewavesenergy}  for $E =   \pm 2$. As mentioned above, the generalized eigenvectors of  $ H_0  $ are free waves (or plane waves). Jost solutions are generalized eigenvectors of $H$ that behave as plane waves away from the interaction. {They are introduced in the next definition}. Their existence is proved  in Section~\ref{sec-Jost}.

\begin{defini}[Jost Solutions]\label{Jost}
For every $z  \in \overline{\mathbb{D}} \setminus \{  0  \} $,  we denote by $ u_{+}^{z}  $
, $u_{-}^{1/z}$ the $\mathcal{M}_{L\times L}$-valued solutions of
\begin{align} \label{hamiltoninan}
H u^{z}_{+} = E u^{z}_{+},   \hspace{1cm} H u^{1/z}_{-} = E u^{1/z}_{-}{,}    \hspace{3cm} E = z + 1/z{,}
\end{align}
satisfying, as {$n \to + \infty$ and $n \to - \infty$ respectively},
\begin{align}\label{wasu}
u_+^z(n) = z^n (  \one + o(1)) , \hspace{1cm}
u_-^{1/z}(n) =  z^{- n} (  \one + o(1)).
\end{align}
Moreover, for $z = 1$, we denote by $ v_{\pm}^{z}  $ the $\mathcal{M}_{L\times L}$-valued solutions of 
\begin{align}
H v^{z}_{\pm} = E v^{z}_{\pm},   \hspace{3cm}  E=2,
\end{align}
satisfying  as $n \to \pm \infty$
\begin{align}\label{wasv}
v^{z}_{\pm }(n) =  n  (  \one + o(1)).
\end{align}
\end{defini}

\subsubsection{The scattering matrix}

Due to the asymptotic behavior of {the Jost} solutions (see Equation \eqref{wasu}), the columns of the matrix $( u^{z}_{\pm}, u^{1/z}_{\pm} )$ are linearly independent for $ z \in  \mathbb{S}^1 \setminus  \{-1, 1 \} $ and, therefore,  they form a basis of solutions.  The same holds true for   $( u^{ 1}_{\pm}, v^{1}_{\pm} )$. This implies that there are matrices
$M^z_\pm,  N^z_\pm  \in\Mm_{L\times L} $ such that
\begin{align}\label{MmasmenNmasmen}
u_+^z=u_-^zM^z_++u_-^{\overz}N^z_+, \hspace{1cm}
u_-^{\overz}=u_+^zN^z_-+u_+^{\overz}M^z_-.
\end{align}
Moreover, it will be proved that the matrices $M^z_\pm$ have a meromorphic  continuation to $\mathbb{D}$ (see Equation \eqref{eq-MNId0}). 
Assuming  that $ M^z_\pm $ are invertible, we can rewrite these equations as
\begin{align}\label{tresfromH}
u_+^z T^z_+ = u_-^z - u_-^{\overz}R^z_{+} , \hspace{1cm}
u_-^{\overz}T^z_- =    u_+^{\overz}  - u_+^zR^z_-,
\end{align}
where
\begin{align}\label{transref}
T^z_\pm=(M^z_\pm)^{-1},  \hspace{2cm}R^z_\pm=- N^z_\pm(M^z_\pm)^{-1}{,}
\end{align}
are the transmission and reflection coefficients, respectively.
The interpretation of \eqref{tresfromH}  in the case that $ z = e^{ik}\notin
\{  -1, 1 \} $, {corresponding to a wave traveling to the right,}   is the following (we only describe the first equation in \eqref{tresfromH}):  the  incoming wave $u_-^{z}$ produces the
outgoing wave $ u_+^z T^z_+  $ traveling to the right ({\it i.e.}, a transmitted wave) and the  outgoing wave $  u_-^{\overz}R^z_{+}   $  traveling to the left ({\it i.e.}, a reflected wave).  The relation  between transmitted and  reflected waves is described by the scattering matrix.

\begin{defini}
\label{def-Scat}
For any  $ z \in  \mathbb{S}^1 \setminus  \{-1, 1 \} $,  the scattering matrix $\Ss^{z} \in \mathcal{M}_{2L \times 2L}$  is defined by
$$
\Ss^z
\;=\;
\begin{pmatrix}
T_+^z & R^z_- \\ R^z_+ & T^z_-
\end{pmatrix}  
\;=\;
\begin{pmatrix}
(M^z_+)^{-1} & - N^z_-(M^z_-)^{-1} \\
-N^z_+(M^z_+)^{-1} & (M^z_-)^{-1}
\end{pmatrix} .
$$
\end{defini}

Notice that matrices $M^z_{\pm}$ are {indeed} invertible, see Proposition \ref{lem-Minvertible}.
In the case that $z = e^{ik}$ represents a wave traveling to the right, then
$  u^z_- $  and $ u^{\overz}_+ $ are incoming and $ u^{z}_+$  and $  u^{\overz}_-  $ are
outgoing. In this case,
the scattering matrix expresses the incoming Jost solutions $u^z_-$ and $u^{\overz}_+$ in terms of the outgoing ones $u^{z}_+$ and $ u^{\overz}_-$:
\begin{equation}
\label{eq-SMatDef}
\begin{pmatrix}
u^z_- & u^{\overz}_+
\end{pmatrix}
\;=\;
\begin{pmatrix}
u^{z}_+ & u^{\overz}_-
\end{pmatrix}
\Ss^z
\;.
\end{equation}

\subsection{Main results}

The following theorem is proved in Theorem~\ref{T} below.

\begin{theo}
\label{prop-ScatProp} 
There is a neighborhood of $ 1 $ such that, for every $z$ in this neighborhood with $z \in \overline{\mathbb{D}}$,   the matrices $M^z_{\pm}$ are invertible.  Moreover, the limits
\begin{align}
{T_{\pm}^1:=}\lim_{z \to 1}  T_{\pm}^z 
\end{align}
exist, where the limits are taken in $ \overline{\mathbb{D}}$, and they have explicit expressions (see \eqref{900}). The kernels and images of $T_{\pm}^1$ can be explicitly calculated (see \eqref{901}) and are tightly connected to half-bound states.
The limits
\begin{align}
{R_{\pm}^1:=}\lim_{z \to 1}  R_{\pm}^z 
\end{align}
of the reflection coefficients  $R_{\pm}^z$ exist, where the limits are taken in $ \mathbb{S}^1$, and they have explicit expressions (see \eqref{903}). The kernels and images of $ \one - R_{\pm}^1 $ can be explicitly calculated (see \eqref{904}). 
\end{theo}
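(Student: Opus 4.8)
The plan is to reduce everything to the band-edge ($z\to 1$, $E\to 2$) asymptotics of the Jost solutions $u_\pm^z$ and their Wronskians, and then feed those asymptotics through the definitions \eqref{MmasmenNmasmen}, \eqref{transref} of the coefficients $M^z_\pm$, $N^z_\pm$, $T^z_\pm$, $R^z_\pm$. First I would establish, via the Volterra/integral-equation construction of the Jost solutions (whose existence under the first-moment condition \eqref{shortrange} is the content of Section~\ref{sec-Jost}), that $z\mapsto u_+^z(n)$ and $z\mapsto u_-^{1/z}(n)$ extend continuously to $z=1$ from within $\overline{\DM}$, and that the limiting solutions $u_+^1$, $u_-^1$ together with the secondary solutions $v_+^1$, $v_-^1$ (behaving like $n$ at $\pm\infty$) span the four-dimensional-per-block solution space at $E=2$. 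The key analytic input is uniform control of the Neumann series for the Volterra equation near $z=1$; here the bound $\sum_n\|nV(n)\|<\infty$ is exactly what is needed to absorb the linearly growing free solution $v_0^\pm$. I expect the convergence $u_+^z\to u_+^1$ to be locally uniform in $n$ in a suitable weighted sense, which is enough to pass to the limit in Wronskians.

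Next I would express $M^z_\pm$ and $N^z_\pm$ as Wronskians (constancy of the Wronskian in $n$ is what makes \eqref{MmasmenNmasmen} meaningful and what gives the meromorphic continuation of $M^z_\pm$ to $\DM$ referenced around \eqref{eq-MNId0}). Writing $W(u_-^{1/z},u_-^z)$, $W(u_+^z,u_-^z)$, etc., one reads off $M^z_\pm$, $N^z_\pm$ as ratios of Wronskians; the ``bad'' denominator is $W(u_0^z,u_0^{1/z})$, which degenerates as $z\to 1$ because $u_0^z=u_0^{1/z}$ there. The standard fix is to renormalize: subtract the appropriate multiple so that the leading (degenerate) term cancels and a finite limit governed by $v_0^\pm$ survives — concretely, $(z-z^{-1})$ pulls out of the relevant Wronskian and cancels against a matching vanishing in the numerator precisely when evaluated on the true Jost solutions. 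This is the mechanism by which $M^z_\pm$ stays invertible in a punctured neighborhood of $1$ (Proposition~\ref{lem-Minvertible}) and the limit $M^1_\pm:=\lim_{z\to1}M^z_\pm$ exists; then $T^1_\pm:=(M^1_\pm)^{-1}$ on the complement of the kernel, and $R^1_\pm:=-N^1_\pm(M^1_\pm)^{-1}$ on $\SM^1$, with the explicit formulas \eqref{900}, \eqref{901}, \eqref{903}, \eqref{904} obtained by writing the limiting Wronskians in terms of the four solutions $u_\pm^1$, $v_\pm^1$ and the finite-dimensional subspace of half-bound states (those solutions bounded — rather than linearly growing — at both $\pm\infty$). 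The kernel/image statements come from the observation that $M^1_\pm$ fails to be invertible exactly along the directions spanned by half-bound states, so $\Ker T^1_\pm=\Ran M^1_\pm{}^\perp$-type identities and the $\one-R^1_\pm$ statements follow from the block structure of $\Ss^z$ and unitarity on $\SM^1\setminus\{-1,1\}$ passed to the limit.

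The main obstacle, and where the real work lies, is the degenerate band-edge behavior of the Wronskian: when $z=1$ the two would-be independent free solutions coincide and the natural basis $(u_+^z,u_+^{1/z})$ becomes singular, so one cannot simply take limits coefficient-by-coefficient. One must carefully identify the correct order of vanishing of $W(u_+^z,u_-^{1/z})$ and related Wronskians as $z\to1$ — distinguishing the generic case (no half-bound states, $T^1_\pm$ invertible, $R^1_\pm=-\one$ as in the scalar discrete case) from the exceptional case where half-bound states are present — and match it against the $(z-z^{-1})\sim 2(z-1)$ factor coming from the free part. Technically this amounts to a perturbative expansion of the Jost solution to first order in $(z-1)$ inside the integral equation, controlled by $\sum_n\|nV(n)\|<\infty$, plus a linear-algebra analysis of the resulting limit matrices; the matrix-valued setting means one tracks ranks and kernels rather than just orders of zeros, but — as emphasized in the introduction — this can be done directly through the Wronskian without invoking a Jordan decomposition. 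A secondary, purely bookkeeping difficulty is keeping the $+$ and $-$ (right/left) cases and the $z\leftrightarrow 1/z$ symmetry straight so that the explicit expressions \eqref{900}--\eqref{904} come out consistently.
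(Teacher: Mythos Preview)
Your outline has the right high-level shape---Wronskians, degeneracy at $z=1$, half-bound states governing ranks---but there is a genuine gap in the mechanism, and one outright error.

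First, the error: you write that ``the limit $M^1_\pm:=\lim_{z\to1}M^z_\pm$ exists; then $T^1_\pm:=(M^1_\pm)^{-1}$ on the complement of the kernel.'' This is false. From \eqref{eq-MNId0}, $M^z_+=\nu^z\,W(u_-^{1/\bar z},u_+^z)$ with $\nu^z=i/(z-z^{-1})$; as $z\to1$ one has $\nu^z\to\infty$ while $W(u_-^{1/\bar z},u_+^z)\to W(u_-^1,u_+^1)$, which is nonzero on $\Nn^\perp$. Thus $M^z_\pm$ \emph{blows up} along $\Nn^\perp$, and it is only $(M^z_\pm)^{-1}=T^z_\pm$ that has a finite limit (zero on those directions). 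The paper handles this by writing the Wronskian in block form relative to $\Nn\oplus\Nn^\perp$ and $\Ll\oplus\Ll^\perp$, with the diagonal blocks of sizes $O(|z-1|)$ and $O(1)$ respectively (Proposition~\ref{Tmain} and \eqref{piri}--\eqref{mainnnn}), and then inverts via the Schur complement, so that the factor $(\nu^z)^{-1}\sim -2i(z-1)$ cancels only against the small block. Your ``subtract the appropriate multiple so the leading term cancels'' does not describe this; the leading term does \emph{not} cancel on $\Nn^\perp$. (Relatedly, in the generic case \eqref{903} gives $R^1_\pm=\one$, not $-\one$.)

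Second, the missing idea: you plan a ``perturbative expansion of the Jost solution to first order in $(z-1)$ inside the integral equation.'' The difficulty is that the Volterra equation for $u_+^z$ controls it only for $n\to+\infty$, while the Wronskian $W(u_-^{1/\bar z},u_+^z)$ requires information about $u_+^z(n)$ as $n\to-\infty$ (or about $u_-^{1/\bar z}(n)$ as $n\to+\infty$), where no direct first-order control is available. The paper's device is to introduce an auxiliary solution $\Phi^z$ with \emph{$z$-independent} initial data $\Phi^z(0)=u_+^1(0)$, $\Phi^z(1)=u_+^1(1)$ (Definition~\ref{def-Phi}), prove the regularity bounds $\|\tfrac{\Phi^z(n)-\Phi^1(n)}{z-1}\|\leq Cn|z|^{-n}$ for $n>0$ (Lemma~\ref{princest1}) and, crucially, the analogous bound for $n<0$ \emph{only on vectors $\xi\in\Nn$} (Lemma~\ref{princest2}), and then factor the Wronskian as in Lemma~\ref{lema_w}:
\[
W(u_-^{1/\bar z},u_+^z)=u_-^{1/\bar z}(1)^*(u_+^1(1)^*)^{-1}\,W(\Phi^{\bar z},u_+^z)\;+\;W(u_-^{1/\bar z},\Phi^z)\,u_+^1(1)^{-1}u_+^z(1).
\]
Each factor on the right can now be evaluated by sending $n\to+\infty$ or $n\to-\infty$ where \emph{both} entries are controlled (Propositions~\ref{PW1} and~\ref{la 4.3}), yielding the explicit first-order expansion \eqref{main} in which $\Gamma$ and $\Omega$ appear. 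Without this splitting you have no route to the first-order term on $\Nn$, and hence no way to produce the invertible block $\boldsymbol{A}=[P_\Ll(\Omega^*+\Gamma)P_\Nn]$ that gives \eqref{900}.
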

\begin{rem}
Theorem \ref{prop-ScatProp} is also valid if one takes the limits $z \to -1$, and the proofs are the same. In order to simplify notations, we focus on the case
$z \to 1$.    
\end{rem}

\section{Solutions  }

\subsection{{Free solutions  with prescribed data on \texorpdfstring{$0$}{0} and \texorpdfstring{$1$}{1}}} 

\begin{defini}\label{FreeSolo1}
For every  $z\in \overline{ \mathbb{D}} \setminus\{0\} $, we denote by $s^z$ and $\tau^z$ the {scalar solutions $s^z,\tau^z\in\CM^\ZM$} of \eqref{Fsolutions} such that  $s^z(0) = 0$, $s^z(1) = 1$ and $\tau^z (0) = 1 = \tau^{z}(1)$. 
\end{defini}

Explicitly, one can verify that 
\begin{align}\label{eq-sz}
s^z(n)= \left\{ \begin{array}{cc}
	\frac{1}{z - z^{-1}}(z^{n}-z^{-n})  =  \frac{z}{z+1} \sum_{j= -n}^{n-1}  z^j, &   z^2\neq 1 \\
	(\pm1)^{n+1}n,  & z=\pm1
	\end{array} \right.
\end{align}	
and (for $z \ne -1$)
\begin{align}\label{tau}
\tau^z(n) = \frac{z^{n}+z^{-n +1 }}{z+1}. 
\end{align}	
Let us introduce the notation $  D(w; r):=
\{ z \in \mathbb{C} \: : \:  |z- w| < r  \} $.

\begin{lemma}\label{Remarkk}
	For all $z \in  \overline{\mathbb{D}} \cap D(1; 1/2) $, the following holds true:
	\begin{enumerate}
\item[(i)] \begin{align}\label{PPPH}
 |s^z(n) | \leq C |n||z|^{-|n|} , \hspace{.2cm} \frac{|s^z(n) - s^1(n)|}{|z-1|} \leq C n^2  |z|^{-|n|},    \hspace{.2cm}  s^z(n) - s^1(n) = O( |z-1|^2  ) , 
\end{align} 
as $z$ tends to 1, where the $O $ symbol does depend on $n$, but $C$ does not. 
\item[(ii)]
\begin{align} \label{PPPH0}
\Big |\frac{   s^z(n-j) -  s^1(n-j) - (s^z(n) - s^1(n)  )}{z-1}\Big | \leq C |nj ||z|^{-|n|}.      
\end{align}
\item[(iii)]
\begin{align}\label{esttau}
|\tau^z(n) - \tau^1(n)| = O(|z-1|^2), 
\end{align}
where the $O$ symbol depends on $n$.  
\item[(iv)]
\begin{align}\label{esttau1}
|  \tau^{z}(n) |\leq  C  |z|^{-|n|},     \hspace{1cm} \Big |  \frac{  \tau^{z}(n) - \tau^1(n) }{ z-1 }  \Big | \leq  C |n | |z|^{-|n|} .
\end{align}

\end{enumerate}

\end{lemma}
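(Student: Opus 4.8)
The plan is to work directly from the explicit formulas \eqref{eq-sz} and \eqref{tau}, treating $s^z$ and $\tau^z$ as analytic functions of $z$ on $\overline{\mathbb{D}}\cap D(1;1/2)$ for each fixed $n$, and extracting all the stated bounds from elementary estimates on the geometric-sum representations. The key observation that makes everything uniform in the right way is the second representation $s^z(n)=\frac{z}{z+1}\sum_{j=-n}^{n-1}z^j$: on $D(1;1/2)$ one has $\tfrac23<|z|<\tfrac32$, so $\bigl|\tfrac{z}{z+1}\bigr|$ is bounded above and below by absolute constants, and each of the $2|n|$ terms $z^j$ satisfies $|z^j|\le |z|^{|n|}\le |z|^{-|n|}$ when $|z|\le 1$ (and is $\le C|z|^{-|n|}$ in general since $|z|<3/2$). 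This gives the first bound in (i), $|s^z(n)|\le C|n||z|^{-|n|}$, directly; the same device handles $|\tau^z(n)|\le C|z|^{-|n|}$ in (iv), since $\tau^z(n)=\frac{z^n+z^{-n+1}}{z+1}$ and each of the two terms in the numerator is $\le |z|^{-|n|}$ up to the constant from $\frac{1}{z+1}$.

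For the difference estimates I would use the integral (or finite-difference) form of Taylor's theorem in $z$. Write $f_n(z)=s^z(n)$; it is analytic near $z=1$, $f_n(1)=s^1(n)$, and
\[
f_n(z)-f_n(1)=(z-1)\int_0^1 f_n'\bigl(1+t(z-1)\bigr)\,dt,
\]
with a second such formula giving the $O(|z-1|^2)$ remainder once one checks $f_n'(1)$ is finite (which it is, being a polynomial-type expression in $n$). The point is to bound $f_n'(w)$ for $w$ on the segment from $1$ to $z$. Differentiating the geometric-sum form term by term, $f_n'(w)=\frac{d}{dw}\bigl(\frac{w}{w+1}\bigr)\sum_{j=-n}^{n-1}w^j+\frac{w}{w+1}\sum_{j=-n}^{n-1}j\,w^{j-1}$; each factor $j\,w^{j-1}$ contributes at most $|n||w|^{-|n|-1}$, and there are $2|n|$ terms, so $|f_n'(w)|\le C n^2 |w|^{-|n|}$ uniformly for $w$ on the segment (here I use that $|w|$ stays in $(\tfrac12,\tfrac32)$ on the whole segment, and that $|w|^{-|n|}$ is comparable to $|z|^{-|n|}$ up to a constant to the power $|n|$ — I should be a little careful here and in fact replace $|z|^{-|n|}$ by $\rho^{-|n|}$ where $\rho=\min_{w}|w|$ on the segment, noting $\rho\ge$ a fixed multiple of $|z|$). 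This yields the second inequality of (i), and the $O(|z-1|^2)$ statement follows from the one-higher-order Taylor remainder. The estimates \eqref{esttau} and the difference bound in \eqref{esttau1} are obtained the same way from \eqref{tau}, differentiating $\frac{z^n+z^{-n+1}}{z+1}$ in $z$.

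Part (ii) is the one requiring the most care, since it is a \emph{second} difference — in $z$ and in the shift $n\mapsto n-j$ — and the claim is that it is still $O(|nj|\,|z|^{-|n|})$ rather than the naive $O(n^2|z|^{-|n|})$ one would get from (i) applied twice. I would set $g(z):=s^z(n-j)-s^z(n)$, so that the quantity in \eqref{PPPH0} is $\frac{g(z)-g(1)}{z-1}=\int_0^1 g'(1+t(z-1))\,dt$, and the task reduces to showing $|g'(w)|\le C|nj|\,|w|^{-|n|}$ on the segment. Using the geometric-sum form, $s^w(n-j)-s^w(n)=\frac{w}{w+1}\bigl(\sum_{l=-(n-j)}^{n-j-1}w^l-\sum_{l=-n}^{n-1}w^l\bigr)$, and the difference of the two sums telescopes to a sum of at most $2|j|$ terms $w^l$ with $|l|\le |n|$; differentiating in $w$ produces at most $2|j|$ terms of the form $l\,w^{l-1}$ with $|l|\le|n|$, each bounded by $|n||w|^{-|n|}$, plus the contribution from differentiating $\frac{w}{w+1}$, which again multiplies a sum of $\le 2|j|$ bounded-modulus terms. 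This gives the bound $C|nj||w|^{-|n|}$ and hence \eqref{PPPH0}. The main obstacle is precisely this telescoping/cancellation bookkeeping in (ii): one must resist bounding the two geometric sums separately and instead exploit that their difference has only $O(|j|)$ surviving terms, and one must also keep track of which power of $|z|$ the exponential factor carries when the base point $w$ moves off the unit circle — handled by noting $|w|$ stays in a fixed annulus and $|w|^{-|n|}\le C^{|n|}|z|^{-|n|}$ is \emph{not} good enough, so one instead works with the minimum of $|w|$ along the segment, which is bounded below by $c|z|$ for an absolute $c>0$ on $D(1;1/2)$, giving the clean $|z|^{-|n|}$ after adjusting $C$.
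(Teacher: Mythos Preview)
Your overall strategy --- the geometric-sum representation $s^z(n)=\tfrac{z}{z+1}\sum_{j=-n}^{n-1}z^j$ together with the integral form of the mean-value/Taylor remainder --- is exactly what the paper does, including the telescoping argument for (ii). Two points, however, need correction.

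\textbf{The $O(|z-1|^2)$ claims in (i) and (iii).} You say the second-order remainder follows ``once one checks $f_n'(1)$ is finite''. That is not enough: Taylor gives $f_n(z)-f_n(1)=f_n'(1)(z-1)+O(|z-1|^2)$, so $s^z(n)-s^1(n)=O(|z-1|^2)$ holds \emph{precisely because} $f_n'(1)=0$, and this vanishing is the actual content to be checked. From the sum form one computes
\[
f_n'(1)=\tfrac{1}{4}\cdot 2n+\tfrac{1}{2}\sum_{j=-n}^{n-1}j=\tfrac{n}{2}-\tfrac{n}{2}=0.
\]
Likewise \eqref{esttau} rests on $\tfrac{d}{dz}\tau^z(n)\big|_{z=1}=0$, which follows from \eqref{tau} by a one-line calculation. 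The paper's proof singles out exactly these two vanishing derivatives as the mechanism.

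\textbf{The segment-versus-$|z|^{-|n|}$ issue.} You are right to worry that $|f_n'(w)|\le Cn^2|w|^{-|n|}$ along the straight segment does not immediately yield $Cn^2|z|^{-|n|}$, but your proposed remedy --- $\min_w|w|\ge c|z|$ for an absolute $c>0$ --- does not help: it gives $|w|^{-|n|}\le c^{-|n|}|z|^{-|n|}$, and $c^{-|n|}$ is not uniformly bounded in $n$. (Indeed for $z=e^{i\theta}\in\mathbb{S}^1$ the chord midpoint has $|w|=\cos(\theta/2)<1=|z|$.) Two honest fixes: integrate instead along the arc $|w|=1$ from $1$ to $z/|z|$ and then radially to $z$ (length $\le C|z-1|$, and $|w|\ge|z|$ throughout); or bypass the path integral entirely and use the elementary bound $|z^j-1|\le |z-1|\,|j|\,|z|^{-|j|}$ for $|z|\le 1$ term-by-term in the sum, which gives the middle estimate of (i) and the second estimate of (iv) with a genuine $|z|^{-|n|}$. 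The paper's own proof is terse here, simply invoking the mean value theorem without discussing the path.
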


\begin{proof}
Since  
\begin{align}\label{PPPH1}
 s^z(n) =   \frac{z}{z+1} \sum_{j= -n}^{n-1}  z^j, 
\end{align} the left {bound} in \eqref{PPPH} is obvious. {Equation}  \eqref{PPPH1} implies that   
$  s^z(n) $  is analytic and its derivative is uniformly bounded
by a constant times $n^2  |z|^{-|n|}$. Then, an application of  the mean value theorem  yields  the middle {bound} in \eqref{PPPH}. A direct  calculation shows that the  derivative of $  s^z(n)  $ vanishes at $ z=1$, and Taylor's theorem implies the right {bound} in \eqref{PPPH}.

\vspace{.1cm}

The derivative of $    s^z(n-j) -  s^z(n) = \frac{z}{z+1} \Big (  \sum_{j= -(n- j)}^{n-j- 1} z^j  
 -     \sum_{j= -n}^{n - 1} z^j \Big ) $  is uniformly bounded by a constant times $|z^{-n}| n j$ and, therefore, the mean value theorem shows \eqref{PPPH0}.

\vspace{.1cm}

An elementary calculation gives that $\frac{d}{dz} \tau^z (n)|_{z=1} = 0$, and this together with Taylor's theorem implies \eqref{esttau}. 

\vspace{.1cm}
  
The left identity of \eqref{esttau1}  is obvious from the definition
of $\tau^z$. The right identity of \eqref{esttau1} follows again {from the mean} value theorem, since  
the derivative of $\tau^z$ is uniformly bounded by  $  C |n | |z|^{-|n|} $. 
\end{proof}

\subsection{Jost Solutions}
\label{sec-Jost}

\begin{lemma}[Jost Solutions]\label{soljost}
The Jost solutions   $u^z_{+} $,  $u^{1/z}_{-} $ as defined in Definition \ref{Jost} exist, for every $z \in $  $ \overline{\mathbb{D}}\setminus\{0\}$.  
	Moreover, for every  $n$, the functions  $u_{+}^{z}(n),u_{-}^{1/z}(n)$ are  holomorphic on   $\mathbb{D}\setminus\{0\}$ and continuous on $\overline{\mathbb{D}}\setminus\{0\}$. The following Volterra equations are satisfied: 
	\begin{equation}\label{6}
	\begin{aligned}
	u_{+}^z(n)&=z^{n}\mathbf{1} - \sum_{j=n+1}^{\infty} s^z(j-n)V(j)u_{+}^z(j) , \qquad n\in \mathbb{Z},\\
	u_{-}^{1/z}(n)&=z^{-n}\mathbf{1} + \sum_{j=-\infty}^{n-1} s^{1/z}(j-n)V(j)u_{-}^{1/z}(j) , \qquad n\in \mathbb{Z},		 
	\end{aligned}	
	\end{equation}
where $s^z$ is defined in Definition \ref{FreeSolo1}. 
\end{lemma}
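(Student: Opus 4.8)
The plan is to establish the Volterra equations \eqref{6} first and then read off existence, analyticity and the asymptotics \eqref{wasu} from them by a contraction/iteration argument. I will focus on $u_+^z$; the equation for $u_-^{1/z}$ is obtained by reflecting $n\mapsto -n$ and is treated identically. The starting observation is that $s^z(j-n)$ is, for each fixed $z$, a solution of the free equation \eqref{Fsolutions} in either variable, with $s^z(0)=0$ and $s^z(1)=1$; hence $n\mapsto -\sum_{j=n+1}^\infty s^z(j-n)V(j)u_+^z(j)$ is a discrete analogue of a variation-of-constants (Duhamel) formula. Concretely, I would first \emph{verify formally} that any bounded-in-the-right-weighted-norm solution of \eqref{6} indeed satisfies $Hu_+^z=Eu_+^z$: applying $H_0-E$ to the right-hand side, the term $z^n\one$ is killed because $u_0^z$ solves \eqref{planewavesmatrix}, and for the sum one uses the two-term recursion satisfied by $j\mapsto s^z(j-n)$ together with the ``jump'' produced by the summation limits at $j=n$ and $j=n+1$ (the discrete Wronskian-type identity $s^z(1)-s^z(0)=1$ is exactly what makes the potential term $V(n)u_+^z(n)$ reappear). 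This is the routine but essential bookkeeping step.

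Next I would set up the iteration. Define $u^{(0)}(n)=z^n\one$ and $u^{(m+1)}(n)=z^n\one-\sum_{j=n+1}^\infty s^z(j-n)V(j)u^{(m)}(j)$, and work in the weighted sup-norm $\|u\| = \sup_{n}|z|^{-n}\,\|u(n)\|$ on $n\ge N$ for suitable $N$ (or, more cleanly, estimate the $m$-fold iterated kernel directly). Using the bound $|s^z(j-n)|\le C|j-n||z|^{-(j-n)}$ valid for $z\in\overline{\DM}\cap D(1;1/2)$ from Lemma~\ref{Remarkk}(i), and extending it to all of $\overline{\DM}\setminus\{0\}$ by the explicit formula \eqref{eq-sz} (for $|z|$ bounded away from $1$ one has the cruder bound $|s^z(n)|\le C_z|z|^{-|n|}$, with $C_z$ locally bounded), one gets
\begin{align}
\big\| u^{(m+1)}(n)-u^{(m)}(n)\big\|
\;\le\; |z|^{n}\,\frac{1}{m!}\Big(\sum_{j>n}|j-n|\,\|V(j)\|\Big)^{m}\cdot C^m,
\end{align}
because the weight $|z|^{j-n}\le 1$ for $j\ge n$ when $z\in\overline{\DM}$ absorbs all the $z$-dependence and the remaining sums telescope into the first-moment quantity $\sum_j |j|\,\|V(j)\|<\infty$ from \eqref{shortrange}. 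The factor $1/m!$ comes from the nested ordered sums $n<j_1<j_2<\dots$ This gives geometric (indeed super-geometric) convergence of $\sum_m (u^{(m+1)}-u^{(m)})$, uniformly in $z$ over compact subsets of $\overline{\DM}\setminus\{0\}$ and, crucially, \emph{uniformly up to the boundary} $\SM^1\setminus\{0\}$, which is the whole point of working with $s^z$ rather than the transfer matrix. Summing yields a solution $u_+^z(n)$ of \eqref{6}, hence of $Hu_+^z=Eu_+^z$ by the previous step.

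The remaining properties follow cheaply from the uniform convergence. Each $u^{(m)}(n)$ is a finite or absolutely-and-uniformly convergent sum of functions holomorphic in $z$ on $\DM\setminus\{0\}$ and continuous on $\overline{\DM}\setminus\{0\}$ (here I invoke Lemma~\ref{soljost}'s own claim about $s^z$, or rather prove it in passing from \eqref{eq-sz}: $s^z(n)$ is a polynomial in $z,z^{-1}$ for each $n$), so the uniform limit $u_+^z(n)$ inherits these regularity properties by the Weierstrass theorem. For the asymptotics \eqref{wasu}, I estimate the tail: $\|u_+^z(n)-z^n\one\| \le |z|^n\sum_{j>n}|j-n|\,\|V(j)\|\cdot C\cdot e^{C\sum_j|j|\|V(j)\|}$, and the sum $\sum_{j>n}|j-n|\,\|V(j)\|\to 0$ as $n\to+\infty$ again by \eqref{shortrange}, giving $u_+^z(n)=z^n(\one+o(1))$. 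The main obstacle — and the step deserving the most care — is making the kernel estimate genuinely uniform in $z$ right up to $|z|=1$: one must check that the bound $|s^z(j-n)|\le C|j-n||z|^{-(j-n)}$ from Lemma~\ref{Remarkk} near $z=1$, combined with the explicit formula away from $z=1$, covers the full closed punctured disc with a single constant on each compact piece, and that the weight $|z|^{j-n}\le 1$ (for $j\ge n$, $|z|\le 1$) is used in the right direction so that no inverse power of $|z|$ — which would blow up at the boundary — ever survives. Everything else is the standard Volterra/Picard machinery.
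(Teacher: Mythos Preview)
Your proof is correct and matches the paper's approach: both rescale by $z^{-n}$, bound the resulting kernel $z^{j-n}s^z(j-n)V(j)$ by $Cj\|V(j)\|$ uniformly on $\overline{\DM}$, solve the Volterra equation for large $n$, extend to all $n$ by the Schr\"odinger recursion, and read off regularity and \eqref{wasu} from uniform convergence; the paper simply packages the fixed-point step into its abstract Theorem~\ref{volterra} (a contraction once $\sum_{j\ge N}j\|V(j)\|<\tfrac12$) and defers the verification of \eqref{solutions} to Lemma~\ref{PPH}, whereas you carry out the Picard iteration by hand with the sharper $1/m!$ ordered-sum estimate. One small point to patch: your uniform-in-$z$ kernel bound explicitly covers a neighbourhood of $z=1$ (via Lemma~\ref{Remarkk}) and ``$|z|$ bounded away from $1$'', but this misses the boundary near $z=-1$; since $s^{-z}(m)=(-1)^{m+1}s^z(m)$ the same estimate applies there symmetrically, giving $|z^{m}s^z(m)|\le Cm$ on all of $\overline{\DM}\setminus\{0\}$.
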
 

\begin{proof} 
The result follows from the Theorem \ref{volterra}: Taking $g=\mathbf{1}, K^z(n,j)=-z^{j-n}s^z(j-n)V (j)$ and $M(j)=j\|V(j)\|$, we obtain  a solution $\tilde{u}_+^z$ to the equation 
(for $n \in \mathbb{N}$)
 \begin{align}\label{PH1}	\tilde{u}_{+}^z(n)=\mathbf{1} - \sum_{j=n+1}^{\infty} s^z(j-n)V(j)z^{j-n}\tilde{u}_{+}^z(j).
\end{align} 
  A direct computation using \eqref{PH1} shows that   $u_+^z(n)=z^n\tilde{u}_+^z(n)$ solves the {Schr\"odinger} equation 
  $   u_+^z(n-1)  +V(n)u^z_+(n) + u^z_+(n+1) =(z+1/z)u^z_+(n) $, 
  for $n \geq 2$  (this is  proved in Lemma \ref{PPH}).  
   For $n\in \mathbb{Z}^-\cup\{0\}$, we  recursively fit     Equation \eqref{solutions} defining : 
   $u_+^z(n-1)=(z+1/z)u^z_+(n)-V(n)u^z_+(n)-u^z_+(n+1)$.
   The construction of the other solution is similar.
\end{proof}

\begin{lemma}
	The solutions $v_\pm^1$  introduced in Definition \ref{Jost} exist.
\end{lemma}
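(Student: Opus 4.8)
The plan is to construct $v_\pm^1$ via a Volterra equation analogous to \eqref{6}, but with the free solution $u_0^{\pm 1}$ replaced by the linearly growing free solution $v_0^+(n)=n$ (for the $+$ case; the $-$ case is symmetric under $n\mapsto -n$). Concretely, I would look for $v_+^1$ solving
\begin{align}\label{plan-volt}
v_+^1(n) = n\,\mathbf{1} \;-\; \sum_{j=n+1}^{\infty} s^1(j-n)\,V(j)\,v_+^1(j), \qquad n\in\ZM,
\end{align}
where $s^1(m)=(-1)^{m+1}m$ is the scalar free solution from \eqref{eq-sz} at $z=1$ satisfying $s^1(0)=0$, $s^1(1)=1$. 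A direct computation, exactly as in Lemma \ref{PPH} used for the proof of Lemma \ref{soljost}, shows that any bounded-growth solution of \eqref{plan-volt} satisfies the Schr\"odinger equation $v(n-1)+V(n)v(n)+v(n+1)=2v(n)$ for large $n$, and then one extends it to all of $\ZM$ by the recursion $v_+^1(n-1)=2v_+^1(n)-V(n)v_+^1(n)-v_+^1(n+1)$, just as in Lemma \ref{soljost}.

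The key analytic point is the convergence of the series in \eqref{plan-volt} and the verification of the asymptotics \eqref{wasv}, i.e. $v_+^1(n)=n(\mathbf{1}+o(1))$ as $n\to+\infty$. Since $|s^1(m)|=|m|$, the kernel in \eqref{plan-volt} is controlled by $|s^1(j-n)|\,\|V(j)\|\le (j-n)\|V(j)\|\le j\|V(j)\|$ for $j\ge n+1$, which is summable by the first-moment hypothesis \eqref{shortrange}. To set this up inside the iteration scheme, I would substitute $v_+^1(n)=n\,w(n)$ (or more precisely work with the difference $v_+^1(n)-n\mathbf 1$) and show the resulting inhomogeneous term and kernel fit the hypotheses of Theorem \ref{volterra}; the bound $|s^1(j-n)|\le C\,|n|^{-1} j\, \cdot j$ is too crude, so one rather uses $|s^1(j-n)/n|\le C$ uniformly in $n\ge 1$, $j\ge n+1$ together with $|n\,w(j)|\le C|j|\sup|w|$ to close the fixed-point argument in a weighted sup-norm space such as $\{w : \sup_n \|w(n)\| < \infty\}$. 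A standard Neumann-series / successive-approximation argument then yields existence, and the tail estimate $\sum_{j=n+1}^\infty (j/n)\|V(j)\|\to 0$ as $n\to\infty$ gives $w(n)\to\mathbf 1$, i.e. \eqref{wasv}.

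I expect the main obstacle to be purely bookkeeping: unlike the exponentially weighted estimates available for $z\in\DM$ in Lemma \ref{soljost} (where $|s^z(j-n)|\le C|j-n||z|^{-|j-n|}$ provides decay), here at $z=1$ there is no exponential gain, only the linear factor, so the Volterra kernel must be handled by peeling off the leading $n\mathbf 1$ term and iterating on the remainder with the first-moment weight doing all the work. One must be slightly careful that the natural norm is the sup of $\|v_+^1(n)\|/\max(1,|n|)$ rather than $\|v_+^1(n)\|$, and that the recursion extending to $n\le 0$ does not destroy this bound (it cannot, since it only involves finitely many steps from any fixed point and $V$ is bounded). No meromorphy or continuity in a parameter is claimed here — $z=1$ is a single fixed energy — so the statement reduces entirely to this one existence-and-asymptotics argument, and the $-$ case follows by the reflection $n\mapsto -n$, $V(n)\mapsto V(-n)$, using $\tau$- and $s$-type solutions with data prescribed appropriately, exactly mirroring the second line of \eqref{6}.
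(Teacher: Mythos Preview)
Your Volterra equation \eqref{plan-volt} is the natural first guess, but the bound you invoke to close the fixed-point argument is false: with $s^1(m)=m$ (note: at $z=1$ the formula from \eqref{eq-sz} gives $s^1(m)=m$, not $(-1)^{m+1}m$), one has $|s^1(j-n)/n|=(j-n)/n$, which for $n=1$ and $j$ large is $\approx j$ and hence \emph{not} uniformly bounded in $n\ge 1,\ j\ge n+1$. Consequently, after writing $v_+^1(n)=n\,w(n)$ your kernel is $K(n,j)=-\tfrac{(j-n)j}{n}V(j)$, and $\sup_{1\le n<j}\tfrac{(j-n)j}{n}=(j-1)j\sim j^2$, so the best uniform majorant is $M(j)=j^2\|V(j)\|$, which is \emph{not} assumed summable by \eqref{shortrange}. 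The alternative you mention, iterating on $v_+^1(n)-n\mathbf 1$, fails for the same reason: already the inhomogeneous term $\sum_{j>n}(j-n)jV(j)$ need not converge under a first-moment condition (take $\|V(j)\|\sim j^{-2-\epsilon}$ with $\epsilon<1$).

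The paper circumvents this by choosing a different Green's function for $H_0-2$: instead of $-(j-n)\chi_{\{j>n\}}$ it uses $\min(n,j)\chi_{\{j\ge N\}}$ (the two differ by the free solution $j\mapsto j$, constant in $n$), which leads to the equation
\[
\tilde v_+^1(n)=\mathbf 1+\frac{1}{n}\sum_{j=N}^{n}j^2V(j)\tilde v_+^1(j)+\sum_{j=n+1}^{\infty}jV(j)\tilde v_+^1(j),
\]
for $v_+^1(n)=n\,\tilde v_+^1(n)$. The point is that now the kernel satisfies $\|K(n,j)\|\le j\|V(j)\|$ uniformly in $n$ (for $j\le n$ one has $j^2/n\le j$; for $j>n$ it is $j$), so the contraction argument of Theorem~\ref{volterra} applies once $N$ is chosen with $\sum_{j\ge N}j\|V(j)\|<\tfrac12$. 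Note this equation is no longer of pure Volterra type (the sum includes $j\le n$), which is why the smallness condition on the tail is needed explicitly rather than being extractable by triangularity.
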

\begin{proof}
 Theorem \ref{volterra} implies that there is a solution $\tilde{v}^1_+  $  to the  Volterra equation (for $n \geq  N$)
\begin{align}\label{PH2}	
	 \tilde{v}^1_+(n)=\mathbf{1}+\frac{1}{n}\sum_{j=N}^{n}j^2V(j)\tilde{v}^1_+(n) +\sum_{j=n+1}^{\infty}jV(j)\tilde{v}^1_+(n),
\end{align}	 
	  where $N\in \mathbb{N}$ is such that $\sum_{j=N}^{\infty}j\|V(j)\|<1/2$. Here we set $g=\mathbf{1},$ $$K(n,j)=  \left\{ \begin{array}{cc}
	 	jV(j), &   j\geq n+1 \\
	 	\frac{j^2}{n}V(j), & N\leq j\leq n
	 	\end{array} \right.$$ and $M(j)=j\|V(j)\|$.  
A direct calculation  using \eqref{PH2}  shows that  $   v^1_+(n)= n \tilde{v}^1_+(n)$ solves the Schr\"odinger   equation  $   v_+^1(n-1)  +V(n)v^1_+(n) + v^1_+(n+1) =2v^1_+(n) $, for $n \geq N+1$ (this is carried out using similar methods as in the proof of Lemma \ref{PPH}). 
For $n \leq N $, we  recursively fit     Equation \eqref{solutions} defining  
  $v_+^1(n-1)=2 v^1_+(n)-V(n)v^1_+(n)-v^1_+(n+1)$.
 The construction of the other solution is similar. 
\end{proof}

\vspace{.1cm}

A key point of this paper is the study of Jost solutions when the spectral parameter $z$ tends to $1$. It turns out to be more accessible to control the behavior of solutions  with prescribed  data
on $0$ and $1$  as $z\to 1$ in a detailed manner. Via Wronskian identities this ultimately allows to deal with Jost solutions and the behavior of the scattering matrix as $z\to 1$.

\subsection{Solutions with prescribed data {at \texorpdfstring{$0$}{0} and \texorpdfstring{$1$}{1}}}

\begin{lemma}\label{lem01}
Let $a$, $b$  $\in \mathcal{M}_{L \times L}$. 
	For every  $z\in \overline{ \mathbb{D}} \setminus\{0\} $, the solution  $\Psi^z$ of \eqref{solutions} such that  $\Psi^z(0) = a$, $\Psi^z(1) = b$ satisfies the following equations: for every $n \in \mathbb{N}$,  
	\begin{equation}\label{phi_1,natp}
	\begin{aligned}
	\Psi^z(n)&=	 s^z(n)(b-a)+ \tau^z(n) a-\sum_{j= 1}^{n-1}   s^z(n-j)V(j)\Psi^z(j), 		
	\end{aligned}
	\end{equation}
	and for every  $n \in \mathbb{Z}^-\cup\{0\}$
	\begin{equation}\label{phi_1_entp}
	\begin{aligned}
	\Psi^z(n)&= s^z(n)
	  (b-a)+ \tau^z(n) a +\sum_{j= n+1}^{0} 
	  s^z(n-j)
	  V(j)\Psi^z(j). 		
	\end{aligned}
	\end{equation}	
Moreover, for every fixed $n$, $\Psi^{z}(n)$ is  holomorphic on   $\mathbb{D}\setminus\{0\}$ and continuous on $\overline{\mathbb{D}}\setminus\{0\}$.

\end{lemma}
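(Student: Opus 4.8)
The plan is to derive the two Volterra-type identities \eqref{phi_1,natp} and \eqref{phi_1_entp} by the same variation-of-constants strategy already used for the Jost solutions in Lemma~\ref{soljost}, but now anchoring the data at the sites $0$ and $1$ rather than at $\pm\infty$. First I would recall that $s^z$ and $\tau^z$ are, by Definition~\ref{FreeSolo1}, the two free solutions of \eqref{Fsolutions} singled out by the boundary data $s^z(0)=0,\ s^z(1)=1$ and $\tau^z(0)=\tau^z(1)=1$; hence for any $a,b\in\mathcal{M}_{L\times L}$ the free solution with prescribed values $a$ at $0$ and $b$ at $1$ is exactly $s^z(n)(b-a)+\tau^z(n)a$, since this matrix-valued sequence solves $H_0 u=Eu$ and has the correct values at $n=0,1$. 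This handles the existence and uniqueness of $\Psi^z$ as a solution of \eqref{solutions} with the stated data: $\Psi^z$ is obtained by the usual two-step recursion $\Psi^z(n+1)=(z+1/z)\Psi^z(n)-V(n)\Psi^z(n)-\Psi^z(n-1)$ run forward for $n\ge 1$ and backward for $n\le 0$.

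Next I would establish \eqref{phi_1,natp} for $n\ge 1$ by induction on $n$. Both sides agree for $n=1$ (the sum is empty and $s^z(1)(b-a)+\tau^z(1)a=b$) and for $n=0$ trivially by the same token. For the inductive step one checks that the right-hand side of \eqref{phi_1,natp}, call it $\Phi^z(n)$, satisfies the Schr\"odinger recursion. Concretely, $s^z$ and $\tau^z$ contribute $E$ times themselves under $\Phi^z(n-1)+\Phi^z(n+1)$, while in the discrete second difference of the sum $\sum_{j=1}^{n-1}s^z(n-j)V(j)\Psi^z(j)$ the "diagonal" terms $j=n-1,n$ appearing when the index range is shifted produce, using $s^z(0)=0$ and $s^z(1)=1$, precisely the inhomogeneity $-V(n)\Psi^z(n)$. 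This is the standard telescoping computation behind any Duhamel/Volterra formula for a second-order difference equation — the same bookkeeping carried out in Lemma~\ref{PPH} for the Jost case — and it shows $\Phi^z$ solves \eqref{solutions} with $\Phi^z(0)=a$, $\Phi^z(1)=b$, so $\Phi^z=\Psi^z$ by uniqueness. The identity \eqref{phi_1_entp} for $n\le 0$ is proved symmetrically, the only change being the direction of summation and a corresponding sign, which is why the kernel $s^z(n-j)V(j)$ enters with a $+$ there (the summation now runs over $j$ between $n+1$ and $0$).

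Finally, for the holomorphy statement: with $a,b$ fixed, each $\Psi^z(n)$ is built from $\Psi^z(0)=a$ and $\Psi^z(1)=b$ by finitely many applications of the recursion, whose coefficient is the entire function $z\mapsto z+1/z$ on $\CM\setminus\{0\}$ together with the $z$-independent matrices $V(n)$; hence $z\mapsto\Psi^z(n)$ is a finite algebraic combination of $z$, $1/z$ and constants, manifestly holomorphic on $\mathbb{D}\setminus\{0\}$ and continuous up to $\SM^1$. (Alternatively one reads this off directly from \eqref{phi_1,natp}--\eqref{phi_1_entp} together with the regularity of $s^z,\tau^z$ recorded after Definition~\ref{FreeSolo1}.) I do not anticipate a genuine obstacle here; the only point demanding care is the index bookkeeping in the telescoping step that produces the inhomogeneous term with the correct sign in each of the two ranges, and making sure the empty-sum conventions at $n=0,1$ are consistent with the prescribed data.
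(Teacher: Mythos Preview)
Your proposal is correct and follows essentially the same route as the paper: the paper simply invokes the variation-of-parameters Lemma~\ref{var_par} with $A=-(z+1/z)\mathbf{1}$, $B(n)=-V(n)$, $S_1=s^z$, $S_2=\tau^z$, and the proof of that lemma is precisely the telescoping/inductive verification you outline (cf.\ also Lemma~\ref{PPH}). Your holomorphy argument via the finite recursion is a clean alternative to the paper's appeal to the analyticity of $s^z,\tau^z$; just note that $z\mapsto z+1/z$ is holomorphic on $\CM\setminus\{0\}$ rather than ``entire.''
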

\begin{proof}
The result follows from Lemma \ref{var_par}, taking $A=-(z+1/z)\mathbf{1}, B(n)=-V(n), S_1=s^z$ and $S_2=\tau^z$. The analyticity and continuity follows from the analyticity and continuity of $s^z(n)$ and $\tau^z(n)$.
\end{proof}

\begin{lemma}\label{boundm}
Let $\Psi^z$ be as in Lemma \ref{lem01}.
For all $z \in  \overline{\mathbb{D}} \cap D(1; 1/2) $, the estimates
\begin{align}
\|\Psi^z(n)  \| \,\leq\, & C |n| |z|^{-|n|}, \label{uno} \\ 
\label{dos}
\|\Psi^z(n) - \Psi^{1}(n)  \| \,= \, & O( |z-1|^2 ),    \hspace{1cm}  z\to 1, 
\end{align}
hold, where the $O$ symbol depends on $ a$, $b$ and $n$, but not on $z$ and  $C$ depends on $ a$ and $b$, but not on $z$ and $n$.  
\end{lemma}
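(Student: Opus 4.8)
The plan is to prove both estimates directly from the Volterra-type representations \eqref{phi_1,natp} and \eqref{phi_1_entp}, combining the bounds on $s^z$ and $\tau^z$ from Lemma \ref{Remarkk} with a standard iteration argument. I will treat the case $n\in\NN$; the case $n\in\ZM^-\cup\{0\}$ is identical after swapping the direction of summation.

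\textbf{Step 1: The a priori bound \eqref{uno}.} Set $f^z(n):=|z|^{|n|}\|\Psi^z(n)\|$. Multiplying \eqref{phi_1,natp} by $|z|^{|n|}$ and using the left bound of \eqref{PPPH}, namely $|s^z(m)|\le C|m||z|^{-|m|}$, together with $|\tau^z(n)|\le C|z|^{-|n|}$ from \eqref{esttau1}, one gets for $n\ge 1$
\begin{align*}
f^z(n)\;\le\; C\,|n|\,(\|a\|+\|b\|)\;+\;C\sum_{j=1}^{n-1}(n-j)\,\|V(j)\|\,f^z(j)\;\le\; C\,|n|\,(\|a\|+\|b\|)\;+\;C\,n\sum_{j=1}^{n-1}\|V(j)\|\,f^z(j),
\end{align*}
where I used $|z|^{|n|}|z|^{-|n-j|}|z|^{-|j|}=|z|^{|n|-|n-j|-|j|}\le 1$ for $0\le j\le n$. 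Dividing by $|n|$ and writing $g^z(n):=f^z(n)/|n|$, a discrete Gronwall inequality applied with the summable weight $\sum_j\|V(j)\|<\infty$ (which is implied by \eqref{shortrange}) yields $g^z(n)\le C(\|a\|+\|b\|)$ uniformly in $z\in\overline{\DM}\cap D(1;1/2)$ and in $n$. This is \eqref{uno}. The uniformity in $z$ is automatic because all constants above are independent of $z$.

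\textbf{Step 2: The difference estimate \eqref{dos}.} Write $\Delta^z(n):=\Psi^z(n)-\Psi^1(n)$. Subtracting the $n$-th equation \eqref{phi_1,natp} for $\Psi^1$ from that for $\Psi^z$ gives
\begin{align*}
\Delta^z(n)\;=\;(s^z(n)-s^1(n))(b-a)+(\tau^z(n)-\tau^1(n))a-\sum_{j=1}^{n-1}\Big[(s^z(n-j)-s^1(n-j))V(j)\Psi^z(j)+s^1(n-j)V(j)\Delta^z(j)\Big].
\end{align*}
The first two boundary terms are $O(|z-1|^2)$ by the right bound in \eqref{PPPH} and by \eqref{esttau}. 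In the sum, the term $s^z(n-j)-s^1(n-j)$ is $O(|z-1|^2)$ for each fixed $j\le n-1$ (again by \eqref{PPPH}), and $\|\Psi^z(j)\|$ is bounded uniformly by Step 1, so that whole contribution is $O(|z-1|^2)$ with an $n$-dependent constant. For the remaining piece $-\sum_{j=1}^{n-1}s^1(n-j)V(j)\Delta^z(j)$, a finite induction on $n$ closes the estimate: assuming $\|\Delta^z(j)\|=O(|z-1|^2)$ for $1\le j\le n-1$, the sum is a finite linear combination of such terms and hence $O(|z-1|^2)$. The base cases $\Delta^z(0)=a-a=0$ and $\Delta^z(1)=b-b=0$ are trivial. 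This gives \eqref{dos} with the $O$ depending on $a$, $b$ and $n$.

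\textbf{Main obstacle.} The delicate point is Step 1: getting \eqref{uno} with a constant that is genuinely uniform in $n$, not merely finite for each $n$. The linear-in-$n$ growth of the kernel $s^z(n-j)\sim (n-j)|z|^{-(n-j)}$ is what forces the normalization by $|n|$ before applying Gronwall; without extracting that factor the naive iteration would produce a bound growing like $n!$ or $C^n/n!$-type terms that do not obviously telescope. Once the right quantity $g^z(n)=|z|^{|n|}\|\Psi^z(n)\|/|n|$ is identified and one checks the elementary inequality $|z|^{|n|-|n-j|-|j|}\le 1$ for $0\le j\le n$ (and its analogue $|n|\le |n-j|\cdot$const fails, so one keeps the crude $n-j\le n$), the discrete Gronwall lemma applies cleanly. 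The difference estimate \eqref{dos} is comparatively routine since $n$ is fixed there and only finitely many terms appear.
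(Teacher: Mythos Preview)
Your argument is correct and follows essentially the same route as the paper: normalize by $|z|^{|n|}/|n|$ and apply Gronwall for \eqref{uno}, then use finite induction on $n$ for \eqref{dos}. One small slip: after dividing by $|n|$ and writing $f^z(j)=j\,g^z(j)$, the Gronwall weight that actually appears is $j\|V(j)\|$, not $\|V(j)\|$; this is exactly the first-moment condition \eqref{shortrange}, so the conclusion is unaffected.
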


\begin{proof}
Let $ n \in \mathbb{N}$, it follows  from \eqref{phi_1,natp} and Lemma \ref{Remarkk} that  
\begin{align}\label{x0}
\|	\Psi^z(n)\| |z|^n {\frac{1}{n}} &=|z|^n {\frac{1}{n}}  \Big(	| s^z(n)(b-a)+ \tau^z(n) a-\sum_{j= 1}^{n-1}   s^z(n-j)V(j)\Psi^z(j)| \Big )   \\ \notag  & \leq
C \Big(   1 +  \sum_{j= 1}^{n-1}  \|  j V(j)\| \  \| \Psi^z(j)\| \ |z|^{j} \frac{1}{j} \Big),
\end{align}
  and, therefore, Gronwall's Lemma (see  Lemma \ref{gronwall}) {combined with \eqref{shortrange} } yields \eqref{uno}. For negative $n$, the argument is similar, using \eqref{phi_1_entp}.   
Now we prove \eqref{dos} for $n \in \mathbb{N}$, the case of $n$ negative is proved similarly. The proof uses  induction on $n$. By definition, $\Psi^z(1)-\Psi^1(1)=0$.  Suppose that for every $j<n,$ $\Psi^z(j)-\Psi^1(j)=O(|z-1|^2)$. It follows from Lemma \ref{lem01} that
\begin{align}\label{x1}
 \| \Psi^z(n) - \Psi^1(n)\|\, \leq \,&  	\| s^z(n) - s^1(n) \| \ \|b-a\|+ \|\tau^z(n) - \tau^1(n)\| \ \|a\| 	
 \\  \notag & + \sum_{j= 1}^{n-1}  \| s^z(n-j)  V(j) \| \ \|\Psi^z(j) -\Psi^1(j) \|
 \\ & \notag +  \sum_{j= 1}^{n-1}  \| s^z(n-j) - s^1(n-j)\| \ \| V(j) \| \ \|\Psi^1(j) \|. 
\end{align}
{Equation}  \eqref{x1},  Lemma \ref{Remarkk} and \eqref{uno} { for $z=1$} imply that there is a function $\iota(n, z)\geq 0$ such that $  \iota(n, z)  = O(|z-1|^2) $ and 
\begin{align}\label{x3}
 \| \Psi^z(n) - \Psi^1(n)\| \leq & \iota(n, z)  + C n |z|^{-n}\sum_{j= 1}^{n-1}  
 \|V(j)\| \ \|\Psi^z(j) -\Psi^1(j) \|. 
\end{align}
Since each element in the sum {on the right is} $O(|z-1|^2)$, it follows that  $\| \Psi^z(n) - \Psi^1(n)\| =O(|z-1|^2)$, and the argument is completed by induction.
\end{proof}

\vspace{.1cm}

Of particular importance are here solutions with the initial condition $a=u^1_+(0), b=u^1_+(1).$

\begin{defini}
\label{def-Phi}
For all $z \in  \overline{\mathbb{D}} \cap D(1; 1/2) $, 
the solution described in Lemma \ref{lem01} with $ a =u_+^1(0)$ and $ b=u_+^1(1)$ is denoted by $\Phi^z\in{(\mathcal{M}_{L\times L})}^{\mathbb{Z}}$. 
\end{defini}
	
Notice that $\Phi^1=u_+^1$.  In this case,
we have that (see Lemma \ref{soljost})	
\begin{equation}
\label{eq-b-a}
b- a = \sum_{j=1}^{\infty}V(j)u_+^1(j)=  \sum_{j=1}^{\infty}V(j)\Phi^1(j).
\end{equation} 
Let us set
	$$ d(n) : = \sum_{j=n}^{\infty}V(j)\Phi^1(j)  .$$
It follows from Lemma	\ref{lem01} that  
for every $n \in \mathbb{N}$,  
	\begin{align}\label{phi_1,nat}	
	\Phi^z(n)= \,&	 s^z(n) d(n) + \tau^z(n) a-\sum_{j= 1}^{n-1}   s^z(n-j)V(j)(\Phi^z(j)- \Phi^1(j) ) 
	\\ \notag & - \sum_{j= 1}^{n-1}  ( s^z(n-j) -  s^z(n) )V(j)  \Phi^1(j)  , 
	\end{align}
and for every  $n \in \mathbb{Z}^-\cup\{0\}$
	\begin{equation}\label{phi_1_ent}
	\begin{aligned}
	\Phi^z(n)&= s^z(n)(b-a) + \tau^z(n) a +\sum_{j= n+1}^{0} 
	  s^z(n-j)
	  V(j)\Phi^z(j),		
	\end{aligned}
	\end{equation}

{The following result shows that $\Phi^z$ satisfies essentially the same bounds as $s^z$ and $\tau^z$ given in Lemma~\ref{Remarkk}.}

\begin{lemma}[Regularity 1]\label{princest1}
There is a constant  $C \in \mathbb{R}$ independent of $n$ and $z$ such that,   for all $z \in  \overline{\mathbb{D}} \cap D(1; 1/2) $,   
	\begin{equation}\label{phi_1-z}
 \left\|\frac{\Phi^z(n)-\Phi^1(n)}{z-1}\right\|\leq Cn|z|^{-n} , \qquad 	\forall n \in \mathbb{N} . 
	\end{equation}
\end{lemma}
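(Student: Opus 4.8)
The plan is to prove the estimate \eqref{phi_1-z} by induction on $n\in\NN$, mirroring the structure of the proof of Lemma~\ref{boundm} but now tracking the precise $z$-dependence rather than just an $O(|z-1|^2)$ bound. The base case is immediate since $\Phi^z(1)=b=u_+^1(1)=\Phi^1(1)$, so the difference quotient vanishes at $n=1$ (and similarly $\Phi^z(0)=a=\Phi^1(0)$). For the inductive step I would start from the representation \eqref{phi_1,nat} for $\Phi^z(n)$ and subtract the corresponding expression for $\Phi^1(n)$ (i.e.\ evaluate \eqref{phi_1,nat} at $z=1$), then divide by $z-1$. This produces four terms: one from $\frac{s^z(n)-s^1(n)}{z-1}d(n)$, one from $\frac{\tau^z(n)-\tau^1(n)}{z-1}a$, one from $\sum_{j=1}^{n-1}s^z(n-j)V(j)\frac{\Phi^z(j)-\Phi^1(j)}{z-1}$, and one from $\sum_{j=1}^{n-1}\frac{(s^z(n-j)-s^z(n))-(s^1(n-j)-s^1(n))}{z-1}V(j)\Phi^1(j)$.

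Each of the non-recursive terms is controlled directly by Lemma~\ref{Remarkk} together with the decay of $d(n)$ and $\Phi^1$. Indeed, the middle bound in \eqref{PPPH} gives $\big\|\tfrac{s^z(n)-s^1(n)}{z-1}\big\|\le Cn^2|z|^{-|n|}$; since $\|d(n)\|\le\sum_{j\ge n}\|V(j)\Phi^1(j)\|$ and, by \eqref{uno} applied with $z=1$, $\|\Phi^1(j)\|\le C|j|$, the short-range condition \eqref{shortrange} yields $\|d(n)\|\le C n^{-1}\sum_{j\ge n}\|jV(j)\|\cdot$ (something summable), in any case $n\|d(n)\|$ is bounded, which kills the extra power of $n$ and leaves $Cn|z|^{-n}$. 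The $\tau$-term is handled by the right estimate in \eqref{esttau1}. The fourth term uses \eqref{PPPH0}, which gives $\big\|\tfrac{(s^z(n-j)-s^1(n-j))-(s^z(n)-s^1(n))}{z-1}\big\|\le C|nj||z|^{-|n|}$, so this term is bounded by $Cn|z|^{-n}\sum_{j=1}^{n-1}\|jV(j)\|\,\|\Phi^1(j)\|/|j|\cdot|j|$, i.e.\ again by $Cn|z|^{-n}$ after inserting $\|\Phi^1(j)\|\le C|j|$ and using \eqref{shortrange}. Thus, after dividing through by $n|z|^{-n}$, I arrive at an inequality of the form
\begin{align}\label{eq-gronwall-setup}
\frac{\|\Phi^z(n)-\Phi^1(n)\|}{|z-1|}\,|z|^{n}\,\frac{1}{n}\;\le\; C\;+\;C\sum_{j=1}^{n-1}\|jV(j)\|\cdot\frac{\|\Phi^z(j)-\Phi^1(j)\|}{|z-1|}\,|z|^{j}\,\frac{1}{j},
\end{align}
to which Gronwall's Lemma (Lemma~\ref{gronwall}) applies, giving a bound on $\frac{\|\Phi^z(n)-\Phi^1(n)\|}{|z-1|}|z|^n/n$ uniform in $n$ and $z$, which is exactly \eqref{phi_1-z}.

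The main obstacle I anticipate is the bookkeeping needed to ensure that every term really collapses to $Cn|z|^{-n}$ with a constant independent of both $n$ and $z$; in particular one must use the cancellation already built into \eqref{phi_1,nat} (the grouping of $s^z(n-j)-s^z(n)$ and the appearance of $d(n)$ rather than $b-a$) so that no term carries an uncompensated factor $n^2$, and one must verify that $n\|d(n)\|$ and $\|\Phi^1(n)\|/|n|$ are bounded, both of which follow from \eqref{uno} at $z=1$ and \eqref{shortrange}. A secondary technical point is that \eqref{phi_1,nat} as written is only stated for $n\in\NN$; since the claim \eqref{phi_1-z} is only asserted for $n\in\NN$ this is not an issue, but one should double-check that the difference-quotient representation used in the induction is legitimate for all $n\ge 2$ and that the $n=0,1$ initial data indeed give a vanishing difference quotient. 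Once the Gronwall inequality \eqref{eq-gronwall-setup} is in place, the conclusion is routine.
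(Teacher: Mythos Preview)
Your approach is essentially identical to the paper's: subtract \eqref{phi_1,nat} at $z$ from its value at $z=1$ to obtain the four-term decomposition (the paper's \eqref{luc}), bound each non-recursive term via Lemma~\ref{Remarkk}, and close with Gronwall's inequality applied to exactly your \eqref{eq-gronwall-setup} (the paper's \eqref{fin}). The mention of ``induction'' at the outset is a misnomer; what you actually carry out is the Gronwall argument, as in the paper.

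There is one technical slip worth fixing. You bound $\|\Phi^1(j)\|\le C|j|$ via \eqref{uno} at $z=1$, but with only this linear bound the first and fourth terms would require a finite \emph{second} moment $\sum_j j^2\|V(j)\|$, which is not part of the hypothesis \eqref{shortrange}. Concretely, the fourth term becomes $Cn|z|^{-n}\sum_{j=1}^{n-1} j\|V(j)\|\,\|\Phi^1(j)\|$, and inserting $\|\Phi^1(j)\|\le C|j|$ gives $\sum j^2\|V(j)\|$; similarly, $n\|d(n)\|$ is not obviously bounded from linear growth alone. The correct input, which the paper uses explicitly, is that $\Phi^1=u_+^1$ is \emph{uniformly bounded} on $\NN$ by the Jost asymptotics \eqref{wasu}. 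With $\sup_{j\ge 0}\|\Phi^1(j)\|<\infty$ one gets $n\|d(n)\|\le C\sum_{j\ge n} j\|V(j)\|$ and the fourth term is controlled by $Cn|z|^{-n}\sum_j j\|V(j)\|$, both finite by \eqref{shortrange}. Once you replace your linear bound by this uniform bound, the rest of your argument goes through verbatim.
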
 

\begin{proof}
It follows from \eqref{phi_1,nat}, that 
\begin{align}\label{luc}
	\Phi^z(n)  -\Phi^1(n) \,=\, &	 (s^z(n)- s^1(n)) d(n) + (\tau^z(n) -  \tau^1(n)  ) a-\sum_{j= 1}^{n-1}   s^z(n-j)V(j)(\Phi^z(j)- \Phi^1(j) ) 
	\notag \\  & - \sum_{j= 1}^{n-1}  \Big ( s^z(n-j) -  s^1(n-j)  - \big ( s^z(n) 
	-  s^1(n) \big ) \Big )V(j)  \Phi^1(j)  . 
\end{align}
{Equation} \eqref{luc}, Lemma \ref{Remarkk}, \eqref{shortrange},  and the fact that $nd(n)$ is uniformly bounded (see \eqref{shortrange})  implies that
\begin{align}\label{fin}
\frac{|z|^n}{n}\Big \| \frac{	\Phi^z(n)  -\Phi^1(n)}{ z-1 } \Big \| \leq 
C   \Big [  1 +   \sum_{j= 1}^{n-1} j \|V(j)\|   \ \frac{|z|^j}{j} \ \Big \| \frac{	\Phi^z(j)  -\Phi^1(j)}{ z-1 } \Big \|   \Big ],
\end{align}
here we use that $ \Phi^1(j) = u^1_{+}(j)  $ is uniformly bounded for $j \geq 0$ (which is a consequence of the definition of the Jost {solution in question}). 
{Equation}  \eqref{fin} and Gronwall's Lemma (see Lemma \ref{gronwall}) {combined with \eqref{shortrange} } imply \eqref{phi_1-z}. 
\end{proof}

\vspace{.1cm}

Lemma \ref{boundm} (with $\Phi^{1}  $ playing the role of {$\Psi^1$})   implies that the series
\begin{align}\label{nimodo}
\sum_{ j =- \infty}^\infty V(j)\Phi^{1}(j) = 
\sum_{ j =- \infty}^\infty V(j)u_+^{1}(j),
\end{align}
converges {to a matrix in $\Mm_{L\times L}$}.  This  matrix   plays an important role in {the} proofs {as it is connected to the Wronskian of $u_+^1$ and $v_-^1$,} see Lemma \ref{lemWigual}.  In Lemma  \ref{NM} (see Definition \ref{kernels}), we prove that for every vector $\xi$ that belongs to its kernel, the sequence $ {(   u_{+}^1(j) \xi )_{j \in \mathbb{Z}}}$ is bounded. The corresponding  states 
  $  (  u_{+}^1(j) \xi )_{ j \in \mathbb{Z} }$  are  called half-bound states.

\begin{lemma}[Regularity 2]\label{princest2}
Let $\xi$ belong to the kernel of  \eqref{nimodo}.  
 There is a constant  $C \in \mathbb{R}$, independent of $n$ and $z$, such that,  for  all $z \in  \overline{\mathbb{D}} \cap D(1, 1/2) $,
	\begin{equation}\label{phi_1-z22}
	  \left\|\frac{1}{z-1}(\Phi^z(n)-\Phi^1(n)) \xi \right\|\leq C|n||z|^{n} ,
\qquad \forall \,n \in \mathbb{Z}^- \cup \{  0\}.
	\end{equation}
\end{lemma}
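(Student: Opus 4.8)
The structure will mirror the proof of Lemma~\ref{princest1}, but using the negative-$n$ recursion \eqref{phi_1_ent} instead of \eqref{phi_1,nat}, and crucially exploiting that $\xi$ lies in the kernel of \eqref{nimodo} in order to control the ``constant'' term $s^z(n)(b-a)$ as $n\to-\infty$. First I would rewrite \eqref{phi_1_ent} applied to the vector $\xi$ and subtract the $z=1$ version, obtaining
\begin{align}\label{plan1}
(\Phi^z(n)-\Phi^1(n))\xi \,=\, &\, (s^z(n)-s^1(n))(b-a)\xi + (\tau^z(n)-\tau^1(n))a\xi \notag\\
&+ \sum_{j=n+1}^{0} s^z(n-j)V(j)(\Phi^z(j)-\Phi^1(j))\xi \notag\\
&+ \sum_{j=n+1}^{0} (s^z(n-j)-s^1(n-j))V(j)\Phi^1(j)\xi.
\end{align}
The term $\tau^z(n)-\tau^1(n)$ is handled by \eqref{esttau1}, giving $O(|z-1|)\cdot|n||z|^{n}$ after dividing by $z-1$. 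The middle sum is the Gronwall term. The last sum is controlled by the first bound in \eqref{PPPH} together with boundedness of $\Phi^1(j)=u^1_+(j)$ on $j\le 0$ — wait, that is not automatic, so instead I would use \eqref{uno} for $\Phi^1$, i.e. $\|\Phi^1(j)\|\le C|j||z|^{-|j|}$ at $z=1$ reduces to $\|\Phi^1(j)\|\le C|j|$, and combine with $\|n V(n)\|$ summability.

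\textbf{Handling the leading term.} The genuinely new point is the term $(s^z(n)-s^1(n))(b-a)\xi$. Since $\xi$ is in the kernel of \eqref{nimodo} and $b-a=\sum_{j=1}^\infty V(j)\Phi^1(j)$ by \eqref{eq-b-a}, while $\sum_{j=-\infty}^\infty V(j)\Phi^1(j)=0$, we get $(b-a)\xi = -\sum_{j=-\infty}^{0}V(j)\Phi^1(j)\xi = -\sum_{j=n+1}^{0}V(j)\Phi^1(j)\xi - \sum_{j=-\infty}^{n}V(j)\Phi^1(j)\xi$. Substituting this identity into the first term of \eqref{plan1} allows me to absorb $-(s^z(n)-s^1(n))\sum_{j=n+1}^{0}V(j)\Phi^1(j)\xi$ together with the last sum of \eqref{plan1}, leaving sums of the shape $\big(s^z(n-j)-s^1(n-j)-(s^z(n)-s^1(n))\big)V(j)\Phi^1(j)\xi$, which are exactly controlled by \eqref{PPPH0}, plus a tail term $-(s^z(n)-s^1(n))\sum_{j\le n}V(j)\Phi^1(j)\xi$. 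For the tail, \eqref{PPPH} bounds $|s^z(n)-s^1(n)|/|z-1|$ by $Cn^2|z|^{-|n|}$, and $\|\sum_{j\le n}V(j)\Phi^1(j)\xi\|\le C\sum_{j\le n}|j|^{-1}\|jV(j)\|\,|j|$ — this needs care: I would instead bound it crudely by $C\sum_{j\le n}\|V(j)\|\,\|\Phi^1(j)\xi\|$, and since $\Phi^1(j)\xi = u^1_+(j)\xi$ is a half-bound state hence \emph{bounded} on all of $\mathbb{Z}$ (as recalled before Lemma~\ref{princest2}, to be proved in Lemma~\ref{NM}), this tail is $\le C\sum_{j\le n}\|V(j)\|$, which by \eqref{shortrange} is $o(1/|n|)$ as $n\to-\infty$, compensating the $n^2$ and in fact giving something better than $|n|$.

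\textbf{Closing the estimate.} After these substitutions, \eqref{plan1} takes the form
\begin{align}\label{plan2}
\frac{|z|^{-n}}{|n|}\Big\|\frac{(\Phi^z(n)-\Phi^1(n))\xi}{z-1}\Big\| \,\leq\, C\Big[1 + \sum_{j=n+1}^{0}\|jV(j)\|\,\frac{|z|^{-j}}{|j|}\Big\|\frac{(\Phi^z(j)-\Phi^1(j))\xi}{z-1}\Big\|\Big],
\end{align}
where I have used $|z|^{-n}\le |z|^{-j}$ for $n\le j\le 0$ together with $|z|<1$, and $|z|^{-|n|}=|z|^{n}$... here I must be careful with signs since $n\le 0$: $|z|^{-|n|}=|z|^{n}$ only when $n\ge 0$; for $n\le 0$ one has $|z|^{-|n|}=|z|^{n}$ as well since $|n|=-n$. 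So the bound in \eqref{phi_1-z22} reads $C|n||z|^{n}=C|n||z|^{-|n|}$, consistent with the weights appearing naturally. Then a discrete Gronwall argument (Lemma~\ref{gronwall}) applied in the variable running from $0$ down to $n$, combined with \eqref{shortrange}, yields the claimed bound. \textbf{The main obstacle} is organizing the cancellation of the leading term $s^z(n)(b-a)\xi$: naively $s^z(n)$ grows like $|n||z|^{-|n|}$ and, divided by $z-1$, like $n^2|z|^{-|n|}$, which is too large by a factor $|n|$ and would break the Gronwall closure; it is only the kernel condition on $\xi$ (forcing the two-sided sum to vanish) plus the boundedness of the half-bound state $u^1_+(j)\xi$ that recovers the correct power of $|n|$. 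Getting the bookkeeping of these telescoped sums right, with uniform-in-$z$ constants, is the delicate part.
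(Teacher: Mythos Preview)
Your proposal is correct and follows essentially the same route as the paper's proof. The paper defines $d_-(n):=-\sum_{j\le n}V(j)\Phi^1(j)$ and first rewrites $\Phi^z(n)\xi$ (before subtracting) so that the leading term becomes $s^z(n)d_-(n)\xi$ plus the telescoped sum $\sum_{j=n+1}^{0}(s^z(n-j)-s^z(n))V(j)\Phi^1(j)\xi$; subtracting the $z=1$ version then yields exactly your ``tail'' $(s^z(n)-s^1(n))d_-(n)\xi$ and the telescoped differences controlled by \eqref{PPPH0}, after which the paper invokes the boundedness of the half-bound state $\Phi^1(j)\xi$ (Lemma~\ref{NM}) and closes with Gronwall, just as you do.
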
 

\begin{proof}
In this case, (see {Equation~\ref{eq-b-a}} and recall that $\xi$ belongs to the kernel of \eqref{nimodo})	
\begin{align} \label{ba}
(b- a)\xi =   \sum_{j=1}^{\infty}V(j)\Phi^1(j)\xi =  -    \sum_{j=- \infty}^{ 0}V(j)\Phi^1(j)\xi.
\end{align}
Hence let us set
$$ d_{-}(n) : = - \sum_{j=-\infty }^{n }V(j)\Phi^1(j)  .$$
Due to \eqref{shortrange} one has
\begin{equation}
\label{eq-d-bound}
|nd_-(n)|\leq \sum_{j=-\infty }^{n }|j|\,\|V(j)\|\,\|\Phi^1(j)\|<\infty.
\end{equation}
It follows from Lemma	\ref{lem01} {applied to $\Phi^z$} and \eqref{ba} that for every  $n \in \mathbb{Z}^-\cup\{0\}$ 
	\begin{align}\label{phi_1_entptt}
	\Phi^z(n) \xi = & s^z(n)d_{-}(n) \xi + \tau^z(n) a \xi +\sum_{j= n+1}^{0} 
	  s^z(n-j)
	  V(j) (\Phi^z(j)  - \Phi^1(j)  )  \xi
	  \\ & \notag  + \sum_{j= n+1}^{0} 
	  \big ( s^z(n-j) -  s^z(n) \big )
	  V(j)  \Phi^1(j)    \xi  .		
	\end{align}
Then, we have that 
	\begin{align}\label{phi_1_entpppt}
\Big \| \frac{|z|^{-n}}{n}	 & \frac{1}{z - 1 }(\Phi^z(n) \xi -\Phi^1(n) \xi) \Big \|  \leq   \frac{|z|^{-n}}{n}	  \frac{1}{z - 1 } \Big [   \|(s^z(n)  - s^1(n)  )d_{-}(n) \xi \| \notag 
\\ \notag & +\| (\tau^z(n)  -  \tau^1(n)  ) a \xi \| +\sum_{j= n+1}^{0} 
	  \| s^z(n-j)
	  V(j) (\Phi^z(j)  - \Phi^1(j)  )  \xi \|
	    \\  &  + \sum_{j= n+1}^{0} 
	 \| \big ( s^z(n-j) - s^1(n-j)  - ( s^z(n) - s^1(n) ) \big )
	  V(j)  \Phi^1(j)    \xi\|  \Big ] .		
	\end{align}
Bounding the first summand on the right is bounded using the second estimate from \eqref{PPPH} and \eqref{eq-d-bound}, the second summand with \eqref{esttau} and the fourth summand with \eqref{PPPH0}, one deduces
\begin{align}\label{finnn}
\frac{|z|^{-n}}{n}\Big \| \frac{	\Phi^z(n)  -\Phi^1(n)}{ z-1 } \xi \Big \| \leq 
C   \Big [  1 +   \sum_{j= n+1}^{0} j \|V(j)\| \   \frac{|z|^{-j}}{j} \ \Big \| \frac{	\Phi^z(j)  -\Phi^1(j)}{ z-1 } \xi\Big \|   \Big ],
\end{align}
because $ \Phi^1(j) \xi   $ is uniformly bounded (see Lemma \ref{NM} and Definition \ref{kernels}). 
{Equation}  \eqref{finnn} and Gronwall's  (see Lemma \ref{gronwall}) imply 
\eqref{phi_1-z22}. 
\end{proof}

\section{The scattering matrix}
\label{sec-Scat}

\begin{defini}[Wronskian]
 For two functions $u,v:\ZM\to \mathcal{M}_{L\times L}$ and $n\in\ZM$, the Wronskian is defined by
\begin{equation}
\label{eq-WronskiDef}
W(u,v)(n)
\;=\;
{i}\,\big(u(n+1)^*v(n)\,-\,u(n)^*v(n+1)\big)
\;\in\mathcal{M}_{L\times L}
\;.
\end{equation}
\end{defini}

It is easy to see that $W(u,v)^*=W(v,u)$ and that  for matrix solutions  satisfying $Hu=Eu$ and $Hv=\overline{E}v$, the Wronskian $W(u,v)$ is independent of $n$. In these cases we omit the argument $n$. 
For $ z \in \overline{\mathbb{D}} $,
the Wronskians of the Jost solutions can be evaluated  using  that they are independent of   $n$: taking the limits, either  $n  \to \infty$ or 
$n \to - \infty $, we find that
\begin{align}\label{Wid1}
W(u^{\zb}_+,u^z_+) = 0 = W(u^{1/ \zb}_-,u^{1/z}_-) , 
\end{align}
and if, additionally,  $z \in \mathbb{S}^1 \setminus \{ -1, 1 \}$
\begin{align}\label{Wid2}
W(u^z_\pm,u^z_\pm)
\;=\;
(\nu^z)^{-1}\,\one
\;,
\end{align}
where
\begin{equation}
\label{eq-NuDef}
\nu^z\;=\;\frac{{i}}{z-\overzz}
\;.
\end{equation}

Next recall that for  $ z \in  \mathbb{S}^1 \setminus  \{-1, 1 \} $, it is possible to decompose the  states $u^z_+$ and $u^{\overz}_-$ on the basis $(u_-^z,u_-^{\overz})$  and $(u_+^z,u_+^{\overz})$, respectively, and that the  matrices $M^z_\pm$ and $N^z_\pm$ are defined by 
\begin{equation}
\label{eq-udecomp}
u_+^z
\;=\;u_-^zM^z_+\;+\;u_-^{\overz}N^z_+
\;,
\qquad
u_-^{\overz}
\;=\;u_+^zN^z_-\;+\;u_+^{\overz}M^z_-
\;.
\end{equation}
{Equation~\ref{Wid2}}
leads to
\begin{align}
\label{eq-MNId0}
\begin{split}
&
M^z_+
\;=\;
\nu^z\,
W(u_-^{\overzb},u^z_+)
\;,
\\
&
N^z_+
\;=\;
-\,\nu^z\,
W(u_-^{1/z},u^z_+)
\;,
\\
&
N^z_-
\;=\;
\nu^z\,
W(u_+^{z},u^{\overz}_-)
\;,
\\
& M^z_-
\;=\;
-\,\nu^z\,
W(u_+^{\zb},u^{\overz}_-)
\;.
\end{split}
\end{align}
This shows that $M^z_\pm$  can be extended to analytic functions  on $\mathbb{D}\setminus\{0\}$ which are continuous on  $\overline{\mathbb{D}}\setminus\{-1,0,1\}$.
{Equations} \eqref{eq-MNId0} imply that
\begin{align} 
&
(N^z_+)^*
\;=-N^z_-\;,
\qquad
(M^z_+)^*
\;=\;
M^{\zb}_-
\;,
\label{eq-MNId2}
\end{align}
where the first equation holds true for  $ z \in  \mathbb{S}^1 \setminus  \{-1, 1 \} $  and the second can be extended to    $\overline{\mathbb{D}}\setminus\{-1,0,1\}$. 

\begin{lemma}
For every $ z \in  \mathbb{S}^1 \setminus  \{-1, 1 \} $, the following identities hold true: 
	\begin{align}
	\label{I2}
	(M^{z}_-)^*M^z_-\;&=\;\one\,+\,(N^{z}_-)^*N^z_-\;,\\\label{I3}
	M^{z}_+N^z_-\;& =\;-\,N^{\overz}_+M^z_-\;,\\\label{I6}
		(M^{z}_+)^*M^{z}_+\;&=\;\one\,+\,(N^{z}_+)^*N^z_+,\\ 
		 \label{I5}
		 	M^{z}_-N^{z}_+\;&=\;-\,N^{\overz}_-M^z_+\;. 
	\end{align}
\end{lemma}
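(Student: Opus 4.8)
The plan is to derive all four identities \eqref{I2}, \eqref{I3}, \eqref{I6}, \eqref{I5} from the Wronskian formulas \eqref{eq-MNId0}, the relations \eqref{Wid1}--\eqref{Wid2}, the adjoint relations \eqref{eq-MNId2}, and the basic algebraic properties $W(u,v)^*=W(v,u)$ and bilinearity of the Wronskian in the sense $W(u,vA)(n)=W(u,v)(n)\,A$ and $W(uA,v)(n)=A^*W(u,v)(n)$ for constant matrices $A$. On $\SM^1\setminus\{-1,1\}$ one has $\overline z = 1/z$, so $u_\pm^{\overz}=u_\pm^{1/z}$ and the four matrices $u_+^z,u_+^{\overz},u_-^z,u_-^{\overz}$ are all solutions of $Hu=Eu$ with the \emph{same real} energy $E=z+1/z=2\,\mathrm{Re}(z)$; this is what makes their pairwise Wronskians $n$-independent and is the feature I will use repeatedly.

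\textbf{Key steps.} First I would establish \eqref{I6}. Apply $W(\,\cdot\,,u_+^z)$ to the first decomposition in \eqref{eq-udecomp}, $u_+^z=u_-^zM^z_++u_-^{\overz}N^z_+$, with the test function $u_+^z$ itself on the left: using $W(u_+^z,u_-^z)=\nu^z\,(M^z_+)^{*}\cdot(\nu^z)^{-1}$-type bookkeeping is awkward, so instead I would pair the decomposition with itself. Concretely, compute $W(u_+^z,u_+^z)$ in two ways: directly it equals $(\nu^z)^{-1}\one$ by \eqref{Wid2}, while substituting the decomposition for the right-hand slot and using bilinearity gives $W(u_+^z,u_-^z)M^z_++W(u_+^z,u_-^{\overz})N^z_+$. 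Now substitute the decomposition \emph{also} into the left slot for these two terms and use \eqref{Wid1} together with \eqref{Wid2} and $\overline z=1/z$ to evaluate $W(u_-^{\overz},u_-^z)=0$, $W(u_-^{z},u_-^z)=(\nu^z)^{-1}\one$, $W(u_-^{\overz},u_-^{\overz})=(\nu^{\overz})^{-1}\one$, and $W(u_-^{z},u_-^{\overz})=0$. Keeping track of which adjoints land on the $M,N$ factors — e.g. $W(u_-^zM^z_+,u_-^{\overz}N^z_+)=(M^z_+)^{*}W(u_-^z,u_-^{\overz})N^z_+=0$ — one is left with $(\nu^z)^{-1}\one=(M^z_+)^{*}(\nu^z)^{-1}M^z_++(N^z_+)^{*}(\nu^{\overz})^{-1}N^z_+$. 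Since on $\SM^1$ one has $\nu^{\overz}=-\nu^z$... wait: $\nu^{\overz}=i/(\overline z-z)=-i/(z-\overline z)$; comparing with $\nu^z=i/(z-1/z)=i/(z-\overline z)$ on $\SM^1$, indeed $\nu^{\overz}=-\nu^z$. That sign is exactly what is needed to turn the relation into $\one=(M^z_+)^{*}M^z_+-(N^z_+)^{*}(-1)N^z_+$... I must be careful here: the correct bookkeeping (the Wronskian of $u_+^z$ with $u_+^{\overz}$ is \emph{not} zero in general, only with the \emph{same} $z$) will, after collecting terms, produce precisely \eqref{I6}; I would redo the signs cleanly in the final write-up. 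The identity \eqref{I2} follows by the identical computation with $+$ and $-$ interchanged, i.e. starting from $u_-^{\overz}=u_+^zN^z_-+u_+^{\overz}M^z_-$ and evaluating $W(u_-^{\overz},u_-^{\overz})=(\nu^{\overz})^{-1}\one$ two ways; alternatively one can obtain \eqref{I2} from \eqref{I6} by taking adjoints and using \eqref{eq-MNId2} together with $z\mapsto\overline z$.

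\textbf{The off-diagonal identities and the main obstacle.} For \eqref{I3} and \eqref{I5} the plan is to cross-pair the two decompositions in \eqref{eq-udecomp}. For \eqref{I5}, compute $W(u_+^{\overz},u_-^{\overz})$ by substituting the first decomposition (for $u_+^z$, hence also $u_+^{\overz}$ by $z\leftrightarrow\overline z$) into the left slot and the second decomposition into the right slot, and compare with substituting the second decomposition into the right slot while leaving $u_+^{\overz}$ as is; expanding with \eqref{Wid1}--\eqref{Wid2} and matching the surviving terms yields a relation among $M^z_+,N^z_+,M^{\overz}_-,N^{\overz}_-$ which, after applying \eqref{eq-MNId2} to convert the $\overline z$-indexed quantities, is \eqref{I5}. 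Identity \eqref{I3} is obtained the same way starting from $W(u_+^{z},u_-^{\overz})$ with the roles of $z$ and $\overline z$ swapped, or directly by taking the adjoint of \eqref{I5} and relabeling. The main obstacle — and the only place where care is genuinely required — is the bookkeeping of adjoints and of the $\nu^z$ versus $\nu^{\overz}=-\nu^z$ factors: every time a decomposition is inserted into the \emph{left} slot of a Wronskian the coefficient matrix appears adjointed, and the two ``free'' Wronskians $W(u_\pm^z,u_\pm^z)$ and $W(u_\pm^{\overz},u_\pm^{\overz})$ carry \emph{opposite}-signed $\nu$ factors on $\SM^1$; getting these signs consistent is what makes the right-hand sides come out as $\one+(N)^{*}N$ rather than $\one-(N)^{*}N$. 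Everything else is bilinear algebra with the vanishing relations \eqref{Wid1}. I would present \eqref{I6} and \eqref{I2} in full and then note that \eqref{I3} and \eqref{I5} follow by the analogous cross-pairing together with \eqref{eq-MNId2}.
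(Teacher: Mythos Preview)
Your treatment of \eqref{I2} and \eqref{I6} is correct and is exactly the paper's argument: expand $W(u_+^z,u_+^z)=(\nu^z)^{-1}\one$ (respectively $W(u_-^{\overz},u_-^{\overz})$) by inserting the decomposition \eqref{eq-udecomp} into both slots, use \eqref{Wid1}--\eqref{Wid2}, and cancel the $\nu$'s with the sign relation $\nu^{\overz}=-\nu^z$.

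For \eqref{I3} and \eqref{I5}, however, your ``cross-pairing'' of the two decompositions in $W(u_+^{\overz},u_-^{\overz})$ does not deliver the claimed identities. If you insert the $u_-$-decomposition of $u_+^{\overz}$ on the left and the $u_+$-decomposition of $u_-^{\overz}$ on the right, the surviving terms after \eqref{Wid1}--\eqref{Wid2} only reproduce $(M_+^{\overz})^*=M_-^z$ and $(N_+^z)^*=-N_-^z$, i.e.\ \eqref{eq-MNId2}, not a relation of the shape $M_-^zN_+^z=-N_-^{\overz}M_+^z$. Inserting both decompositions simultaneously leaves you with mixed Wronskians $W(u_-^{\cdot},u_+^{\cdot})$ that feed back into $M,N$ via \eqref{eq-MNId0} and the computation becomes circular.

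The missing ingredient is precisely the Wronskian you dismissed in your parenthetical remark: on $\SM^1\setminus\{-1,1\}$ one has $\bar z=1/z$, so \eqref{Wid1} gives $W(u_+^{\overz},u_+^z)=0$. The paper expands \emph{this} vanishing Wronskian with the $u_-$-decomposition in both slots; the two nonzero terms are $-(\nu^z)^{-1}(M_+^{\overz})^*N_+^z+(\nu^z)^{-1}(N_+^{\overz})^*M_+^z$, and converting the adjoints via \eqref{eq-MNId2} yields \eqref{I5} directly. Identity \eqref{I3} follows the same way from $0=W(u_-^{\overz},u_-^z)$. So your overall strategy is right, but for the off-diagonal identities you should pair $u_+^{\overz}$ with $u_+^z$ (or $u_-^{\overz}$ with $u_-^z$), not $u_+$ with $u_-$.
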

\begin{proof}
We notice that for every matrix $  M \in \mathcal{M}_{L\times L}$ and every solutions $u, v$, 
\begin{align}\label{to1}
W(uM,(v+w))=& M^*(W(u,v)+W(u,w)),  \\  \notag  W(u+v,wM)= & (W(u,w)+W(v,w))M. 
\end{align}
 First we prove \eqref{I6}. It follows from Equations \eqref{Wid2} and \eqref{eq-udecomp} that
	\begin{align}\label{to2}(\nu^z)^{-1}\mathbf{1}=W(u^z_+,u^z_+)=W(u_-^zM^z_+\;+\;u_-^{\overz}N^z_+,u_-^zM^z_+\;+\;u_-^{\overz}N^z_+). 
	\end{align}
Expanding the right hand side of \eqref{to2} and using Equations \eqref{Wid1}, \eqref{Wid2} and \eqref{to1}, we get
$$(\nu^z)^{-1}\mathbf{1}=(\nu^z) ^{-1}(M^z_+)^*M^z_+-(\nu^z)^{-1}(N^z_+)^*N^z_+,$$ where $\nu^{1/z}=-\nu^z$ was used. This implies \eqref{I6}. Equation \eqref{I2} is obtained in similar manner by expanding $W(u_-^z,u_-^z)$. Now let us prove $\eqref{I5}$.  It follows from Equations \eqref{Wid1} and \eqref{eq-udecomp} that
\begin{align}\label{to3}
0=W(u^{1/z}_+,u^z_+)=W(u_-^{1/z}M^{1/z}_+\;+\;u_-^{z}N^{1/z}_+,u_-^zM^z_+\;+\;u_-^{\overz}N^z_+).
\end{align}
Expanding the right hand side of \eqref{to3} and using Equations \eqref{Wid1}, \eqref{Wid2} and \eqref{to1}, we get
  $$0=-(\nu^z)^{-1}(M_+^{1/z})^*N_+^z+(\nu^z)^{-1}(N_+^{1/z})^*M_+^z=-(\nu^z)^{-1}M_-^zN_+^z-(\nu^z)^{-1}N_-^{1/z}M_+^z,$$ where the last equality follows from \eqref{eq-MNId2}. Equation \eqref{I3} is obtained in similar manner expanding $W(u_-^{1/z},u_-^z)$. 
\end{proof}

\begin{proposi}
\label{lem-Minvertible}
For $z\in\SM^1\setminus\{-1,1\}$, $M^z_\pm$ is invertible and $\Ss^z$ is unitary  

\end{proposi}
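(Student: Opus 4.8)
The plan is to derive invertibility of $M^z_\pm$ directly from the algebraic identities \eqref{I2} and \eqref{I6}, and then assemble the unitarity of $\Ss^z$ from the full set of Wronskian relations. First I would observe that \eqref{I6} reads $(M^z_+)^*M^z_+ = \one + (N^z_+)^*N^z_+$, so for any $\xi \in \CM^L$ we have $\|M^z_+\xi\|^2 = \|\xi\|^2 + \|N^z_+\xi\|^2 \geq \|\xi\|^2$; hence $M^z_+$ has trivial kernel and, being a square matrix, is invertible. The same argument applied to \eqref{I2} gives invertibility of $M^z_-$. This justifies writing $T^z_\pm = (M^z_\pm)^{-1}$ and $R^z_\pm = -N^z_\pm(M^z_\pm)^{-1}$, and in particular the scattering matrix $\Ss^z$ of Definition \ref{def-Scat} is well defined.

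Next I would prove $\Ss^z$ is unitary by checking $(\Ss^z)^*\Ss^z = \one_{2L}$ block by block, using \eqref{eq-MNId2}, \eqref{I2}, \eqref{I6}, \eqref{I3}, \eqref{I5}. Writing $\Ss^z = \bsm T^z_+ & R^z_- \\ R^z_+ & T^z_- \esm$, the $(1,1)$ block of $(\Ss^z)^*\Ss^z$ is $(T^z_+)^*T^z_+ + (R^z_+)^*R^z_+ = (M^z_+)^{-*}\big(\one + (N^z_+)^*N^z_+\big)(M^z_+)^{-1}$, which equals $\one$ by \eqref{I6}. Similarly the $(2,2)$ block is $(R^z_-)^*R^z_- + (T^z_-)^*T^z_- = (M^z_-)^{-*}\big((N^z_-)^*N^z_- + \one\big)(M^z_-)^{-1} = \one$ by \eqref{I2}. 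For the off-diagonal $(1,2)$ block one gets $(T^z_+)^*R^z_- + (R^z_+)^*T^z_- = (M^z_+)^{-*}\big(-N^z_- - (N^z_+)^*N^z_-(M^z_-)^{-1}M^z_-\big)(M^z_-)^{-1}$ — here I would use $(N^z_+)^* = -N^z_-$ from \eqref{eq-MNId2} together with \eqref{I3} rewritten as $N^z_-(M^z_-)^{-1} = -(M^z_+)^{-1}N^{\overz}_+\cdot(\text{something})$; more cleanly, multiply \eqref{I3} on the left by $(M^z_+)^{-1}$ and on the right by $(M^z_-)^{-1}$ to get $N^z_-(M^z_-)^{-1} = -(M^z_+)^{-1}N^{\overz}_+$, and combine with the $N$-adjoint relation to see the $(1,2)$ block vanishes. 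The $(2,1)$ block vanishes by taking adjoints (or symmetrically using \eqref{I5}). Since $(\Ss^z)^*\Ss^z = \one_{2L}$ and $\Ss^z$ is a square matrix, it is unitary.

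The only mild subtlety, and the step I would be most careful about, is the bookkeeping in the off-diagonal blocks: one must correctly transport \eqref{I3} and \eqref{I5} — which relate $M^z_\pm$, $N^z_\pm$ and $N^{\overz}_\pm$ — through the inverses $(M^z_\pm)^{-1}$ and reconcile them with the conjugation relation $(N^z_+)^* = -N^z_-$ of \eqref{eq-MNId2}, keeping track of which entries are evaluated at $z$ versus $\overz = 1/\zb = z$ (note that on $\SM^1$, $\overz$ coincides with $\zb$, so $\nu^{\overz} = -\nu^z$ and the two notations agree). Everything else is a direct substitution. I would not need the Jordan decomposition or any analytic continuation here; it is purely the algebra of the Wronskian identities already established, which is exactly the conceptual shortcut advertised in the introduction.
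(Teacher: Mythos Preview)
Your approach is essentially the paper's: invertibility from \eqref{I2}, \eqref{I6}, then block-by-block verification of $(\Ss^z)^*\Ss^z=\one$. The diagonal blocks are handled exactly as the paper does.

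However, your off-diagonal computation is both slightly wrong and overcomplicated. The $(1,2)$ block is
\[
(T^z_+)^*R^z_- + (R^z_+)^*T^z_- \;=\; -(M^z_+)^{-*}\big(N^z_- + (N^z_+)^*\big)(M^z_-)^{-1},
\]
not the expression you wrote (which has a spurious extra factor of $N^z_-$). This vanishes immediately from $(N^z_+)^* = -N^z_-$ in \eqref{eq-MNId2} alone; you do not need \eqref{I3} or \eqref{I5} at all, and the paper does not use them here either. So the ``mild subtlety'' you flag is not one --- the bookkeeping with $N^{\overz}_\pm$ and the identities \eqref{I3}, \eqref{I5} never enters the unitarity check.
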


\begin{proof}
The invertibility of $M^z_\pm$ follows from Equations \eqref{I2} and \eqref{I6}. Now we prove the unitarity. The off diagonal terms of $  (\Ss^z)^*\,\Ss^z  $ are 
	(see Definition \ref{def-Scat})
	\begin{align} 
	-((M_+^z)^{-1})^*N_{-}^z (M_-^z)^{-1} 
	-((M_+^z)^{-1})^*(N_{+}^{z})^* (M_-^z)^{-1} , \\
	-((M_-^{z})^{-1})^*(N_{-}^{z})^* (M_+^z)^{-1} - 
	((M_-^{z})^{-1})^* N_{+}^{z} (M_+^z)^{-1}
	\end{align}
	and they vanish by \eqref{eq-MNId2}. The  diagonal terms are 
	\begin{align}
	((M_+^z)^{-1})^* (1 + (N_+^z)^* N_+^z )  (M_+^z)^{-1} ,  \\ \notag
	(  (M_-^z)^{-1})^* (1 + (N_-^z)^* N_-^z )  (M_-^z)^{-1} ,
	\end{align}
	and they are  both equal to  $\mathbf{1}$, see \eqref{I6} and \eqref{I2}.
	This proves the unitary of $\Ss^z$.
\end{proof}

\section{Analysis of the Wronskian }

It follows from \eqref{eq-MNId0} and 
Definition \ref{def-Scat} that the Wronskian is tightly connected to the scattering matrix. In particular, the invertibility of $M^z_{\pm}$ is essential for its definition.   Since the purpose of the present paper is the analysis of the scattering matrix as $z$ tends to $1$, it is crucial to study the behavior of $  W(u_{-}^{1/\overline{z}}, u_{+}^z) =
 (\nu^z)^{-1} M^{z}_{+}  $ as $z$ tends to $1$ (the study of $M_-^z$ is carried out using \eqref{eq-MNId2}). Regularity properties of $u_{+}^z$  as $z$ tends to $1$ are {thus} relevant. As stated above, this will be deduced from the regularity results on $\Phi^z$. Indeed, it turns out that these properties of $ \Phi^z $ allow  to identify lower order terms of  $  W(u_{-}^{1/\overline{z}}, u_{+}^z)$ with respect to $|z-1|  $. This {holds  because} 
$  W(u_{-}^{1/\overline{z}}, u_{+}^z)$ can be written in terms of 
  $  W(\Phi^{\overline{z}}, u_{+}^z)$  and $  W(u_{-}^{1/\overline{z}}, \Phi^z)$ (see Lemma \ref{lema_w}). Then, most of this section is devoted to the study of   $  W(\Phi^{\overline{z}}, u_{+}^z)$  and  $  W(u_{-}^{1/\overline{z}}, \Phi^z)$.  

\vspace{.1cm}

From the  definition of $u_+^1$, we know that $u_{+}^1(j)$ tends to {${\bf 1}$}  
as $j$ tends to infinity. Then, for large enough $j$,    $u_{+}^1(j)$ is invertible.  In order to simplify notations, we assume that   $u_{+}^1(1)$ 
is already invertible. This does not imply any restriction because, translating the origin, we can always take it for granted.  

\begin{as}
We assume, without loss of generality, that  $u_{+}^1(1)$ 
is invertible. 
\end{as}

\begin{lemma}\label{lema_w}
	For  every $z \in \overline{\mathbb{D}}\setminus\{0\}$, it follows that 
	\begin{equation}\label{idwron}
  W(u_{-}^{1/\overline{z}}, u_{+}^z) =
  u_-^{1/\bar{z}}(1)^*({u_+^1(1)}^*)^{-1} W(\Phi^{\overline{z}}, u_{+}^z)
  +  W(u_{-}^{1/\overline{z}}, \Phi^z)   u_+^1(1)^{-1}u_+^z(1).
	\end{equation}
\end{lemma}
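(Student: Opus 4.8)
The identity \eqref{idwron} is a purely algebraic/bilinearity statement about the Wronskian, combined with the observation that $\Phi^z$ and $u_+^z$ solve the same Schr\"odinger equation, so that $W(\Phi^{\overline z},u_+^z)$ is $n$-independent. The strategy is to evaluate both sides at the single site $n=1$, where the three solutions $u_-^{1/\overline z}$, $u_+^z$, $\Phi^z$ are known concretely enough to relate them to $u_+^1$. The key structural fact I will exploit is that $\Phi^z$ is, by Definition~\ref{def-Phi}, the solution of $Hu=Eu$ with initial data $\Phi^z(0)=u_+^1(0)$, $\Phi^z(1)=u_+^1(1)$; in particular $\Phi^z$ agrees with $u_+^1$ at the sites $0$ and $1$, while $u_+^z$ does not (it agrees with $u_+^1$ only in the limit $z\to1$). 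So the right-hand side is engineered to express $u_+^z$ near $n=1$ through $\Phi^z$ together with the ``correction matrix'' $u_+^1(1)^{-1}u_+^z(1)$, which is invertible by the Assumption that $u_+^1(1)$ is invertible (and $u_+^z(1)$ is invertible for $z$ near $1$).

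\textbf{Main computation.} First I would recall the bilinearity rules \eqref{to1}, i.e. $W(uM,v)=M^*W(u,v)$ and $W(u,vM)=W(u,v)M$ for constant $M\in\Mm_{L\times L}$, and the fact that for any two solutions of $Hu=Eu$, $Hv=\overline E v$ the Wronskian $W(u,v)(n)$ is independent of $n$ (here all of $u_-^{1/\overline z},\Phi^z,u_+^z$ are solutions for energy $E=z+1/z$ or its conjugate, so every Wronskian appearing is $n$-independent and the argument $n$ may be dropped consistently). Then I would write $u_+^z$ at the two relevant sites $n=0,1$ in terms of $\Phi^z$ at those sites. The point: $\Phi^z(0)=u_+^1(0)$ and $\Phi^z(1)=u_+^1(1)$, hence $\Phi^z(1)=u_+^1(1)$ is invertible, and I claim
$$
u_+^z(n)\;=\;\Phi^z(n)\,u_+^1(1)^{-1}u_+^z(1)\qquad\text{for }n=0,1,
$$
simply because $\Phi^z$ and $u_+^z$ are both solutions and this relation holds at $n=1$ trivially and — crucially — the combination on the right is \emph{not} forced to equal $u_+^z$ globally; rather, one only uses it through the Wronskian. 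More precisely: $\Phi^z\,u_+^1(1)^{-1}u_+^z(1)$ is the solution with data $u_+^1(0)u_+^1(1)^{-1}u_+^z(1)$ at $0$ and $u_+^z(1)$ at $1$; it agrees with $u_+^z$ at $n=1$ but in general not at $n=0$. Writing $u_+^z(0)=\big(\Phi^z(0)u_+^1(1)^{-1}u_+^z(1)\big)+\big(u_+^z(0)-u_+^1(0)u_+^1(1)^{-1}u_+^z(1)\big)$ and noting the second bracket is (a constant matrix times) the value at $0$ of a solution proportional to $\Phi^1=u_+^1$ evaluated against $u_-^{1/\overline z}$... — this is where the identity $W(u_+^1,u_+^z)=0$ type relations (cf. \eqref{Wid1}, extended off the circle via analyticity of the Wronskian) enter to kill the unwanted term. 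I would instead proceed cleanly by inserting, inside $W(u_-^{1/\overline z},u_+^z)$ evaluated at $n=1$ using the defining formula \eqref{eq-WronskiDef}, the decomposition of $u_+^z(1)$ and $u_+^z(2)$ through $\Phi^z$ and through $u_+^1$, using that $\Phi^z(1)=u_+^1(1)$; then regroup the four resulting terms, recognizing $u_-^{1/\bar z}(1)^*(u_+^1(1)^*)^{-1}W(\Phi^{\bar z},u_+^z)$ from the pair of terms carrying $\Phi^{\bar z}$ on the left slot, and $W(u_-^{1/\bar z},\Phi^z)u_+^1(1)^{-1}u_+^z(1)$ from the pair carrying $\Phi^z$ on the right slot with the constant matrix $u_+^1(1)^{-1}u_+^z(1)$ pulled out via \eqref{to1}.

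\textbf{Main obstacle.} The delicate point is bookkeeping: the Wronskian at $n=1$ involves both $n=1$ and $n=2$ values, and $\Phi^z$ agrees with $u_+^1$ only at $n=0,1$, not at $n=2$, so one cannot naively substitute $\Phi^z(2)$ for anything. The correct route is to keep $n=1$ as the evaluation point throughout, use $n$-independence of each Wronskian to move to whatever site is convenient for each term separately (e.g. $W(\Phi^{\bar z},u_+^z)$ and $W(u_-^{1/\bar z},\Phi^z)$ can each be evaluated at $n=1$), and then verify the identity by matching the $u(2)$-terms and the $u(1)$-terms of \eqref{eq-WronskiDef} on both sides. So I expect the ``hard part'' to be purely the linear-algebra regrouping at $n=1$ — writing $u_-^{1/\bar z}(1)^* u_+^z(1)$ and $u_-^{1/\bar z}(2)^* u_+^z(1)$ (and the swapped terms) as the asserted sum — and in particular checking that the two factors $u_-^{1/\bar z}(1)^*(u_+^1(1)^*)^{-1}$ and $u_+^1(1)^{-1}u_+^z(1)$ are exactly what is produced by \eqref{to1} when one inserts $\Phi^z(1)=u_+^1(1)$ into the middle of $u_-^{1/\bar z}(\cdot)^*\,u_+^1(1)\,u_+^1(1)^{-1}\,u_+^z(\cdot)$. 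No analysis or estimates are needed here — this lemma is algebraic and holds for every $z\in\overline{\DM}\setminus\{0\}$ for which $u_+^1(1)$ is invertible, which is guaranteed by the standing Assumption.
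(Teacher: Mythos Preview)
Your instinct that the lemma is purely algebraic and should be checked at a single site is correct, and the paper does exactly that. However, two points deserve correction.

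First, the paper evaluates the Wronskians at $n=0$, not $n=1$. This is the natural choice: the Wronskian at $n=0$ involves only the values at sites $0$ and $1$, and by Definition~\ref{def-Phi} one has $\Phi^z(0)=u_+^1(0)$ and $\Phi^z(1)=u_+^1(1)$ (for \emph{every} $z$, not just $z=1$). So at $n=0$ every occurrence of $\Phi^z(\cdot)$ and $\Phi^{\overline z}(\cdot)$ in the right-hand side may be replaced by the corresponding $u_+^1(\cdot)$, and no values of $\Phi^z$ outside $\{0,1\}$ are ever needed. Your worry about $\Phi^z(2)$ simply disappears.

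Second --- and this is the actual gap --- you never identify the mechanism that makes the cross terms cancel. Expanding both sides of \eqref{idwron} at $n=0$ one finds that the identity reduces to
\[
(u_+^1(1)^*)^{-1}\,u_+^1(0)^*\;=\;u_+^1(0)\,u_+^1(1)^{-1},
\]
and this is precisely the statement $W(u_+^1,u_+^1)(0)=0$, i.e.\ $u_+^1(1)^*u_+^1(0)=u_+^1(0)^*u_+^1(1)$, rewritten after multiplying by $(u_+^1(1)^*)^{-1}$ on the left and $u_+^1(1)^{-1}$ on the right. This is exactly the special case $z=1$ of \eqref{Wid1}. Your vague appeal to ``$W(u_+^1,u_+^z)=0$ type relations \ldots\ extended off the circle via analyticity'' misses the point: no analytic continuation is involved, only the self-Wronskian identity $W(u_+^1,u_+^1)=0$ at a single site. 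Once you use this, the proof is a three-line expansion; the detours through decomposing $u_+^z$ as $\Phi^z$ times a correction matrix are unnecessary.
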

\begin{proof}
The result follows from an expansion of the right hand side of \eqref{idwron}  using Definition~\ref{def-Phi} of $\Phi^z$ and the definition of the Wronskians, evaluated on $n = 0$,  and the identity (see  \eqref{Wid1}) 
	\begin{equation*}
	{(u_+^1(1)^*)}^{-1}u_+^1(0)^* = u_+^1(0)u_+^1(1)^{-1}. 
	\end{equation*}
Notice that, by definition, $\Phi^z(j) =u_+^1(j)  $, for $j \in \{0, 1 \}$.
\end{proof}

\begin{proposi}\label{PW1}
The following formula holds true
\begin{align}	
 W(\Phi^{\overline{z}}, u_{+}^z) =i (1-z)\mathbf{1}+o(|z-1|),
\end{align}
as $z$ tends to $1$ in $\overline{\mathbb{D}}$. 
	
\end{proposi}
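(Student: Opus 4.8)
The plan is to compute the Wronskian $W(\Phi^{\overline z},u_+^z)$ using its $n$-independence, evaluating it at a convenient index, together with the regularity estimate for $\Phi^z$ from Lemma~\ref{princest1}. Since $\Phi^z$ solves $H\Phi^z=E\Phi^z$ with $E=\overline z+1/\overline z$ (note $E$ is real when $z\in\SM^1$, but for $z\in\overline{\DM}$ one must be slightly careful: the natural pairing is $W(\Phi^{\overline z},u_+^z)$ precisely so that both factors correspond to the same energy $E=z+1/z$ and the Wronskian is $n$-independent). First I would write
\begin{equation*}
W(\Phi^{\overline z},u_+^z)\;=\;i\big(\Phi^{\overline z}(n+1)^*u_+^z(n)-\Phi^{\overline z}(n)^*u_+^z(n+1)\big),
\end{equation*}
and then take $n\to\infty$. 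The difficulty is that $\Phi^z(n)$ itself does not have a simple closed form as $n\to\infty$; however, its \emph{difference} from $\Phi^1(n)=u_+^1(n)$ is controlled by $\|\Phi^z(n)-\Phi^1(n)\|\le C\,n\,|z-1|\,|z|^{-n}$ from Lemma~\ref{princest1}, and $\Phi^1(n)=u_+^1(n)=z^n(\one+o(1))|_{z=1}\to\one$.

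The key idea is therefore to split $\Phi^{\overline z}=\Phi^1+(\Phi^{\overline z}-\Phi^1)$ and similarly use $u_+^z(n)=z^n(\one+o(1))$. The main term comes from $W(\Phi^1,u_+^z)=W(u_+^1,u_+^z)$, which I would evaluate in the limit $n\to\infty$ using $u_+^1(n)\to\one$ and $u_+^z(n)=z^n(\one+o(1))$: this gives
\begin{equation*}
W(u_+^1,u_+^z)\;=\;\lim_{n\to\infty}i\big(u_+^1(n+1)^*u_+^z(n)-u_+^1(n)^*u_+^z(n+1)\big)\;=\;i\,\lim_{n\to\infty}(z^n-z^{n+1})\,(\one+o(1)).
\end{equation*}
For $z\in\DM$ this limit is $0$, so one needs more care: the correct reading is that $W(\Phi^{\overline z},u_+^z)$ is $n$-independent and one should evaluate it at a \emph{fixed} $n$ (say $n=0$ or $n=1$) while expanding each factor in powers of $z-1$. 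At $z=1$ one has $\Phi^1=u_+^1$ and $W(u_+^1,u_+^1)=0$ by \eqref{Wid1}, so the zeroth-order term vanishes, consistent with the claimed $i(1-z)\one+o(|z-1|)$. The linear term in $z-1$ must then be extracted: write $u_+^z=u_+^1+(u_+^z-u_+^1)$ and $\Phi^{\overline z}=\Phi^1+(\Phi^{\overline z}-\Phi^1)$, so that
\begin{equation*}
W(\Phi^{\overline z},u_+^z)=W(\Phi^1,u_+^1)+W(\Phi^1,u_+^z-u_+^1)+W(\Phi^{\overline z}-\Phi^1,u_+^1)+W(\Phi^{\overline z}-\Phi^1,u_+^z-u_+^1).
\end{equation*}
The first term is $0$; the last is $o(|z-1|)$ by the regularity estimates (it is $O(|z-1|^2)$ pointwise and uniformly controlled). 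The two cross terms carry the linear contribution.

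To evaluate the cross terms I would again use $n$-independence and pass to $n\to\infty$, where $u_+^z(n)-u_+^1(n)=(z^n-1)(\one+o(1))$ — but this still does not converge for $z\in\DM$, which signals that the cleanest route is to evaluate at $n\to+\infty$ only after recognizing that $\Phi^{\overline z}-\Phi^1$ decays like $|z|^{-n}$ while $u_+^z$ grows like $|z|^{n}$, so their product stays bounded. Concretely, I expect the argument to run: by $n$-independence, $W(\Phi^{\overline z},u_+^z)=\lim_{n\to\infty}i(\Phi^{\overline z}(n+1)^*u_+^z(n)-\Phi^{\overline z}(n)^*u_+^z(n+1))$; insert $u_+^z(n)=z^n(\one+o(1))$ and $\Phi^{\overline z}(n)^*=\overline{z}^{\,n}(\one+o(1))^*+(\Phi^{\overline z}(n)-\Phi^1(n))^*$ using that $\Phi^1(n)=u_+^1(n)=z^n(\one+o(1))|_{z=1}$ and the Lemma~\ref{princest1} bound forces $(\Phi^{\overline z}(n)-\Phi^1(n))^*u_+^z(n)\to 0$ as $n\to\infty$ after multiplying, since $n|z-1|\,|z|^{-n}\cdot|z|^n=n|z-1|\not\to 0$ — so this term does \emph{not} vanish and in fact contributes. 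This is the main obstacle: correctly bookkeeping which products of a decaying factor and a growing factor survive the limit $n\to\infty$ and summing their contributions. I anticipate that after carefully carrying the $o(1)$ error terms and using the Volterra equation \eqref{6} to express $\Phi^{\overline z}(n)-z^n\one$ and $u_+^z(n)-z^n\one$ as explicit sums involving $V$, the surviving linear-in-$(z-1)$ term telescopes to exactly $i(1-z)\one$, with everything else absorbed into $o(|z-1|)$ via \eqref{shortrange} and dominated convergence. The hard part is precisely this limit-interchange and telescoping; the rest is routine estimation using Lemmas~\ref{Remarkk}, \ref{boundm}, and~\ref{princest1}.
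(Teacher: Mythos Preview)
Your overall strategy---exploit the $n$-independence of $W(\Phi^{\overline z},u_+^z)$, send $n\to\infty$, and control errors with Lemma~\ref{princest1}---is the paper's strategy too. But the four-term splitting
\[
W(\Phi^{\overline z},u_+^z)=W(\Phi^1,u_+^1)+W(\Phi^1,u_+^z-u_+^1)+W(\Phi^{\overline z}-\Phi^1,u_+^1)+W(\Phi^{\overline z}-\Phi^1,u_+^z-u_+^1)
\]
is where you lose your footing. Only the left-hand side and the first summand are Wronskians of two \emph{solutions} with conjugate energies, hence $n$-independent; the remaining three summands involve $u_+^z-u_+^1$ or $\Phi^{\overline z}-\Phi^1$, which solve no single eigenvalue equation, so those expressions depend on $n$. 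You therefore cannot evaluate them one by one in the limit $n\to\infty$ (indeed, for $z\in\DM$ the second term diverges and the third term diverges in the opposite way; only their sum stays finite). This is exactly the source of the circular discussion about which products converge.

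The paper avoids this entirely by never splitting the Wronskian. It substitutes the representation \eqref{phi_1,natp} for $\Phi^{\overline z}$ (not the Jost Volterra equation \eqref{6}) directly into $i\big(\Phi^{\overline z}(n+1)^*z^n-\Phi^{\overline z}(n)^*z^{n+1}\big)$ and uses the elementary scalar identities
\[
s^z(k+1)z^n-s^z(k)z^{n+1}=z^{\,n-k},\qquad \tau^z(n+1)z^n-\tau^z(n)z^{n+1}=1-z,
\]
to obtain, after $n\to\infty$, the closed formula
\[
W(\Phi^{\overline z},u_+^z)=i(b-a)^*-i(z-1)a^*-i\sum_{j\ge 1}z^{j}\,\Phi^{\overline z}(j)^*V(j).
\]
Only now does one replace $\Phi^{\overline z}$ by $\Phi^1$ in the sum (error $o(|z-1|)$ by Lemma~\ref{princest1} plus dominated convergence) and then $z^j$ by $1+j(z-1)$ (error $o(|z-1|)$ since $\Phi^1=u_+^1$ is bounded for $j\ge 1$). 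The cancellation down to $i(1-z)\one$ then comes from the identities \eqref{l67},
\[
b-a=\sum_{j\ge 1}V(j)\Phi^1(j),\qquad a=\one-\sum_{j\ge 1}jV(j)\Phi^1(j),
\]
which you never invoke; without them there is no mechanism by which the constant-order term vanishes and the linear coefficient becomes exactly~$\one$. So the missing ingredients are: (i) work with the explicit formula \eqref{phi_1,natp} for $\Phi^{\overline z}$ rather than a bilinear split, and (ii) close with \eqref{l67}.
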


\begin{proof} 
{Equation}  \eqref{phi_1,natp} {for $\Psi^z=\Phi^z$ together with the bounds from Lemma~\ref{Remarkk} } imply that  {for $n\in\NM$}
\begin{align}
{|z|^n}	\|\Phi^z(n)\| \leq C_{z}+ \sum_{j= 1}^{n-1} C_z   {|z|^j} \ \|V(j)\| \  \|\Phi^z(j)\|, 
\end{align}
for some positive function $C_{z}$ {depending on} $z$ (that blows up as $z$ tends to $1$
because in Equation~\eqref{eq-sz} for $s^z$ a cancellation of the factor $\frac{1}{1-z}$ implies growth in $n$). Then, 
 Gronwall's lemma implies that $ {|z|^n}	|\Phi^z(n)|  $ is bounded (with respect to $n$, for positive $n$).
 This implies that (see {Equation}  \eqref{wasu})
 \begin{align*}
 W(\Phi^{\overline{z}}, u_{+}^z) & = \lim_{n \to \infty } 
 i( \Phi^{\overline{z}}(n+1)^* z^n -  \Phi^{\overline{z}}(n)^* z^{n+1})
\\ 
&=  i(b-a)^*- i (z-1)a^* - i \sum_{j=1}^{ \infty} z^{j}\Phi^{\overline{z}}(j)^*V(j),  
\end{align*}  
where { in the second step we used \eqref{phi_1,natp} and  Definition~\ref{FreeSolo1} of $s^z$ and $\tau^z$}.  On the other hand, 
\begin{align}\label{chon1}
 \sum_{j=1}^{ \infty} z^{j}(\Phi^{\overline{z}} - \Phi^{1}  )(j)^*V(j) = o(|1-z|). 
\end{align}
Indeed, Equations \eqref{phi_1-z} and 
\eqref{shortrange} imply that the  series multiplied by $\frac{1}{1-z}$ is bounded by  a summable function that does not depend on $z$,  and \eqref{dos} implies that each term of the sum  multiplied by $\frac{1}{1-z}$ tends to zero. Hence interpreting the series as an integral with respect to a counting measure, Lebesgue's dominated convergence theorem shows \eqref{chon1}. Furthermore, since $ \Phi^{1} $ is bounded for $j > 0$ (because $  \Phi^{1}(j) = u^1_+(j) $, for $j >1$) it follows that 
\begin{align} \label{chon2}
\sum_{j=1}^{ \infty} (z^{j} -  1 - j (z - 1)  ) \Phi^{1}  (j)^*V(j) = o(|1-z|).
\end{align}
Then, we obtain that 
 \begin{align*}
 W(\Phi^{\overline{z}}, u_{+}^z) =  i(b-a)^*-i (z-1)a^* -  i  \sum_{j=1}^{ \infty}  (1 + j(z-1))  \Phi^{1}(j)^*V(j) + o(|1-z|) . 
\end{align*} 
This last equation and the fact that (see  \eqref{6})
	\begin{equation}\label{l67}
	\begin{aligned}
	(b-a) &= \sum_{j=1}^{\infty}V(j)u_+^1(j)=  \sum_{j=1}^{\infty}V(j)\Phi^1(j),\\
	a&= \mathbf{1} - \sum_{j=1}^{\infty}jV(j)u_+^1(j)=\mathbf{1} - \sum_{j=1}^{\infty}jV(j)\Phi^1(j),
	\end{aligned}
	\end{equation}
imply the desired result.	\end{proof}

\begin{lemma}\label{lemWigual}
The following formula holds true
\begin{align}
W(u_-^1,u_+^1) = - i  \sum_{j=-\infty}^{\infty}V(j)u_{+}^1(j)  .
\end{align}

\end{lemma}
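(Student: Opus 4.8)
The plan is to evaluate the (constant) Wronskian $W(u_-^1,u_+^1)$ by a telescoping/summation argument, exploiting the fact that both $u_-^1$ and $u_+^1$ solve the same real eigenvalue equation at $E=2$. First I would recall that since $Hu_+^1 = 2u_+^1$ and $Hu_-^1 = 2u_-^1$, the Wronskian $W(u_-^1,u_+^1)(n)$ is independent of $n$; the strategy is to compute $W(u_-^1,u_+^1)(N) - W(u_-^1,u_+^1)(-N)$, which vanishes, by summing the discrete derivative $W(u_-^1,u_+^1)(n) - W(u_-^1,u_+^1)(n-1)$ over $n$. A direct computation from the definition \eqref{eq-WronskiDef}, using the Schr\"odinger equation $u_+^1(n+1) + u_+^1(n-1) = (2\mathbf{1} - V(n))u_+^1(n)$ and the analogous one for $u_-^1$, gives
\begin{align*}
W(u_-^1,u_+^1)(n) - W(u_-^1,u_+^1)(n-1)
\;=\; i\,u_-^1(n)^* V(n) u_+^1(n)\;,
\end{align*}
so that the telescoped sum over $n\in\{-N+1,\dots,N\}$ yields
$0 = W(u_-^1,u_+^1)(N) - W(u_-^1,u_+^1)(-N) = i\sum_{n=-N+1}^{N} u_-^1(n)^*V(n)u_+^1(n)$.

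Next I would pass to the limit $N\to\infty$. This requires knowing that $u_-^1(n)^*$ behaves like $\mathbf{1}+o(1)$ as $n\to\infty$ (by the defining asymptotics \eqref{wasu} with $z=1$, which gives $u_-^{1}(n) = \mathbf{1}+o(1)$ as $n\to-\infty$ — one must be a bit careful here, since $W(u_-^1,u_+^1)$ uses $u_-^1$, not $u_-^{1/z}$ for $z\neq1$; at $z=1$ these coincide) and that $\sum_n \|V(n)\|\|u_+^1(n)\| < \infty$, which follows from \eqref{shortrange} together with the boundedness of $u_+^1$. Actually, to get the stated identity with $\sum_j V(j)u_+^1(j)$ rather than $\sum_j u_-^1(j)^*V(j)u_+^1(j)$, I would evaluate the constant Wronskian directly at $n\to+\infty$ instead: since $u_+^1(n) = \mathbf{1}+o(1)$ and $u_-^1(n) = n(\mathbf{1}+o(1))$ do \emph{not} both tend to something simple there, the cleaner route is to evaluate $W(u_-^1,u_+^1)$ at $n\to-\infty$ where $u_-^1(n)=\mathbf{1}+o(1)$, combined with the Volterra-type representation of $u_+^1$ analogous to \eqref{6}. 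Concretely, from $u_+^1(n) = \mathbf{1} - \sum_{j=n+1}^\infty s^1(j-n)V(j)u_+^1(j)$ with $s^1(m) = -m$ (from \eqref{eq-sz} at $z=1$: $s^1(n) = (\pm1)^{n+1}n$ gives $s^1(n)=-n$), one computes $u_+^1(n+1)$ and $u_+^1(n)$ for $n\to-\infty$ and plugs into $W(u_-^1,u_+^1) = \lim_{n\to-\infty} i(u_-^1(n+1)^* u_+^1(n) - u_-^1(n)^* u_+^1(n+1))$, using $u_-^1(n)=\mathbf{1}+o(1)$.

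Carrying this out: for $n\to-\infty$, $u_+^1(n) = \mathbf{1} - \sum_{j=n+1}^\infty s^1(j-n)V(j)u_+^1(j)$ and $u_+^1(n+1) = \mathbf{1} - \sum_{j=n+2}^\infty s^1(j-n-1)V(j)u_+^1(j)$, so
\begin{align*}
u_+^1(n) - u_+^1(n+1) \;=\; -\sum_{j=n+2}^\infty \big(s^1(j-n) - s^1(j-n-1)\big)V(j)u_+^1(j) \;-\; s^1(1)V(n+1)u_+^1(n+1)\;,
\end{align*}
and since $s^1(m)-s^1(m-1) = -1$ and $s^1(1) = -1$, this telescopes to $\sum_{j=n+1}^\infty V(j)u_+^1(j) \to \sum_{j=-\infty}^\infty V(j)u_+^1(j)$ as $n\to-\infty$. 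Then $W(u_-^1,u_+^1) = \lim_{n\to-\infty} i\,(\mathbf{1}+o(1))^*\big(u_+^1(n) - u_+^1(n+1)\big) = i\sum_{j=-\infty}^\infty V(j)u_+^1(j)$, which is off by a sign from the claimed formula — so I would instead write $W(u_-^1,u_+^1) = \lim i(u_-^1(n+1)^* u_+^1(n) - u_-^1(n)^* u_+^1(n+1))$ and track the order of the two terms carefully, or equivalently use $W(u,v)^* = W(v,u)$ and the telescoping identity above to fix the sign. The main obstacle is precisely this bookkeeping: ensuring the asymptotics of $u_-^1$ at $-\infty$ (only the leading order $\mathbf{1}$ matters, the $o(1)$ correction must be controlled against the $\ell^1$ tail of $V$), getting the sign right in $s^1$ and in the Wronskian's antisymmetry, and justifying the interchange of limit and sum via dominated convergence using \eqref{shortrange} and boundedness of $u_+^1$ on $[1,\infty)$ (which is part of Definition~\ref{Jost}) — boundedness of $u_+^1$ on all of $\ZM$ is not needed here since $V$ has finite first moment and $u_+^1(j)$ grows at most linearly as $j\to-\infty$, but one should note $\|V(j)\|\,\|u_+^1(j)\|$ is summable because $\sum_j|j|\|V(j)\|<\infty$ dominates the linear growth.
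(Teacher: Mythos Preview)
Your second approach---evaluating the constant Wronskian as $n\to-\infty$ via the Volterra representation \eqref{6} for $u_+^1$ and the asymptotics of $u_-^1$---is essentially the paper's route. (Your first telescoping idea is a dead end: since $W$ is constant the telescoped sum is identically zero and gives no information about $W$ itself.) However, there are two concrete problems.

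First, a sign slip that explains your confusion: from \eqref{eq-sz} one has $s^1(n)=(1)^{n+1}n=n$, not $-n$; this is also forced by $s^1(0)=0$, $s^1(1)=1$ in Definition~\ref{FreeSolo1}. With the correct sign the Volterra equation reads $u_+^1(n)=\mathbf{1}-\sum_{j\geq n+1}(j-n)V(j)u_+^1(j)$ (as in the paper's proof), your difference becomes $u_+^1(n)-u_+^1(n+1)=-\sum_{j\geq n+1}V(j)u_+^1(j)\to -\sum_{j\in\ZM}V(j)u_+^1(j)$, and the claimed formula $W=-i\sum_j V(j)u_+^1(j)$ drops out with no sign to fix.

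Second, a genuine gap: you replace $u_-^1(n)^*$ and $u_-^1(n+1)^*$ by $\mathbf{1}$ using only $u_-^1(n)=\mathbf{1}+o(1)$. But by Lemma~\ref{boundm} one only has $\|u_+^1(n)\|\leq C|n|$ as $n\to-\infty$, so the remainder $(u_-^1(n+1)^*-\mathbf{1})u_+^1(n)-(u_-^1(n)^*-\mathbf{1})u_+^1(n+1)$ is a priori of indeterminate type $o(1)\cdot O(|n|)$. To close this you must regroup as
\[
\bigl(u_-^1(n+1)^*-u_-^1(n)^*\bigr)u_+^1(n)\;+\;\bigl(u_-^1(n)^*-\mathbf{1}\bigr)\bigl(u_+^1(n)-u_+^1(n+1)\bigr).
\]
The second summand tends to zero since $u_+^1(n)-u_+^1(n+1)$ converges; for the first, the Volterra equation for $u_-^1$ gives $\|u_-^1(n+1)-u_-^1(n)\|\leq C\sum_{j\leq n}\|V(j)\|$, so its product with $\|u_+^1(n)\|\leq C|n|$ is dominated by $C\sum_{j\leq n}|j|\|V(j)\|\to 0$ by \eqref{shortrange}. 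The paper sidesteps this subtlety by substituting the Volterra equations for \emph{both} $u_-^1$ and $u_+^1$ into $W(u_-^1,u_+^1)(n)$ from the outset and expanding exactly before letting $n\to-\infty$.
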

\begin{proof}
It follows from  Lemma \ref{soljost} that
\begin{equation}\label{2266}
\begin{aligned}
&u_{+}^1(n)=\mathbf{1} - \sum_{j=n+1}^{\infty} (j-n)V(j)u_{+}^1(j), \qquad n \in \mathbb{Z},\\
&u_{-}^1(n)=\mathbf{1} + \sum_{j=-\infty}^{n-1} (j-n)V(j)u_{-}^1(j), \qquad  n \in \mathbb{Z}.
\end{aligned}
\end{equation} 
Since $ |j V(j)|  $ tends to zero as $j$ tends to minus infinity (see \eqref{shortrange}) and there is a constant $ C  $ such that
$ |u^1_{+}(j)| \leq C |j|  $, for $j \leq 0$, (see Lemma \ref{boundm}), it follows that for $n\to\infty$ 
	\begin{equation*}
	\begin{aligned}
	W(u_-^1,u_+^1)(n) 
	= & i \left(\mathbf{1}+\sum_{j=-\infty}^{n} (j-n-1)u_{-}^1(j)^*V(j)\right)\left(\mathbf{1} - \sum_{j=n+1}^{\infty} (j-n)V(j)u_{+}^1(j)\right)\\
	&- i \left(\mathbf{1}+\sum_{j=-\infty}^{n-1} (j-n)u_{-}^1(j)^*V(j)\right)\left(\mathbf{1}-\sum_{j=n+2}^{\infty} (j-n -1)V(j)u_{+}^1(j)\right)	\\
	=  & -i\sum_{j=n+2}^{\infty}V(j)u_{+}^1(j) + i\sum_{j=-\infty}^{n}u_-^1(j)^* V(j)\sum_{j=n+1}^{\infty} (j-n)V(j)u_{+}^1(j) +o(1),
	\end{aligned}
	\end{equation*}
Here, we used that $u^1_{\pm}( \pm j)$ is bounded for $j \in \mathbb{N}$ and that   $\lim_{ m \to \infty}
\sum_{j = -\infty}^{-m}  |j| \|
V(j)\|  +  \sum^{\infty}_{j = m}  |j| \|
V(j)\| = 0  $. Taking the limit $n \to - \infty$ yields the desired result.    	 
\end{proof}

We define some technical objects that will be used in Section \ref{HB}. 

\begin{defini}\label{kernels}
We introduce the notations
\begin{align}\label{PH3}
\Nn:={\Ker} (  W(u_-^1,u_+^1) ),  \hspace{1cm}  \Ll:={\Ker} (W(u_+^1,u_-^1)). 
\end{align}
The generic case is referred as  $\mathcal{N}=\{0\}$, otherwise one speaks of the exceptional case.
In Lemma \ref{NM}, 
a characterization of  $\mathcal{N}$ and  $ \Ll $ in terms of half-bound states is presented, in particular we prove that  
$  (u_{+}^{1}(j) \xi)_{j \in \mathbb{Z}} $ is bounded, for every 
$  \xi \in \mathcal{N} $ ({\it i.e.}, it is a half-bound state, see Section \ref{HB}). For $\xi \in \Nn$, let us define
\begin{align}\label{gammasn}
\Gamma \xi  :=  \Big (\xi -  \sum_{j    =  - \infty}^{\infty} j V(j)  u_{+}^{1}(j) \xi \Big ).
\end{align}
In Lemma \ref{defGamma} another characterization of $\Gamma$ {is given and} in Lemma~\ref{biye} it is shown 
that $\Gamma$ is a bijection from $\mathcal{N}$ onto $\Ll$.  We denote by
\begin{align}\label{pene}
P_{\mathcal{N}},   \quad P_{\Ll}, \quad P_{\mathcal{N}^{\perp}},\quad P_{\Ll^{\perp}}
\end{align}
 the orthogonal projections onto $\mathcal{N}$, $\Ll$, $ \mathcal{N}^{\perp} $ and $\Ll^\perp$ respectively. 
\end{defini}

\begin{proposi}\label{la 4.3}
The following formula holds true
\begin{align}\label{teto1}
  W(u_{-}^{1/\overline{z}}, \Phi^z) =  W(u_{-}^{1}, u_{+}^1)  +o(1),
\end{align}
as $z$ tends to $1$ in $\overline{\mathbb{D}}$. 
Moreover, if $\xi \in \mathcal{N}   $, then 
\begin{align}\label{teto2}
  W(u_{-}^{1/\overline{z}}, \Phi^z)\xi  =    i(1-z) \Gamma\xi  +o(|z-1|),
\end{align}
as $z$ tends to $1$ in $\overline{\mathbb{D}}$.
\end{proposi}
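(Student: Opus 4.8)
The plan is to evaluate the Wronskian $W(u_{-}^{1/\overline{z}},\Phi^z)$ by exploiting its independence of $n$ (since $H\Phi^z = E\Phi^z$ and $Hu_-^{1/\overline z}=\overline E u_-^{1/\overline z}$ with $E=z+1/z$), so that one may compute it conveniently at $n=0$, namely
\begin{align*}
W(u_{-}^{1/\overline{z}},\Phi^z)
= i\big(u_-^{1/\overline z}(1)^*\Phi^z(0) - u_-^{1/\overline z}(0)^*\Phi^z(1)\big).
\end{align*}
Since $\Phi^z(0)=u_+^1(0)=a$ and $\Phi^z(1)=u_+^1(1)=b$ are independent of $z$, the whole $z$-dependence sits in $u_-^{1/\overline z}(0)$ and $u_-^{1/\overline z}(1)$. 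For \eqref{teto1} I would then use that $u_-^{1/\overline z}(n)$ is continuous on $\overline{\mathbb D}\setminus\{0\}$ (Lemma \ref{soljost}), so $u_-^{1/\overline z}(0)\to u_-^1(0)$ and $u_-^{1/\overline z}(1)\to u_-^1(1)$ as $z\to 1$; hence $W(u_-^{1/\overline z},\Phi^z)\to i(u_-^1(1)^*a - u_-^1(0)^*b) = W(u_-^1,u_+^1)(0)=W(u_-^1,u_+^1)$, which is \eqref{teto1}. Alternatively, and perhaps more robustly, one evaluates both Wronskians as $n\to -\infty$ using the Volterra representation \eqref{6} for $u_-^{1/\overline z}$ together with the bound $\|\Phi^z(n)\|\le C|n||z|^{-|n|}$ from Lemma \ref{boundm} and the estimate \eqref{dos}, then invokes dominated convergence exactly as in the proof of Proposition \ref{PW1}.

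For the refined expansion \eqref{teto2}, the extra structure is that $\xi\in\mathcal N = \Ker W(u_-^1,u_+^1)$, so the leading $O(1)$ term of \eqref{teto1} annihilates $\xi$ and we must extract the $O(|z-1|)$ coefficient. Here the natural move is to evaluate the Wronskian as $n\to-\infty$: using $u_-^{1/\overline z}(n) = z^{-n}(\one+o(1))$ as $n\to-\infty$ (Definition \ref{Jost}) and the Volterra equation for $u_-^{1/\overline z}$, one writes $W(u_-^{1/\overline z},\Phi^z)\xi$ as a limit that, after inserting \eqref{phi_1_entptt} (the representation of $\Phi^z(n)\xi$ adapted to $\xi\in\mathcal N$, involving $d_-(n)$, $\tau^z(n)$, and the telescoped sum of $s^z(n-j)-s^z(n)$), becomes a $z$-dependent series. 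The key inputs are the regularity bound \eqref{phi_1-z22} from Lemma \ref{princest2}, which controls $\frac{1}{z-1}(\Phi^z-\Phi^1)\xi$ on the negative half-line by $C|n||z|^{n}$, the estimates \eqref{esttau1} and \eqref{PPPH} on $\tau^z$ and $s^z$ near $z=1$, and the fact that $(u_+^1(j)\xi)_j$ is bounded (Lemma \ref{NM}/Definition \ref{kernels}) so that $\sum_j |j|\,\|V(j)\|\,\|u_+^1(j)\xi\|<\infty$. With these, a dominated-convergence argument identical in spirit to \eqref{chon1}--\eqref{chon2} lets me replace $\Phi^z$ by $\Phi^1=u_+^1$ inside the series at the cost of $o(|z-1|)$, replace $s^z(n-j)-s^z(n)$ and $z^{-n}$-type factors by their first-order Taylor expansions at $z=1$ (linear in $(z-1)$ with coefficients linear in $j$ or $n$), and collect the $O(|z-1|)$ coefficient.

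The arithmetic payoff should be that the surviving first-order coefficient is exactly $i(\xi - \sum_{j\in\ZM} jV(j)u_+^1(j)\xi) = i\,\Gamma\xi$ up to the sign/factor $(1-z)$; to see this cleanly I would compare with the computation of $W(\Phi^{\overline z},u_+^z)$ in Proposition \ref{PW1}, where the analogous $O(z-1)$ term produced $i(1-z)\one$ out of the expansion of $s^z$ and $\tau^z$ plus the identities \eqref{l67} for $a$ and $b-a$ — here the same mechanism, run with $d_-(n)$ in place and with $(b-a)\xi = -\sum_{j\le 0}V(j)\Phi^1(j)\xi$ from \eqref{ba}, yields $\Gamma$. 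The main obstacle I anticipate is bookkeeping the two competing first-order contributions: one from differentiating the plane-wave prefactors $z^{\pm n}$ (which is where the ``$\xi$'' in $\Gamma\xi$ comes from, via the $n\to-\infty$ limit of $n z^{-n}\cdot z^{n}$-type cancellations) and one from the $j$-weighted potential sum $\sum_j jV(j)u_+^1(j)\xi$; ensuring these combine with the correct signs, and that all the error terms are genuinely $o(|z-1|)$ uniformly enough for dominated convergence (which is precisely what Lemmas \ref{princest1} and \ref{princest2} are designed to guarantee), is the delicate part. A secondary subtlety is that the naive $n=0$ evaluation, while giving \eqref{teto1} immediately, hides the first-order term because it hinges on the $z$-dependence of $u_-^{1/\overline z}(0),u_-^{1/\overline z}(1)$ rather than of $\Phi^z$; so for \eqref{teto2} the $n\to-\infty$ route, which exposes $\Phi^z(n)\xi$'s expansion directly, is the one I would commit to.
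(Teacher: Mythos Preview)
Your proposal is correct and follows essentially the same route as the paper: \eqref{teto1} by continuity (evaluating at $n=0$ is a fine way to see it), and \eqref{teto2} by computing the Wronskian as a limit $n\to-\infty$, splitting $\Phi^z=\Phi^1+(\Phi^z-\Phi^1)$, invoking the regularity bound \eqref{phi_1-z22} and the boundedness of $u_+^1(j)\xi$ for dominated convergence, Taylor-expanding the plane-wave factors, and using \eqref{l67} together with the kernel condition $\sum_j V(j)\Phi^1(j)\xi=0$ to identify the first-order coefficient as $i(1-z)\Gamma\xi$. The paper streamlines the computation slightly by first reducing the Wronskian limit (via \eqref{phi_1_entp}) to the closed form $-i(b-a)+i\tfrac{1-z}{z}a-i\sum_{j\le 0}z^{-j}V(j)\Phi^z(j)$ and only then splitting $\Phi^z$, rather than inserting the full representation \eqref{phi_1_entptt} into the Wronskian; this cuts down the bookkeeping you anticipate, but the substance is the same.
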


\begin{proof}
{Equation}  \eqref{teto1} follows from the continuity of the functions 
$z \mapsto u_{-}^{1/\overline{z}}$ and $ z \mapsto \Phi^z $, and the fact that 
$  \Phi^1 =  u^1_+  $. Now we  prove \eqref{teto2}. 
 {Equation}  \eqref{phi_1_entp} implies that
\begin{align}
|z^{-n}|	|\Phi^z(n)| \leq C_{z}+ \sum_{j= n+ 1}^{0} C_z   |z^{-j}| \ \|V(j)\| \  \|\Phi^z(j)\|, 
\end{align}
for some positive function $C_{z}$ of $z$ (that blows up as $z$ tends to $1$). Then, 
 Gronwall's lemma implies that $ |z^{-n}|	|\Phi^z(n)|  $ is bounded (with respect to $n$, for  $n \leq 0$).
 This implies that (see {Equation}  \eqref{wasu}) 
 \begin{align*}
 W(u_{-}^{1/\overline{z}}, \Phi^z)  
 &= \lim_{n \to -  \infty } 
 i(  z^{-n-1}\Phi^{z}(n)  - z^{-n} \Phi^{z}(n+1) )
\\
& = - i(b-a)+i \frac{1-z}{z}a -  i \sum_{j=- \infty}^{ 0} z^{-j}V(j)\Phi^{z}(j),  
\end{align*}  
where \eqref{phi_1_entp} was used.  Utilizing   \eqref{dos}, \eqref{phi_1-z22}    and 
\eqref{shortrange} we deduce that (here we use again Lebesgue's dominated convergence theorem)
\begin{align}\label{chon1t}
 \sum_{j=- \infty}^{ 0} z^{-j}V(j)
 (\Phi^{z}(j)-  \Phi^1(j) ) \xi  = o(|1-z|). 
\end{align}
Since $ \Phi^{1}(j)\xi  $ is bounded (see Lemma \ref{NM}), it follows that for $j \leq  0$ 
\begin{align} \label{chon2t}
\sum_{j=-\infty}^{0} (z^{-j} -  1 - j ( 1 - z )  )  V(j) \Phi^{1}  (j)\xi = o(|1-z|).
\end{align}
Then 
 \begin{align*}
  W(u_{-}^{1/\overline{z}}, \Phi^z)\xi = - i(b-a) \xi +i \frac{1-z}{z} a \xi-   i  \sum_{j=-\infty }^{ 0}  (1 + j(1-z))  V(j)  \Phi^{1}(j) \xi  +  o(|1-z|).  
\end{align*} 
The  desired result from this last equation and \eqref{l67}. Notice that we use that $  \frac{1- z}{z} - (1-z) =  o(|1-z|) $ and that by assumption (see Lemma  \ref{lemWigual} and Definition \ref{kernels})  $ \sum_{j = - \infty}^\infty V(j)\Phi^1(j) \xi =0 $ and recall that $u_{+}^1 = \Phi^1$ . 
\end{proof}

\vspace{.1cm}

The  operator {$  u_-^{1/\overline{z}}(1)^*({u_+^1(1)}^*)^{-1} $} that appears in \eqref{idwron} plays an important role because, when $z = 1$, it operates on half-bound states (see Remark \ref{obs_inv}) and it is present in the scattering matrix, in the limit when $z$ tends to $1$. For this reason, we also introduce a notation for this object.    

\begin{defini}
We denote   
	\begin{equation}\label{omega}
	\Omega := u_+^1(1)^{-1}u_-^1(1)\in\Mm_{L\times L}.
	\end{equation}
\end{defini}

A different characterization of $\Omega$ is given in Remark \ref{obs_inv}.
 
\begin{proposi}\label{Tmain}
There exist  functions {$X,Y:\DM\to\Mm_{L\times L}$} such that
\begin{align}\label{main}
  W(u_{-}^{1/\overline{z}}, u_{+}^z)  = 
 \Big (& i (1 - z) ( \Omega^* + \Gamma ) P_{\mathcal{N}}
  +   W(u_{-}^{1}, u_{+}^1) P_{\mathcal{N}^{\perp}}   
    + X(z) + Y(z)P_{\mathcal{N}^{\perp}}  \Big ) \notag  \\ &\cdot   u_+^1(1)^{-1}u_+^z(1),
\end{align}
and $X(z) = o(|z-1|)$ and $ Y(z) = o(1)$, as $z $ tends to $1$ in 
$\overline{\mathbb{D}}$. 
\end{proposi}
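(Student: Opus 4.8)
The plan is to combine Lemma~\ref{lema_w}, which decomposes $W(u_{-}^{1/\overline{z}},u_{+}^z)$ into the two pieces $u_-^{1/\bar z}(1)^*({u_+^1(1)}^*)^{-1}W(\Phi^{\overline z},u_{+}^z)$ and $W(u_{-}^{1/\overline z},\Phi^z)u_+^1(1)^{-1}u_+^z(1)$, with the asymptotic expansions from Proposition~\ref{PW1} and Proposition~\ref{la 4.3}. By Proposition~\ref{PW1} we have $W(\Phi^{\overline z},u_{+}^z)=i(1-z)\mathbf 1+o(|z-1|)$, and by continuity $u_-^{1/\bar z}(1)^*({u_+^1(1)}^*)^{-1}=u_-^1(1)^*({u_+^1(1)}^*)^{-1}+o(1)=\big(u_+^1(1)^{-1}u_-^1(1)\big)^*+o(1)=\Omega^*+o(1)$. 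Multiplying these two expansions gives that the first term equals $i(1-z)\Omega^*+o(|z-1|)$, plus an $o(1)$-coefficient times $\mathbf 1$ coming from cross terms — but one must be careful: the $o(1)$ from $\Omega^*$ multiplied against the $i(1-z)\mathbf 1$ leading part of $W(\Phi^{\overline z},u_{+}^z)$ is $o(|z-1|)$, so actually the first term is cleanly $i(1-z)\Omega^*+o(|z-1|)$. This $o(|z-1|)$ remainder will eventually be absorbed into $X(z)$.

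Next I would treat the second term $W(u_{-}^{1/\overline z},\Phi^z)u_+^1(1)^{-1}u_+^z(1)$ by splitting the identity on the range as $\mathbf 1=P_{\mathcal N}+P_{\mathcal N^\perp}$ acting on the \emph{right} of $W(u_{-}^{1/\overline z},\Phi^z)$ — but note that $\mathcal N=\Ker W(u_-^1,u_+^1)$, and one needs the splitting applied so that Proposition~\ref{la 4.3} can be invoked on each block. Write $W(u_{-}^{1/\overline z},\Phi^z)=W(u_{-}^{1/\overline z},\Phi^z)P_{\mathcal N}+W(u_{-}^{1/\overline z},\Phi^z)P_{\mathcal N^\perp}$. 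On the $P_{\mathcal N}$ block, Equation~\eqref{teto2} gives $W(u_{-}^{1/\overline z},\Phi^z)P_{\mathcal N}=i(1-z)\Gamma P_{\mathcal N}+o(|z-1|)$ (applying \eqref{teto2} to a basis of $\mathcal N$ and using $\Gamma\xi\in\Ll$; here $\Gamma$ is extended by $0$ on $\mathcal N^\perp$ so that $\Gamma=\Gamma P_{\mathcal N}$). On the $P_{\mathcal N^\perp}$ block, Equation~\eqref{teto1} gives $W(u_{-}^{1/\overline z},\Phi^z)P_{\mathcal N^\perp}=W(u_-^1,u_+^1)P_{\mathcal N^\perp}+o(1)$, and here the $o(1)$ genuinely cannot be improved, which is exactly why the term $Y(z)P_{\mathcal N^\perp}$ with $Y(z)=o(1)$ is needed in the statement rather than an $o(|z-1|)$ error. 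Note that $W(u_-^1,u_+^1)P_{\mathcal N^\perp}=W(u_-^1,u_+^1)$ since $\mathcal N=\Ker W(u_-^1,u_+^1)$, so one may present this block either way; keeping the $P_{\mathcal N^\perp}$ makes the bookkeeping transparent.

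Assembling, the bracketed factor multiplying $u_+^1(1)^{-1}u_+^z(1)$ on the right becomes
\begin{align*}
i(1-z)\Omega^* + i(1-z)\Gamma P_{\mathcal N} + W(u_-^1,u_+^1)P_{\mathcal N^\perp} + X(z) + Y(z)P_{\mathcal N^\perp},
\end{align*}
where $X(z)$ collects all the $o(|z-1|)$ remainders (from Proposition~\ref{PW1} times $\Omega^*$, from the $o(1)$ correction to $\Omega^*$ times $i(1-z)$, and from the $o(|z-1|)$ in \eqref{teto2}), and $Y(z)$ is the $o(1)$ from \eqref{teto1}. Since $\Gamma P_{\mathcal N}=\Gamma$ and one may write $\Omega^*=\Omega^*P_{\mathcal N}+\Omega^*P_{\mathcal N^\perp}$ — wait, this is the subtle point: the statement writes $i(1-z)(\Omega^*+\Gamma)P_{\mathcal N}$, i.e.\ only the $P_{\mathcal N}$-part of $\Omega^*$ appears at order $|z-1|$. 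The resolution is that $\Omega^*P_{\mathcal N^\perp}$ times $i(1-z)$ is $O(|z-1|)$, but this must be shown to be absorbable: in fact I expect one argues that $i(1-z)\Omega^*P_{\mathcal N^\perp}$ can be merged into $Y(z)P_{\mathcal N^\perp}$ since $i(1-z)\Omega^*=o(1)$ as $z\to1$ (indeed $|1-z|\to 0$), giving a contribution of the form $Y'(z)P_{\mathcal N^\perp}$ with $Y'(z)=i(1-z)\Omega^*=o(1)$. So one sets $Y(z)$ to be the sum of the $o(1)$ term from \eqref{teto1} and $i(1-z)\Omega^*$. This yields exactly \eqref{main}.

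The main obstacle I anticipate is the careful separation of error terms: distinguishing which remainders are genuinely $o(|z-1|)$ (and go into $X(z)$) from those that are only $o(1)$ (and must be attached to the $P_{\mathcal N^\perp}$ projector as $Y(z)P_{\mathcal N^\perp}$), and verifying that all $o(1)$-errors indeed come pre-multiplied by $P_{\mathcal N^\perp}$ on the appropriate side so that the decomposition is consistent — in particular checking that the $o(1)$ in \eqref{teto1} can legitimately be written with a trailing $P_{\mathcal N^\perp}$, which follows because \eqref{teto1} was only used on the range of $P_{\mathcal N^\perp}$. The algebra with $\Omega^*$, $\Gamma$, and the projections, together with right-multiplication by the uniformly convergent factor $u_+^1(1)^{-1}u_+^z(1)\to \mathbf 1$, is otherwise routine.
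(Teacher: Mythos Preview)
Your proposal is correct and follows exactly the approach of the paper, which in fact gives only a one-line proof citing Lemma~\ref{lema_w}, Propositions~\ref{PW1} and~\ref{la 4.3}, and continuity of $u_-^{1/\bar z}(1)$ and $u_+^z(1)$; you have simply filled in the bookkeeping that the paper leaves implicit. The one point you gloss over slightly is that in Lemma~\ref{lema_w} the first summand $u_-^{1/\bar z}(1)^*(u_+^1(1)^*)^{-1}W(\Phi^{\bar z},u_+^z)$ is \emph{not} already right-multiplied by $u_+^1(1)^{-1}u_+^z(1)$, so to place it inside the bracket one must insert $u_+^z(1)^{-1}u_+^1(1)\cdot u_+^1(1)^{-1}u_+^z(1)$; since $u_+^z(1)^{-1}u_+^1(1)=\mathbf 1+o(1)$ and the first summand is already $O(|z-1|)$, this only contributes an additional $o(|z-1|)$ to $X(z)$, and your conclusion stands.
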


\begin{proof}
The result follows  from the continuity of $  u_-^{1/\bar{z}}(1)  $ and $  u_+^z(1) $,  Propositions \ref{PW1} and \ref{la 4.3} and Lemma \ref{lema_w}.  
\end{proof}

\section{Proof of the main result}

\subsection{Half-bound states}\label{HB}

In this section we study half-bound states at the threshold energy corresponding
to the spectral parameter $z= 1$. They are solutions of \eqref{solutions} with {$E=2$} that are bounded. These solutions play a fundamental role in the limit {of the scattering matrix} as 
$z$ tends to $1$.

\begin{lemma}\label{asinsol1}
	Let $u\in {(\mathbb{C}^L)}^{\mathbb{Z}}$ be a solution of \eqref{solutions}, for $z =1$. 
	The following items are equivalent: 
	\begin{enumerate}
		\item[(i)]  $u$ is  $o(n)$ for $n \to  +\infty$.
		\item[(ii)] $u$  is bounded as $n $ tends to $ +\infty$.
		\item[(iii)] $u$ converges as $n $ tends to $ +\infty$.
	\end{enumerate}
	Moreover,  $u$ is $o(1)$ for $n \to +\infty$, if and only if $u=0$.
\end{lemma}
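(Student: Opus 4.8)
The key structural fact is that any solution of $Hu=2u$ is, for $n$ large, a perturbation of a free solution at $E=2$, and the free solutions at $E=2$ are spanned by the constant sequence $\mathbf 1$ and the linearly growing sequence $n\mapsto n$ (see \eqref{vuno}). The plan is to make this precise via the Volterra/summation representation of $u$ near $+\infty$ and then read off the four equivalences.

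First I would set up the asymptotic representation. Writing the Schr\"odinger recursion $u(n-1)+u(n+1)-2u(n)=-V(n)u(n)$ and using summation by parts (equivalently, the scalar solutions $s^1$ and $\tau^1$ from Definition~\ref{FreeSolo1} as a fundamental system for the $E=2$ free equation, exactly as in Lemma~\ref{lem01} with $z=1$), one obtains for $n$ large
\begin{equation*}
u(n)\;=\;\alpha\,\mathbf 1+\beta\, n-\sum_{j>n}(j-n)V(j)u(j),
\end{equation*}
for suitable vectors $\alpha,\beta\in\CM^L$ determined by $u$; the tail sum converges once one knows $\|u(j)\|=O(|j|)$, which follows from \eqref{shortrange} together with a Gronwall argument applied to $\|u(n)\|/n$ exactly as in the proof of Lemma~\ref{boundm}. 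Since $\sum_j |j|\,\|V(j)\|<\infty$ and $u(j)=O(j)$, the remainder term satisfies $\sum_{j>n}(j-n)\|V(j)\|\,\|u(j)\|\le \sum_{j>n}j\,\|V(j)\|\,O(j)\to 0$; more precisely the remainder is $o(1)$, and in fact $o(n^{0})$, as $n\to+\infty$. Actually one gets the sharper statement that $\sum_{j>n}(j-n)V(j)u(j)\to 0$, so $u(n)=\alpha\mathbf 1+\beta n+o(1)$.

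With this representation in hand the equivalences are immediate. If $u=o(n)$ then dividing $u(n)=\alpha\mathbf 1+\beta n+o(1)$ by $n$ and letting $n\to\infty$ forces $\beta=0$, hence $u(n)=\alpha\mathbf 1+o(1)$, which is bounded and convergent (to $\alpha\mathbf 1$); this gives (i)$\Rightarrow$(ii), (i)$\Rightarrow$(iii), and the convergent case. Conversely (ii)$\Rightarrow$(i) and (iii)$\Rightarrow$(i) are trivial since bounded or convergent sequences are $o(n)$. This closes the cycle of three equivalences. For the final claim: if $u=o(1)$ then $\beta=0$ and $\alpha\mathbf 1=\lim u(n)=0$, so $\alpha=0$, whence $u(n)=-\sum_{j>n}(j-n)V(j)u(j)$ for all $n$ beyond some $N_0$; bounding with Gronwall/iteration this yields $u\equiv 0$ on $[N_0,\infty)$, and then the recursion $u(n-1)=2u(n)-V(n)u(n)-u(n+1)$ propagates $u=0$ to all of $\ZM$. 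The converse $u=0\Rightarrow u=o(1)$ is trivial.

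\textbf{Main obstacle.} The only non-formal point is justifying the summation-by-parts representation together with the a priori bound $u(n)=O(n)$, i.e.\ that a solution which is merely $o(n)$ is automatically $O(n)$ with a convergent correction term — this is where \eqref{shortrange} and Gronwall's lemma (Lemma~\ref{gronwall}) do the real work, and one must be a little careful that the Gronwall argument only needs the weak hypothesis $u=o(n)$ (or even just that $u$ does not grow faster than linearly along a subsequence) rather than boundedness, so that no circularity creeps in. Everything else is bookkeeping with the explicit free solutions $\mathbf 1$ and $n$.
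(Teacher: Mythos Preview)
Your plan has a genuine gap in the crucial implication (i)$\Rightarrow$(ii). You write the Volterra-type representation
\[
u(n)\;=\;\alpha+\beta n-\sum_{j>n}(j-n)V(j)u(j)
\]
and claim the tail converges and is $o(1)$ once $\|u(j)\|=O(j)$. But with only the first-moment condition \eqref{shortrange} and $u(j)=O(j)$, the bound you write is
\[
\sum_{j>n}(j-n)\|V(j)\|\,\|u(j)\|\;\le\;C\sum_{j>n}j^2\|V(j)\|,
\]
which need not even be finite, let alone $o(1)$. The tail sum is controlled when $u$ is bounded (then $(j-n)\|V(j)\|\,\|u(j)\|\le Cj\|V(j)\|$, summable), so your argument works cleanly for (ii)$\Rightarrow$(iii) and for the ``Moreover'' clause; but for (i)$\Rightarrow$(ii) you are assuming exactly what you want to prove. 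You correctly flag this as the ``main obstacle'', but you do not resolve it.

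The paper sidesteps this entirely by using the Jost basis already constructed in Section~\ref{sec-Jost}: the columns of $u_+^1$ and $v_+^1$ form a basis of the solution space, with $u_+^1(n)\to\mathbf 1$ and $v_+^1(n)/n\to\mathbf 1$. Writing $u=v_+^1\alpha+u_+^1\beta$, the three equivalences and the final claim drop out in one line from these asymptotics. The point is that the delicate Volterra/contraction work has been done \emph{once} in building $u_+^1$ and $v_+^1$ (for the specific normalised data, where boundedness is built into the fixed-point space), rather than attempted for an arbitrary solution where the a priori control is too weak. Your approach, if pushed through, would essentially have to reconstruct this basis; it is more economical to invoke it directly.
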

\begin{proof}
It follows from Equations \eqref{wasu} and \eqref{wasv} that the columns  of 
$ u_{+}^1 $ and $v_{+}^1$  form a basis of all solutions. Then, there are $\alpha,\beta \in \mathbb{C}^L$ such that  $$u=v_+^1 \alpha + u_+^1 \beta.$$ 
The asymptotic behavior of $ u_{+}^1  $ and $v_{+}^1 $ at $\infty$ yields the desired result. 
\end{proof}
\begin{rem}\label{asinsol2}
The previous lemma remains valid if we replace    $+\infty$ by $-\infty$.
\end{rem}
\begin{lemma}\label{asinu}
	The next equations hold true:
	$$	u_{+}^1(n) =  n(i W(u_-^1,u_+^1)  + o(1)) ,\qquad n\to -\infty.$$
	$$	u_{-}^1(n) =  n(-i W(u_+^{1},u_-^1) + o(1)) ,\qquad n\to +\infty.$$
\end{lemma}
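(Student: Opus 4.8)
\textbf{Proof proposal for Lemma~\ref{asinu}.}

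The plan is to compute the two Wronskians using their $n$-independence, evaluating them in the limit $n\to-\infty$ (respectively $n\to+\infty$), and then to read off the asymptotics of $u_+^1$ (respectively $u_-^1$) from the leading behavior of the other Jost solution at that end. First I would record that, by Lemma~\ref{asinsol1} and Remark~\ref{asinsol2}, a solution of \eqref{solutions} at $z=1$ that is $o(n)$ at $-\infty$ must in fact be bounded; and I would use the Volterra representation \eqref{2266}, namely $u_-^1(n)=\mathbf{1}+\sum_{j=-\infty}^{n-1}(j-n)V(j)u_-^1(j)$, which together with \eqref{shortrange} and the bound $\|u_-^1(j)\|\le C|j|$ valid for $j\ge 0$ (Lemma~\ref{boundm}) shows that $u_-^1(n)=\mathbf{1}+o(1)$ as $n\to-\infty$, and that $u_-^1(n+1)-u_-^1(n)=-\sum_{j=-\infty}^{n}V(j)u_-^1(j)$. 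The last sum is exactly the object appearing in Lemma~\ref{lemWigual}, so its limit as $n\to-\infty$ relates to $W(u_-^1,u_+^1)$.

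More directly, the cleanest route is to write the Wronskian $W(u_-^1,u_+^1)$, which is independent of $n$, and expand it as $n\to-\infty$. From \eqref{eq-WronskiDef},
\begin{equation*}
W(u_-^1,u_+^1)=i\big(u_-^1(n+1)^*u_+^1(n)-u_-^1(n)^*u_+^1(n+1)\big).
\end{equation*}
Since $u_-^1(n)\to\mathbf{1}$ and $u_-^1(n+1)-u_-^1(n)\to 0$ as $n\to-\infty$, the difference of the two terms becomes, to leading order, $i\,(u_+^1(n)-u_+^1(n+1))$ plus a correction controlled by $(u_-^1(n+1)^*-u_-^1(n)^*)u_+^1(n)$; the latter is $o(1)\cdot O(|n|)$, hence $o(|n|)$. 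Now $u_+^1$ is a solution of \eqref{solutions} at $z=1$, so by Lemma~\ref{asinsol1} applied at $-\infty$ it is a linear combination $u_+^1=v_-^1\gamma+u_-^1\delta$ of the basis at $-\infty$; the component along the bounded solution $u_-^1$ stays bounded, while the component along $v_-^1$ grows linearly, so $u_+^1(n)-u_+^1(n+1)=-\gamma+o(1)$ where $\gamma$ governs the linear growth $u_+^1(n)=n\gamma+O(1)$. Matching with the displayed expansion gives $W(u_-^1,u_+^1)=i(u_+^1(n)-u_+^1(n+1))+o(1)=-i\gamma+o(1)$, i.e. $\gamma=iW(u_-^1,u_+^1)$ up to sign bookkeeping, whence $u_+^1(n)=n\big(iW(u_-^1,u_+^1)+o(1)\big)$ as $n\to-\infty$. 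The second identity follows verbatim by exchanging the roles of $+\infty$ and $-\infty$ and of $u_+^1$ and $u_-^1$, using $W(u_+^1,u_-^1)=W(u_-^1,u_+^1)^*$ and the fact that $u_+^1(n)\to\mathbf{1}$ as $n\to+\infty$; the sign flip in front of the Wronskian comes from the orientation reversal in the definition \eqref{eq-WronskiDef}.

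The main obstacle I anticipate is making the step ``$u_+^1(n)-u_+^1(n+1)=-\gamma+o(1)$'' fully rigorous: one needs that the linear-growth coefficient $\gamma$ of $u_+^1$ at $-\infty$ is genuinely constant and that the $O(1)$ remainder has convergent first differences. This can be handled by working directly from the Volterra equation \eqref{2266} for $u_+^1$, writing $u_+^1(n)-u_+^1(n+1)=\sum_{j=n+2}^{\infty}V(j)u_+^1(j)-\sum_{j=n+1}^{\infty}V(j)u_+^1(j)\cdot(\text{shift})$ carefully, or more cleanly by using that $u_+^1(n+1)-2u_+^1(n)+u_+^1(n-1)=V(n)u_+^1(n)\to 0$ with $\sum_n\|V(n)u_+^1(n)\|<\infty$ (again by \eqref{shortrange} and $\|u_+^1(j)\|\le C|j|$), so that the first differences of $u_+^1$ form a Cauchy sequence as $n\to-\infty$ with a well-defined limit; this limit is by construction $-\gamma$, and it equals $-i W(u_-^1,u_+^1)$ by the Wronskian computation above. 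Everything else is routine bookkeeping with \eqref{shortrange}, Lemma~\ref{boundm} and the already-established asymptotics $u_\pm^1(n)\to\mathbf{1}$.
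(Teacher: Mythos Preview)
Your approach is valid in outline but differs from the paper's and has one estimate that is too crude as written.

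\textbf{Comparison with the paper.} The paper does not use the basis decomposition $u_+^1=v_-^1\gamma+u_-^1\delta$ at all. Instead it computes the first difference directly from the Volterra equation~\eqref{2266}:
\[
u_+^1(n+1)-u_+^1(n)=\sum_{j\ge n+1}V(j)u_+^1(j)=iW(u_-^1,u_+^1)-\sum_{j\le n}V(j)u_+^1(j),
\]
the last equality being Lemma~\ref{lemWigual}. The tail vanishes as $n\to-\infty$ by \eqref{shortrange} and $\|u_+^1(j)\|\le C|j|$, so $u_+^1(n+1)-u_+^1(n)\to iW(u_-^1,u_+^1)$, and a Ces\`aro average of the telescoping sum gives $u_+^1(n)/n\to iW(u_-^1,u_+^1)$. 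Your route instead obtains the existence of $\gamma=\lim u_+^1(n)/n$ structurally from the basis at $-\infty$, and then identifies $\gamma$ by evaluating the Wronskian asymptotically; this trades the use of Lemma~\ref{lemWigual} for the asymptotic matching, which is a legitimate and conceptually nice alternative.

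\textbf{The gap.} Your bound on the correction term $(u_-^1(n+1)^*-u_-^1(n)^*)\,u_+^1(n)$ as ``$o(1)\cdot O(|n|)=o(|n|)$'' is too weak: a constant Wronskian equal to $-i\gamma+o(1)$ plus an $o(|n|)$ remainder does not force the remainder to vanish. You actually need, and in fact have, $o(1)$: from the Volterra equation for $u_-^1$ one gets $\|u_-^1(n+1)-u_-^1(n)\|\le\sum_{j\le n}\|V(j)\|\,\|u_-^1(j)\|$, and since $u_-^1$ is bounded on $\ZM_{\le 0}$ and $|n|\sum_{j\le n}\|V(j)\|\le\sum_{j\le n}|j|\,\|V(j)\|\to 0$ by \eqref{shortrange}, the product with $u_+^1(n)=O(|n|)$ is $o(1)$. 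With this correction your matching step goes through. The first-difference obstacle you flag is real (the asymptotic $v_-^1(n)=n(\one+o(1))$ alone does not give $v_-^1(n+1)-v_-^1(n)\to\one$), and your second-difference fix via $\sum_n\|V(n)u_+^1(n)\|<\infty$ is exactly right; note that once you write that telescoping sum you are one line away from the paper's computation anyway.
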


\begin{proof}
It was proved in Lemma \ref{lemWigual} that
\begin{align}
W(u_-^1,u_+^1) =  -i\sum_{j=-\infty}^{\infty}V(j)u_{+}^1(j),
\end{align}
and, similarly we deduce that 
\begin{align}
W(u_+^{1},u_-^1) = i\sum_{j=-\infty}^{\infty}V(j)u_{-}^1(j).
\end{align}
{Equation}  \eqref{6} implies that 
	\begin{equation*}
	\begin{aligned}
	u_+^1(n+1)-u_+^1(n) &= -\sum_{j=n+2}^{\infty}(j-n-1)V(j)u_+^1(j) + \sum_{j=n+1}^{\infty}(j-n)V(j)u_+^1(j)\\
	&= \sum_{j=n+1}^{\infty}V(j)u_+^1(j)\\
	& =i  W(u_-^1,u_+^1) - \sum_{j=-\infty}^{n} V(j)u_{+}^1(j).
	\end{aligned}
	\end{equation*}
	Thus,  $u_+^1(n+1)-u_+^1(n) \to   i   W(u_-^1,u_+^1) $, as $n \to -\infty$ (recall that $|u_+^1(n)|\leq C|n| $  {due to Equation}  \eqref{uno}).
	This implies that
	$$\frac{1}{n}(u_+^1(n+1)-u_+^1(1))=\frac{1}{n}\sum_{j=1}^{n}u_+^1(j+1)-u_+^1(j) \to  i   W(u_-^1,u_+^1) $$
	and, therefore, 
	$$\frac{u_+^1(n)}{n}\; \to \;i   W(u_-^1,u_+^1) , \qquad n \to \infty.$$ 
This proves the first equality. The proof of the second is similar.  
\end{proof}

The next result establishes a connection between the subspace $\Nn$ and $\Ll$ introduced in Definition \ref{kernels} and the half-bound states.

\begin{lemma}[Half-Bound States]\label{NM}
The next equations hold true:
\begin{align}
\mathcal{N}= \{\xi \in \mathbb{C}^L : u_+^1\xi \ \text{is} \ \text{bounded} \} ,  \hspace{1cm}  \Ll =
\{\chi \in \mathbb{C}^L : u_-^1\chi \ \text{is} \ \text{bounded}\}. 
\end{align}
Moreover, since  $W(u_-^1,u_+^1)^* = W(u_+^1,u_-^1) $ by definition, it follows that 
\begin{align}
\mathcal{N} = (W(u_+^1,u_-^1)\mathbb{C}^L)^{\perp},  \hspace{1cm} \Ll = (W(u_-^1,u_+^1)\mathbb{C}^L)^{\perp},
\end{align}
and, therefore,  $\dim(\mathcal{N})=\dim(\Ll)$. 
\end{lemma}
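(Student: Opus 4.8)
The plan is to prove the two characterizations of $\Nn$ and $\Ll$ by using the basis-of-solutions structure together with the asymptotic formula from Lemma~\ref{asinu}, and then to derive the reformulation in terms of orthogonal complements of ranges purely formally from $W(u_-^1,u_+^1)^*=W(u_+^1,u_-^1)$.

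First I would establish $\Nn=\{\xi\in\CM^L:u_+^1\xi\text{ is bounded}\}$. By construction (Definition~\ref{Jost} and Equation~\eqref{wasu}), $u_+^1\xi=z^n(\one+o(1))\xi$ is bounded as $n\to+\infty$; in fact it converges to $\xi$ there. So boundedness of $u_+^1\xi$ on all of $\ZM$ is equivalent to boundedness as $n\to-\infty$. Now Lemma~\ref{asinu} gives $u_+^1(n)=n(iW(u_-^1,u_+^1)+o(1))$ as $n\to-\infty$, hence $u_+^1(n)\xi=n(iW(u_-^1,u_+^1)\xi+o(1))$. If $\xi\in\Nn$, i.e.\ $W(u_-^1,u_+^1)\xi=0$, then $u_+^1(n)\xi=n\cdot o(1)=o(n)$, and by Lemma~\ref{asinsol1} (applied at $-\infty$, cf.\ Remark~\ref{asinsol2}) the solution $u_+^1\xi$ is then in fact bounded (indeed convergent) as $n\to-\infty$. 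Conversely, if $u_+^1\xi$ is bounded as $n\to-\infty$ then $n(iW(u_-^1,u_+^1)\xi+o(1))$ stays bounded, which forces $W(u_-^1,u_+^1)\xi=0$, i.e.\ $\xi\in\Nn$. The statement for $\Ll$ is obtained in exactly the same way with the roles of $+\infty$ and $-\infty$ swapped, using the second formula of Lemma~\ref{asinu} and Lemma~\ref{asinsol1} at $+\infty$.

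Next, the orthogonal-complement description is immediate: for any matrix $M\in\Mm_{L\times L}$ one has $\Ker(M)=(M^*\CM^L)^\perp=(\Ran(M^*))^\perp$. Applying this with $M=W(u_-^1,u_+^1)$ and using $W(u_-^1,u_+^1)^*=W(u_+^1,u_-^1)$ gives $\Nn=\Ker(W(u_-^1,u_+^1))=(W(u_+^1,u_-^1)\CM^L)^\perp$, and with $M=W(u_+^1,u_-^1)$ gives $\Ll=\Ker(W(u_+^1,u_-^1))=(W(u_-^1,u_+^1)\CM^L)^\perp$. Finally, $\dim(\Nn)=L-\dim(\Ran(W(u_+^1,u_-^1)))=L-\mathrm{rank}(W(u_+^1,u_-^1))$ and $\dim(\Ll)=L-\mathrm{rank}(W(u_-^1,u_+^1))$; since $W(u_+^1,u_-^1)=W(u_-^1,u_+^1)^*$ has the same rank as $W(u_-^1,u_+^1)$, we conclude $\dim(\Nn)=\dim(\Ll)$.

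I do not expect a genuine obstacle here — the only point requiring care is the direction ``bounded as $n\to-\infty$ $\Rightarrow$ $\xi\in\Nn$'', where one must be sure that the $o(1)$ error in Lemma~\ref{asinu} cannot conspire with the prefactor $n$ to remain bounded unless the leading term vanishes; this is clear since $n\,o(1)$ bounded along $n\to-\infty$ together with a fixed vector $v$ satisfying $n(v+o(1))$ bounded forces $v=0$. One should also explicitly invoke that $u_+^1\xi$ and $u_-^1\chi$ are solutions of \eqref{solutions} with $E=2$ so that Lemma~\ref{asinsol1} applies, and note that these are precisely the half-bound states discussed in Section~\ref{HB}.
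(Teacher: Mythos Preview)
Your proof is correct and follows essentially the same route as the paper: both directions for $\Nn$ (and analogously for $\Ll$) are obtained by combining the asymptotic formula of Lemma~\ref{asinu} with the $o(n)\Leftrightarrow$ bounded equivalence from Lemma~\ref{asinsol1}/Remark~\ref{asinsol2}, and the orthogonal-complement and dimension statements follow from $\Ker(M)=(\Ran(M^*))^\perp$ applied to $M=W(u_-^1,u_+^1)$. Your write-up is in fact slightly more explicit than the paper's (you spell out the boundedness at $+\infty$ and the rank argument), but there is no substantive difference.
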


\begin{proof}
	Take $\xi\in \mathcal{N}$,  then Lemmas \ref{asinsol1} and  \ref{asinu} 
	yield that $ u_{+}^1 \xi$ is bounded.     
	Now taking $\xi\in \mathbb{C}^L$ such that $u_+^1\xi$ is bounded,  Lemma \ref{asinu} implies that 
	$$0=\lim_{n\to-\infty} \frac{u_+^1(n)\xi}{n}= i  W(u_-^1,u_+^1)  \xi .$$ Therefore the first equality follows. The proof of the second equality is similar. 
\end{proof}

\begin{lemma}\label{defGamma}
	For every  $\xi \in \mathcal{N}$,  
	\begin{equation}\label{54}
	\Gamma\xi = \lim_{n \to -\infty} u_+^1(n)\xi.
	\end{equation}  
\end{lemma}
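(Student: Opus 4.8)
The plan is to establish \eqref{54} by unwinding the definition of $\Gamma$ in \eqref{gammasn} and showing it agrees with the limit of $u_+^1(n)\xi$ as $n\to-\infty$. First I would recall the Volterra equation \eqref{6} for $u_+^1$, which reads $u_+^1(n)=\one-\sum_{j=n+1}^\infty(j-n)V(j)u_+^1(j)$ (using that $s^1(j-n)=j-n$ from \eqref{eq-sz}). Separating the telescoping part, I would write $u_+^1(n)\xi = \xi - \sum_{j=n+1}^\infty jV(j)u_+^1(j)\xi + n\sum_{j=n+1}^\infty V(j)u_+^1(j)\xi$.

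The key observation is that since $\xi\in\mathcal{N}=\Ker(W(u_-^1,u_+^1))$ and by Lemma~\ref{lemWigual} we have $W(u_-^1,u_+^1)=-i\sum_{j=-\infty}^\infty V(j)u_+^1(j)$, the full sum $\sum_{j=-\infty}^\infty V(j)u_+^1(j)\xi$ vanishes. Hence $\sum_{j=n+1}^\infty V(j)u_+^1(j)\xi = -\sum_{j=-\infty}^n V(j)u_+^1(j)\xi$. Now, by Lemma~\ref{NM}, $\xi\in\mathcal{N}$ implies $u_+^1\xi$ is bounded; combined with the finite-first-moment condition \eqref{shortrange}, the tail estimate gives $\big|n\sum_{j=-\infty}^n V(j)u_+^1(j)\xi\big|\le \sum_{j=-\infty}^n |j|\,\|V(j)\|\,\|u_+^1(j)\xi\|\to 0$ as $n\to-\infty$ (the series $\sum_j |j|\|V(j)\|$ converges, so its tails vanish, and $\|u_+^1(j)\xi\|$ is uniformly bounded). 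Therefore the term $n\sum_{j=n+1}^\infty V(j)u_+^1(j)\xi$ tends to $0$.

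It remains to identify the limit of the remaining part: $\xi - \sum_{j=n+1}^\infty jV(j)u_+^1(j)\xi \to \xi - \sum_{j=-\infty}^\infty jV(j)u_+^1(j)\xi$ as $n\to-\infty$, since $\sum_j |j|\|V(j)\|\,\|u_+^1(j)\xi\|<\infty$ ensures the series converges absolutely and the partial-sum tails over $(-\infty,n]$ vanish. But the right-hand side is exactly $\Gamma\xi$ by \eqref{gammasn}. Putting the two pieces together yields $\lim_{n\to-\infty}u_+^1(n)\xi = \xi - \sum_{j=-\infty}^\infty jV(j)u_+^1(j)\xi = \Gamma\xi$, as claimed.

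The main technical point — really the only nontrivial step — is the boundedness of $u_+^1(j)\xi$ for $\xi\in\mathcal{N}$, but this is already supplied by Lemma~\ref{NM}; everything else is absolute convergence of series against the summable weight $|j|\|V(j)\|$ and vanishing of their tails. I would just need to be careful with the index bookkeeping in splitting $-\sum_{j=n+1}^\infty(j-n)V(j)u_+^1(j)\xi$ into the $j$-weighted sum and the $n$-weighted sum, and to invoke \eqref{shortrange} at the right places.
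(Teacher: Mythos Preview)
Your proposal is correct and follows essentially the same approach as the paper's own proof: both start from the Volterra equation \eqref{6} with $s^1(j-n)=j-n$, split $(j-n)$ into $j$- and $n$-weighted sums, use Lemma~\ref{lemWigual} together with $\xi\in\mathcal{N}$ to convert the $n$-weighted sum into a tail over $(-\infty,n]$, and then invoke Lemma~\ref{NM} and \eqref{shortrange} to make that term vanish while the $j$-weighted sum converges to $\sum_{j\in\ZM} jV(j)u_+^1(j)\xi$. The index bookkeeping and the tail estimate $|n|\leq |j|$ for $j\leq n\leq 0$ are handled the same way.
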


\begin{proof}
Let $\xi \in \mathcal{N} = {\Ker(W(u_{-}^1, u_{+}^1))}$, we calculate using 
	Lemma \ref{lemWigual} and {Equation}  \eqref{6}:  
	\begin{equation*}
	\begin{aligned}
\xi - u^1_{+}(n)\xi &= 	
	\sum_{j=n+1}^{\infty}(j-n)V(j)u_+^1(j)\xi
	\\ &=\sum_{j=n+1}^{\infty}jV(j)u_+^1(j)\xi-n\sum_{j=n+1}^{\infty}V(j)u_+^1(j)\xi\\
	&=\sum_{j=n+1}^{\infty}jV(j)u_+^1(j)\xi-n\left(  iW(u_{-}^1, u_{+}^1) -\sum_{j=-\infty}^{n}V(j)u_+^1(j)\right)\xi\\
	&=\sum_{j=n+1}^{\infty}jV(j)u_+^1(j)\xi+n\sum_{j=-\infty}^{n}V(j)u_+^1(j)\xi.
	\end{aligned}
	\end{equation*}
Notice that $ u_{+}^1(j) \xi$ is bounded because $\xi \in \mathcal{N}$ (see Lemma \ref{NM}). 
From  the last equation and definition of $\Gamma$ (see Definition \ref{kernels}),  we have that (see also \eqref{shortrange})
	\begin{equation}
	\begin{aligned}
	\lim_{n\to-\infty}u_+^1(n)\xi&=\xi-\lim_{n\to-\infty}\sum_{j=n+1}^{\infty}(j-n)V(j)u_+^1(j)\xi\\
	&=\xi-\sum_{j=-\infty}^{\infty}jV(j)u_+^1(j)\xi=\Gamma\xi,
	\end{aligned}
	\end{equation}
completing the proof.
\end{proof}

\begin{lemma}\label{biye}
	$\Gamma$ is a linear isomorphism from  $\mathcal{N}$ to $\Ll$.
\end{lemma}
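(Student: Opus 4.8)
The plan is to show that $\Gamma\colon\mathcal{N}\to\Ll$ is a linear isomorphism by producing an explicit inverse, constructed symmetrically from the other Jost solution $u_-^1$. The map $\Gamma$ is linear by its definition \eqref{gammasn}, so the substance is injectivity and surjectivity. The key conceptual input is Lemma~\ref{defGamma}, which identifies $\Gamma\xi=\lim_{n\to-\infty}u_+^1(n)\xi$. By symmetry (replacing $+\infty$ by $-\infty$ as in Remark~\ref{asinsol2}), the same argument applied to $u_-^1$ produces a linear map $\Lambda\colon\Ll\to\mathcal{N}$ with $\Lambda\chi=\lim_{n\to+\infty}u_-^1(n)\chi$; note that for $\chi\in\Ll$ the sequence $u_-^1(n)\chi$ is bounded by Lemma~\ref{NM}, and its limit lies in $\mathcal{N}$ for the same reason (the limit is a bounded solution, hence annihilated by $W(u_+^1,u_-^1)$, i.e.\ lies in $\mathcal{N}$).

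First I would check that $\Gamma$ actually maps into $\Ll$. For $\xi\in\mathcal{N}$, the vector $\Gamma\xi=\lim_{n\to-\infty}u_+^1(n)\xi$ is the limit at $-\infty$ of the bounded solution $u_+^1\xi$; since $\Ll=\{\chi:u_-^1\chi\text{ bounded}\}=(W(u_-^1,u_+^1)\CM^L)^\perp$, I would argue that this limiting constant vector, being the $-\infty$-asymptote of a bounded solution, must itself lie in $\Ll$. Concretely, a solution converging to a constant vector $c$ at $-\infty$ is, by the analogue of Lemma~\ref{asinsol1} at $-\infty$, of the form $u_-^1 c$ plus a solution decaying at $-\infty$; comparing constant terms and using Lemma~\ref{asinsol1} (the "$o(1)$ iff zero" part, at $-\infty$) forces the decaying part to vanish, so $u_+^1\xi=u_-^1c$ with $c=\Gamma\xi$, and since $u_+^1\xi$ is bounded also at $+\infty$, we get $u_-^1 c$ bounded, i.e.\ $c\in\Ll$. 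This same identity $u_+^1\xi=u_-^1(\Gamma\xi)$ is the crux of the whole argument.

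Given that identity, injectivity and surjectivity are immediate. If $\Gamma\xi=0$ then $u_+^1\xi=u_-^1\cdot 0=0$, but $u_+^1\xi(n)=z^n(\one+o(1))\xi$ at $+\infty$ with $z=1$ forces $\xi=0$ (this is exactly the last clause of Lemma~\ref{asinsol1}, or simply the asymptotics \eqref{wasu}); hence $\Gamma$ is injective. For surjectivity, take $\chi\in\Ll$; then $u_-^1\chi$ is bounded, so by the symmetric version of Lemma~\ref{defGamma} applied to $u_-^1$, the limit $\xi:=\lim_{n\to+\infty}u_-^1(n)\chi$ exists, lies in $\mathcal{N}$ (it is the $+\infty$-asymptote of a bounded solution, hence a bounded solution $u_+^1\xi$, so $\xi\in\mathcal{N}$ by Lemma~\ref{NM}), and the analogous identity $u_-^1\chi=u_+^1\xi$ holds; evaluating the limit at $-\infty$ gives $\chi=\lim_{n\to-\infty}u_+^1(n)\xi=\Gamma\xi$. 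Thus $\Gamma$ is onto, and $\Lambda$ is its two-sided inverse. Since $\dim\mathcal{N}=\dim\Ll$ is already known from Lemma~\ref{NM}, one could alternatively conclude from injectivity alone, but exhibiting $\Lambda$ explicitly is cleaner.

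I expect the main obstacle to be the careful justification of the identity $u_+^1\xi=u_-^1(\Gamma\xi)$ — that is, proving that a solution of $Hu=2u$ which is bounded and converges to a constant vector $c$ at $-\infty$ actually equals $u_-^1 c$ exactly (not just asymptotically). This rests on the basis decomposition in terms of $u_-^1$ and $v_-^1$ at $-\infty$ together with the asymptotics $u_-^1(n)=(\one+o(1))$, $v_-^1(n)=n(\one+o(1))$ as $n\to-\infty$ from \eqref{wasu}--\eqref{wasv}: writing $u=u_-^1\alpha+v_-^1\beta$, boundedness at $-\infty$ kills $\beta$ and then the constant term gives $\alpha=c$. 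Everything else is bookkeeping with the already-established Lemmas~\ref{asinsol1}, \ref{asinu}, \ref{NM} and \ref{defGamma} and their $\pm\infty$-symmetric counterparts.
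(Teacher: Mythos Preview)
Your proposal is correct and follows essentially the same route as the paper: both hinge on the identity $u_+^1\xi=u_-^1(\Gamma\xi)$, obtained by applying the ``$o(1)$ implies zero'' clause of Lemma~\ref{asinsol1}/Remark~\ref{asinsol2} at $-\infty$ (the paper applies it directly to the difference $u_+^1\xi-u_-^1\chi$, you use the equivalent basis decomposition $u_-^1\alpha+v_-^1\beta$), and both obtain surjectivity via the symmetric construction $\xi=\lim_{n\to+\infty}u_-^1(n)\chi$. The only cosmetic difference is that the paper concludes bijectivity from surjectivity plus $\dim\mathcal N=\dim\Ll$, whereas you additionally spell out injectivity and package the symmetric map as an explicit inverse $\Lambda$.
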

\begin{proof}
Taking  $\xi\in \mathcal{N}$ and $\chi = \Gamma\xi$,
	it follows from Lemma \ref{defGamma} and Equation \eqref{wasu} that 
	\begin{equation*}
	\lim_{n \to -\infty} u_+^1(n)\xi-u_-^1(n)\chi = 0.
	\end{equation*}
	Then, Remark  \ref{asinsol2} implies that
	\begin{equation}\label{24}
	u_+^1(n)\xi=u_-^1(n)\chi, \ n \in \mathbb{Z}.
	\end{equation}
	We deduce that  $u_-^1\chi$ is bounded and, therefore,   $\chi\in \Ll$ and it follows that $\Gamma \mathcal{N} \subset \Ll$.  Let  $\chi \in \Ll$ and  $\xi = \lim_{n \to \infty} u_-^1(n)\chi$. As above, we obtain that $u_+^1(n)\xi=u_-^1(n)\chi$ for $ n\in \mathbb{Z}$ and, therefore,  $\Gamma\xi = \chi$. This proves the subjectivity and as $\Ll$ and $\mathcal{N}$ have the same dimension, that $\Gamma $ is bijective.      
\end{proof}

\begin{rem}\label{obs_inv}
	It follows from  \eqref{24} that 
	\begin{equation}\label{gama}
	\Gamma = (u_-^1(n)^{-1}u_+^1(n)){\big|_\mathcal N },
	\end{equation}
whenever $u_-^1(n)^{-1}$ exists.  
As explained above {it can be assumed} without loss of generality that this is the case when $n=1$.  Let us also recall that  
	\begin{equation}\label{omega_1}
	\Omega = u_+^1(1)^{-1}u_-^1(1),
	\end{equation}
	and notice that 
	\begin{equation}  \label{testa}
	\Omega{\big|_{\Ll}}= \Gamma^{-1}.
	\end{equation}
\end{rem}

\subsection{{Band edge} limit of the scattering matrix}


{Let  $\{e_{1}, \cdots, e_{L}\} $ be} an orthonormal   basis of $\mathbb{C}^L$ such that the first $d$ vectors form a basis of $  \Ll  $ and the {last  $L-d$ vectors form} a basis of 
 $\Ll^{\perp} = W(u_{-}^{1}, u_{+}^1 ) \mathbb{C}^L $, see  Lemma \ref{NM}. We take {another orthonormal  basis  $\{v_1, \cdots, v_L  \}$ of $\mathbb{C}^L$}  such that 
 the first $  d $ vectors form a  basis of $\mathcal{N}$ and the last $ L-d  $ vectors form a basis of $\mathcal{N}^{\perp}$.
Then define
\begin{align}\label{change}
P : =  \begin{pmatrix}
e_1 & e_2 & \cdots & e_L
\end{pmatrix}^*, \hspace{1cm} Q:=  \begin{pmatrix}
v_1 & v_2 & \cdots & v_L
\end{pmatrix}.
\end{align}   
We recall that $ P_{\Ll}$ and $P_{\Ll^{\perp}}$ are the projections onto $\Ll$ and $\Ll^{\perp}$, respectively.   Then 
\begin{align}\label{chich}
 P_{\Ll} \Big ( i (1 - z) ( \Omega^* + \Gamma ) P_{\mathcal{N}}
  +   W(u_{-}^{1}, u_{+}^1) P_{\mathcal{N}^{\perp}}   
     \Big )P_{\mathcal{N}} =  i (1 - z) P_{\Ll}  (  \Omega^* + \Gamma ) P_{\mathcal{N}} 
\end{align} 
and   $     P_{\Ll}  (  \Omega^* + \Gamma ) P_{\mathcal{N}}   $ defines a  bijection between $\mathcal{N}$ and $\Ll$ (see Lemma \ref{NM}): in view of Lemma \ref{NM}   it is enough to  prove that it is   injective. This holds true because  $\Gamma: \mathcal{N} \to \Ll $ is a bijection and, for every 
   $\xi \in \mathcal{N}$,  (see {Equation}  \eqref{testa}) 
\begin{align*}
\langle \Gamma \xi,  P_{\Ll}(\Omega^* + \Gamma)\xi   \rangle
& =\langle P_{\Ll}\Gamma \xi,  (\Omega^* + \Gamma)\xi   \rangle=\langle \Gamma \xi,  (\Omega^* + \Gamma)\xi   \rangle 
\\
& {=}  \;\langle  \Omega \Gamma \xi,  \xi   \rangle +  \| \Gamma \xi \|^2 =    \|  \xi \|^2 +  \| \Gamma \xi \|^2.
\end{align*}
 Moreover, 
\begin{align}\label{chich1}
 P_{\Ll^{\perp}} \Big ( i (1 - z) ( \Omega^* + \Gamma ) P_{\mathcal{N}}
  +   W(u_{-}^{1}, u_{+}^1) P_{\mathcal{N}^{\perp}}   
     \Big )P_{\mathcal{N}^{\perp}} = 
     P_{\Ll^{\perp}}    W(u_{-}^{1}, u_{+}^1) P_{\mathcal{N}^{\perp}}   
\end{align} 
defines a  bijection between $\mathcal{N}^{\perp}$ and $\Ll^{\perp}$ (see Lemma \ref{NM} and Definition \ref{kernels}). 
It follows that there are
matrix-valued functions $  A(z), B(z),  C(z), D(z)$ such that
\begin{align}\label{piri}
 P \Big ( i (1 - z) ( \Omega^* + \Gamma ) P_{\mathcal{N}}
  +   W(u_{-}^{1}, u_{+}^1) P_{\mathcal{N}^{\perp}}   
    + X(z) + Y(z)P_{\mathcal{N}^{\perp}}  \Big ) Q  = \begin{pmatrix}
  A(z)  & B(z) \\ C(z)  & D(z)
  \end{pmatrix}
\end{align}
{(recall the definitions of $X$ and $Y$ in Proposition \eqref{Tmain})} and 
\begin{align}\label{mainnnn}
A(z) &= i (1 - z) \boldsymbol{ A} + o(|1-z|),  \hspace{.3cm} B(z) = o(1),  \\ \nonumber \vspace{.5cm} C(z) &= O(|1-z|),   \hspace{.3cm} D(z) = \boldsymbol{ D} + o(1), 
\end{align}
where
\begin{align}
\boldsymbol{A} := [P_{\Ll}( \Omega^* + \Gamma )P_{\mathcal{N}}]_{\alpha}^{\beta}  ,  \hspace{1cm}  \boldsymbol{D} :=   [  P_{\Ll^\perp} W(u_{-}^{1}, u_{+}^1)P_{\mathcal{N}^\perp}]_{\gamma}^{\delta},   
\end{align}
 are invertible {where}  $[T]_{\theta}^{\eta}$ denotes {the matrix representation of a linear transformation $T$} in terms of the bases $\theta$, $\eta$. Here $\alpha=\{v_1,...,v_d\}$, which is a basis of $\mathcal{N}$, {$\beta=\{e_1,...,e_d\}$}, which is a basis of $\Ll$, and $\gamma=\{v_{d+1},...,v_{L}\}$ and {$\delta=\{e_{d+1},...,e_{L}\}$}, which are bases of $\mathcal{N}^\perp$ and $\Ll^\perp$, respectively.


\begin{theo}\label{T}
The transmission coefficients satisfy the following properties:
\begin{align}\label{899}
T^z_{+} =   T^1_{+}   + o(1),  \hspace{1cm}  T^z_{-} = T^1_{-}  + o(1){,}  
\end{align}
as $z$ tends to $1$ in $\overline{\mathbb{D}}$, where
\begin{align}\label{900}
T^1_{+} :=    Q  \begin{pmatrix}
  2   \boldsymbol{A}^{-1} & 0 \\ 0 & 0
\end{pmatrix}  P{,}   \hspace{1cm}  T^1_{-} :=  \Big (  
 Q  \begin{pmatrix}
  2   \boldsymbol{A}^{-1} & 0 \\ 0 & 0
\end{pmatrix}   P  \Big )^*.  
\end{align}
Moreover, 
\begin{align}\label{901}
  T_{+}^1 \mathbb{C}^L & = \mathcal{N},  \hspace{1cm} {\Ker(  T_{+}^1)}  = W( u_{-}^1, u_{+}^1 ) \mathbb{C}^L, \\ \notag    T_{-}^1 \mathbb{C}^L & = \Ll,  \hspace{1cm} {\Ker( T_{-}^1)}  = W( u_{+}^1, u_{-}^1 ) \mathbb{C}^L.  
\end{align}
The reflection coefficients satisfy the following properties:
\begin{align}\label{902}
R^z_{+} =   R^1_{+}   + o(1),  \hspace{1cm}  R^z_{-} = R^1_{-}  + o(1),  
\end{align}
as $z$ tends to $1$ in $\mathbb{S}^1$, where
\begin{align}\label{903}
R^1_{+} :=  \one - \Gamma  T^1_{+} ,  \hspace{1cm}  R^1_{-} := 
\one - \Gamma^{-1} T^1_{-}  + o(1).  
\end{align}
Moreover, 
\begin{align}\label{904}
  \Ll & = ( \one - R_{+}^1  ) \mathbb{C}^L   ,  \hspace{1cm}  {\Ker( T_{+}^1)}  = \Ker (  \one - R_{+}^1    ), \\ \notag    \mathcal{N} & =  ( \one - R_{-}^1  ) \mathbb{C}^L  ,  \hspace{1cm} {\Ker (T_{-}^1)}  = \Ker ( \one -  R_{-}^1    ) .  
\end{align}
\end{theo}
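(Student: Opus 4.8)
The plan is to assemble everything from the results obtained earlier, the key input being Proposition~\ref{Tmain} together with the algebraic normalizations \eqref{piri}--\eqref{mainnnn}. First I would observe that, by \eqref{idwron} rewritten through Proposition~\ref{Tmain}, one has
\begin{align}\label{eq-plan1}
\nu^z\,W(u_{-}^{1/\overline{z}}, u_{+}^z)
\;=\;
\nu^z\Big( P^{-1}\begin{pmatrix} A(z) & B(z) \\ C(z) & D(z)\end{pmatrix} Q^{-1}\Big)\,u_+^1(1)^{-1}u_+^z(1)
\;=\;
(M^z_+)^{-1}\text{-ready form},
\end{align}
so that $T^z_+ = (M^z_+)^{-1} = \nu^{-1}_z\,u_+^z(1)^{-1}u_+^1(1)\,Q\begin{pmatrix} A(z) & B(z) \\ C(z) & D(z)\end{pmatrix}^{-1}P$. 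Here the scalar factor $\nu^z = \tfrac{i}{z-z^{-1}}$ behaves like $\tfrac{i}{z-1/z}\sim \tfrac{i}{2(z-1)} \cdot(\text{unit})$; more precisely $\nu^z(1-z)\to -\tfrac{i}{2}$ as $z\to1$. The invertibility of $M^z_\pm$ near $z=1$ (first claim of the theorem) then follows from the invertibility of the block matrix for $z$ near $1$, which in turn follows from the block structure in \eqref{mainnnn}: $A(z)\sim i(1-z)\boldsymbol A$ with $\boldsymbol A$ invertible, $D(z)\to\boldsymbol D$ invertible, $B(z)=o(1)$, $C(z)=O(|1-z|)$. A Schur-complement computation shows the inverse has block form $\begin{pmatrix} O(|1-z|^{-1}) & O(1) \\ O(1) & O(1)\end{pmatrix}$ with leading term $\begin{pmatrix} \tfrac{1}{i(1-z)}\boldsymbol A^{-1} & 0 \\ 0 & \boldsymbol D^{-1}\end{pmatrix}+o(\cdot)$ in the appropriate entrywise sense; multiplying by $\nu^{-1}_z = \tfrac{z-1/z}{i}\sim \tfrac{2(z-1)}{i}$ kills the blow-up in the $(1,1)$ block, sends the other three blocks to $0$, and (using $u_+^z(1)\to u_+^1(1)$) produces precisely $T^1_+ = Q\begin{pmatrix} 2\boldsymbol A^{-1} & 0 \\ 0 & 0\end{pmatrix}P$, giving \eqref{899}--\eqref{900} for $T^z_+$. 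The formula for $T^z_-$ follows from $T^z_- = (M^z_-)^{-1}$ and the symmetry $(M^z_+)^* = M^{\bar z}_-$ in \eqref{eq-MNId2}, so $T^1_- = (T^1_+)^*$.

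For the kernel/image identities \eqref{901}: since $P$ is unitary with rows $e_1,\dots,e_L$ and $Q$ is unitary with columns $v_1,\dots,v_L$, and the central matrix $\begin{pmatrix}2\boldsymbol A^{-1} & 0\\ 0 & 0\end{pmatrix}$ has range equal to $\mathrm{span}(v_1,\dots,v_d)=\mathcal N$ (note $\boldsymbol A^{-1}$ maps the $e$-coordinates of $\Ll$ onto the $v$-coordinates of $\mathcal N$) and kernel equal to $\mathrm{span}(e_{d+1}^*,\dots)$-annihilator $= \Ll^\perp = W(u_-^1,u_+^1)\CM^L$ (using $\Ll^\perp = W(u_-^1,u_+^1)\CM^L$ from Lemma~\ref{NM}), the claims $T^1_+\CM^L=\mathcal N$ and $\Ker(T^1_+) = W(u_-^1,u_+^1)\CM^L$ read off directly; the statements for $T^1_-=(T^1_+)^*$ follow by taking orthogonal complements, $\mathrm{Ran}(T^1_-)=\mathrm{Ran}((T^1_+)^*)=\Ker(T^1_+)^\perp$ — wait, that gives $\Ker(T^1_+)^\perp = (W(u_-^1,u_+^1)\CM^L)^\perp = \Ll$ by Lemma~\ref{NM}, consistent with the claim $T^1_-\CM^L=\Ll$; likewise $\Ker(T^1_-) = \mathrm{Ran}(T^1_+)^\perp = \mathcal N^\perp = W(u_+^1,u_-^1)\CM^L$.

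For the reflection coefficients I would use \eqref{tresfromH}, namely $u_+^z T^z_+ = u_-^z - u_-^{\overz} R^z_+$; evaluating an appropriate Wronskian, or more directly using $R^z_+ = -N^z_+(M^z_+)^{-1} = -N^z_+ T^z_+$ together with $N^z_+ = -\nu^z W(u_-^{1/z},u_+^z)$. I would show $N^z_+ T^z_+ \to \Gamma T^1_+ - \one$: the key is that the ``singular'' direction of $T^z_+$ (the part blowing up like $(1-z)^{-1}$ before multiplication by $\nu_z$, i.e. the $\mathcal N$-component) gets multiplied by the leading term of $W(u_-^{1/z},u_+^z)$ restricted to $\mathcal N$, and by \eqref{teto2} combined with \eqref{idwron} and Lemma~\ref{defGamma} ($\Gamma\xi = \lim_{n\to-\infty}u_+^1(n)\xi$, with \eqref{24} giving $u_+^1\xi = u_-^1\Gamma\xi$), this leading term reproduces $\Gamma$ acting on $\mathcal N = \mathrm{Ran}(T^1_+)$; the non-singular directions of $T^z_+$ vanish in the limit, but they are hit by the $o(1)$-order constant $\one$ coming from the $z^n\one$ leading term in the Volterra equation for $u_-^{1/z}$, yielding the $-\one$. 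Thus $R^1_+ = \one - \Gamma T^1_+$. For $R^z_-$ one uses the symmetry relation \eqref{I3} or \eqref{eq-MNId2} to relate $R^z_-$ to $R^z_+$ (e.g. $R^z_- = -N^z_-(M^z_-)^{-1}$ with $N^z_- = -(N^z_+)^*$), leading to $R^1_- = \one - \Gamma^{-1}T^1_-$ after using $\Omega|_\Ll = \Gamma^{-1}$ from \eqref{testa} (the $o(1)$ in \eqref{903} absorbs lower-order terms). Finally \eqref{904}: $(\one - R^1_+)\CM^L = \Gamma T^1_+\CM^L = \Gamma\mathcal N = \Ll$ by Lemma~\ref{biye}, and $\Ker(\one - R^1_+) = \Ker(\Gamma T^1_+) = \Ker(T^1_+)$ since $\Gamma$ is injective on $\mathcal N = \mathrm{Ran}(T^1_+)$; the statements for $R^1_-$ follow symmetrically.

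\textbf{Main obstacle.} The delicate point is the bookkeeping of orders in the block-inverse: one must track that the $(1,1)$-block of the inverse is exactly $\tfrac{1}{i(1-z)}\boldsymbol A^{-1}(1+o(1))$ and that the cross terms $B(z)=o(1)$, $C(z)=O(|1-z|)$ contribute only $o(1)$ after multiplication by $\nu^{-1}_z = O(|1-z|)$ — in particular the Schur complement $D(z) - C(z)A(z)^{-1}B(z) = \boldsymbol D + o(1)$ since $C(z)A(z)^{-1}B(z) = O(|1-z|)\cdot O(|1-z|^{-1})\cdot o(1) = o(1)$ — so that the off-diagonal and $(2,2)$ blocks of $\nu^{-1}_z(\text{block inverse})$ all vanish as $z\to1$. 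The second genuinely subtle step is identifying the limit of $N^z_+T^z_+$ with $\Gamma T^1_+-\one$: one has to pair the surviving $\mathcal N$-direction of $T^z_+$ (order $(1-z)^{-1}$ internally) against the order-$(1-z)$ behavior of $W(u_-^{1/\bar z},u_+^z)$ on $\mathcal N$ from \eqref{teto2}, and simultaneously against the order-$1$ behavior on $\mathcal N^\perp$, being careful that $\Gamma$ (a map $\mathcal N\to\Ll$) is extended by the formula $R^1_+ = \one - \Gamma T^1_+$ only through its composition with $T^1_+$, whose range is exactly $\mathcal N$, so no ambiguity in ``$\Gamma$'' arises. Everything else is routine once these two limits are pinned down.
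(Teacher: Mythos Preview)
Your treatment of the transmission coefficients --- the Schur complement on \eqref{piri}, the cancellation of the $(1-z)^{-1}$ singularity against $(\nu^z)^{-1}$, the continuity of $u_+^z(1)$, and the use of \eqref{eq-MNId2} for $T^z_-$ --- is exactly the paper's argument, and your identification of the kernels and ranges from the block structure of $P$, $Q$ also matches.

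For the reflection coefficients, however, your route through $R^z_+ = -N^z_+ T^z_+$ differs from the paper and, as written, has a gap. You invoke \eqref{teto2} to control ``the leading term of $W(u_-^{1/z},u_+^z)$ restricted to $\mathcal N$'', but \eqref{teto2} (and hence Proposition~\ref{Tmain}) concerns $W(u_-^{1/\bar z},\Phi^z)$ and ultimately $W(u_-^{1/\bar z},u_+^z)$; on $\mathbb S^1$ this is $W(u_-^{z},u_+^z)$, whereas $N^z_+$ involves $W(u_-^{1/z},u_+^z) = W(u_-^{\bar z},u_+^z)$ --- a \emph{different} Wronskian, for which the paper proves no expansion. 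One could presumably establish an analogue of Proposition~\ref{Tmain} for it by the same methods, but that is additional work you have not done. There is a second difficulty: since $N^z_+ = -\nu^z W(u_-^{1/z},u_+^z)$ diverges like $|z-1|^{-1}$ while you only know $T^z_+ - T^1_+ = o(1)$, the product $N^z_+(T^z_+ - T^1_+)$ is a priori uncontrolled; to salvage the argument you would need to upgrade to $T^z_+ - T^1_+ = O(|z-1|)$, which requires sharpening the Schur analysis beyond what you sketched.

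The paper sidesteps all of this. From \eqref{tresfromH}, for any fixed $n$ (sufficiently negative that $u_-^{1}(n)$, hence $u_-^{1/z}(n)$ for $z$ near $1$, is invertible),
\[
R^z_+ \;=\; (u_-^{1/z}(n))^{-1} u_-^z(n) \;-\; (u_-^{1/z}(n))^{-1} u_+^z(n)\,T^z_+ .
\]
Every factor on the right is continuous in $z$ on $\mathbb S^1$ and $T^z_+\to T^1_+$ is already established, so
\[
R^z_+ \;\longrightarrow\; \one \;-\; (u_-^{1}(n))^{-1} u_+^1(n)\,T^1_+ .
\]
This limit is independent of $n$; letting $n\to -\infty$ and using $u_-^1(n)\to\one$ from \eqref{wasu} together with $u_+^1(n)\xi\to\Gamma\xi$ for $\xi\in\mathcal N=\Ran(T^1_+)$ (Lemma~\ref{defGamma}) gives $R^1_+ = \one - \Gamma T^1_+$ directly. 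The argument for $R^z_-$ is symmetric. Your derivation of \eqref{904} from \eqref{901}, \eqref{903} and Lemma~\ref{biye} is correct and coincides with the paper's.
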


\begin{proof}
Equation \eqref{mainnnn} implies that the matrices $D(z)$ and $A(z) - B(z)D(z)^{-1} C(z)$ are invertible for $z\in \overline{\mathbb{D}}$  in a neighborhood of $1$, so  using the Schur complement formula, it follows that (recall \eqref{eq-NuDef})
\begin{align}\label{ccc}
&\notag  (\nu^{z})^{-1} \begin{pmatrix}
  A(z)  & B(z) \\ C(z)  & D(z)
  \end{pmatrix}^{-1} \\ &  =  (\nu^{z})^{-1} \begin{pmatrix}
  \one  & 0 \\ -D(z)^{-1} C(z) & \one 
  \end{pmatrix} \begin{pmatrix}
  \Big ( A(z) - B(z)D(z)^{-1} C(z)  \Big )^{-1}  & \!\!\!\!\! \!\!\! 0 \\ 0 &  \!\!\!\!\! \!\!\!D(z)^{-1} 
  \end{pmatrix} \begin{pmatrix}
  \one  & - B(z)D(z)^{-1} \\ 0 & \one 
  \end{pmatrix} \notag  \\
  & = \begin{pmatrix}
  2   \boldsymbol{A}^{-1} & 0 \\ 0 & 0
\end{pmatrix}   + o(1),  
\end{align}
where  \eqref{eq-NuDef} and  \eqref{mainnnn} were used.  The first equation in \eqref{900} follows from {Equation} \eqref{ccc}, Proposition  \ref{Tmain}, the continuity of $u^z_+(1)$,  \eqref{eq-MNId0}
and  Definition \ref{def-Scat}. The second equation in \eqref{900} is a consequence of the {first} equation, Definition \ref{def-Scat} and \eqref{eq-MNId2}.

\vspace{.1cm}

Due to the definition \eqref{change} of $P$, the kernel of $  T_+^1  $ is generated by  $  \{e_{d+1} , \cdots, e_L \}$, and they are a basis of $W(u_-^1, u_+^1) \mathbb{C}^L$: since $P$ is unitary, $P^* = P^{-1} =  \begin{pmatrix}
e_1  \cdots e_L
\end{pmatrix}  $. Therefore, the kernel at stake equals the kernel of   $ \begin{pmatrix}
\one & 0 \\ 0 & 0
\end{pmatrix}  \begin{pmatrix}
e_1  \cdots e_L
\end{pmatrix}^{-1}   $.  Due to the definition {\eqref{change} of $Q$,} the image of $  T_+^1  $ is generated by  $  \{v_1 , \cdots, v_d \}$, and these vectors form a basis of 
$ \mathcal{N} $. This proves the first line in \eqref{901}. The second one follows
from the fact that $T_{-}^1 = (T_{+}^1)^* $ (which can be deduced  from \eqref{eq-MNId2} and \eqref{transref}). 

\vspace{.1cm}

Next let us take  small enough $n$ such that $(u_-^{1}(n))^{-1}$ exists. Then, by continuity,   $(u_-^{1/z}(n))^{-1}$  exists, for $z$ in a neighborhood of  $1$. Using    
  \eqref{tresfromH} leads to
\begin{align}\label{si}
R^z_{+}= (u_-^{\overz}(n))^{-1}  u_-^z(n) -    
  ( u_-^{\overz}(n))^{-1} u_+^z(n) T^z_+ \to (u_-^{1}(n))^{-1}  u_-^1(n) -    
  ( u_-^{1}(n))^{-1} u_+^1(n) T^1_+ ,
  \end{align}
as $z $ tends to $ 1 $.   Taking the limit $n \to - \infty$  in the right hand side of \eqref{si}, we arrive at  the first equation in \eqref{903} (see also Lemma  \ref{defGamma} and \eqref{wasu}).  The second equation is obtained similarly.  {Equations} \eqref{904} follow from \eqref{901}, \eqref{903} and the fact that $\Gamma $ is a bijection from $\mathcal{N}$ onto 
$\Ll$.  
\end{proof}

\appendix

\section{Appendix}

{Let $\Mm={\mathcal{M}_{L\times L}}$ and $l^p(\mathbb{N},\Mm)$ the space of sequences $h:\mathbb{N} \to \Mm$ with $\sum_{n}\|h(n)\|^p<\infty$  for $p\in[1,\infty)$ and $\sup_n\|h(n)\|<\infty$ for $p=\infty$.}

\begin{theo}[{Lemma}~7.8 \cite{Tes}, Volterra Equation]\label{volterra} 
{Let  $g \in l^{\infty}(\mathbb{N},\Mm)$ and $K(n,m) \in \Mm$ for each $m,n\in \mathbb{N}$. 
Consider the Volterra sum equation}
\begin{align} \label{volt}
 f(n)=g(n)+\sum_{m=n+1}^{\infty}K(n,m)f(m),
 \end{align}
and suppose there is a sequence $M\in l^{1}(\mathbb{N},\mathbb{R})$  such that $\|K(n,m)\|\leq M(m)$ for each $m, n\in \mathbb{N}$. Then, {Equation} \eqref{volt} has a unique solution $f\in l^\infty(\mathbb{N},\Mm)$. Moreover, if $g(n)$ and $K(n,m)$ depend  continuously (resp. holomorphically) on a parameter $z$ (for every $n$),  $M$ does not depend on $z$, and $g(n)$ is uniformly bounded with respect to $n$ and $z$, then the same is true for $f(n)$.
\end{theo}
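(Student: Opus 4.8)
The plan is to recast \eqref{volt} as a fixed-point equation on the Banach space $\ell^\infty(\NM,\Mm)$. Define the linear operator $\Kk$ by $(\Kk f)(n):=\sum_{m=n+1}^\infty K(n,m)\,f(m)$; the hypothesis $\|K(n,m)\|\le M(m)$ with $M\in\ell^1(\NM,\RM)$ makes this series absolutely convergent and gives $\|(\Kk f)(n)\|\le\|M\|_1\,\|f\|_\infty$, so $\Kk$ is a bounded operator on $\ell^\infty(\NM,\Mm)$ and \eqref{volt} reads $(\one-\Kk)f=g$. Since $\|M\|_1$ need not be small, one cannot invert $\one-\Kk$ by the plain geometric series; the one genuinely substantial point is that the Volterra (upper-triangular) structure of $\Kk$ produces a much faster decay of the iterates.

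Concretely, I would unfold the iterates to
$$(\Kk^j f)(n)=\sum_{n<m_1<m_2<\cdots<m_j} K(n,m_1)K(m_1,m_2)\cdots K(m_{j-1},m_j)\,f(m_j),$$
whence $\|(\Kk^j f)(n)\|\le\|f\|_\infty\sum_{n<m_1<\cdots<m_j}M(m_1)\cdots M(m_j)$. The sum over strictly increasing $j$-tuples is at most $\tfrac1{j!}\big(\sum_{m>n}M(m)\big)^j\le\tfrac1{j!}\|M\|_1^j$, so $\|\Kk^j\|\le\|M\|_1^j/j!$ and $\sum_{j\ge0}\|\Kk^j\|\le e^{\|M\|_1}<\infty$. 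Hence $\one-\Kk$ is boundedly invertible on $\ell^\infty(\NM,\Mm)$, the element $f:=\sum_{j\ge0}\Kk^j g$ lies in $\ell^\infty(\NM,\Mm)$ with $\|f\|_\infty\le e^{\|M\|_1}\|g\|_\infty$, and $f=g+\Kk f$ because $\Kk$ is continuous and therefore commutes with the norm-convergent series. Uniqueness is immediate: any bounded $h$ with $h=\Kk h$ obeys $\|h\|_\infty=\|\Kk^j h\|_\infty\le\tfrac{\|M\|_1^j}{j!}\|h\|_\infty$ for every $j$, forcing $h=0$.

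For the dependence on a parameter $z$ I would work termwise in the Neumann series evaluated at a fixed $n$, i.e. $f_z(n)=\sum_{j\ge0}(\Kk_z^j g_z)(n)$. If $g_z(n)$ is uniformly bounded in $z$ and $n$ and $z\mapsto g_z(\cdot)$, $z\mapsto K_z(\cdot,\cdot)$ are continuous (resp.\ holomorphic), then each $(\Kk_z^j g_z)(n)$ is a sum over $n<m_1<\cdots<m_j$ of finite products of such functions, dominated uniformly in $z$ by $\tfrac1{j!}\|M\|_1^j\sup_{z,n}\|g_z(n)\|$; by the Weierstrass $M$-test the sum over the $m_i$ is continuous (resp.\ holomorphic), and then the outer sum over $j$ converges uniformly for the same reason, so $z\mapsto f_z(n)$ inherits continuity (resp.\ holomorphy), with $\sup_{z,n}\|f_z(n)\|\le e^{\|M\|_1}\sup_{z,n}\|g_z(n)\|$. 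I expect the only real obstacle to be bookkeeping — justifying the interchange of the $j$-sum, the $m_i$-sums and the action of $\Kk_z$ by means of the single exponential dominating bound — rather than anything conceptual.
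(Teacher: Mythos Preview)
Your argument is correct. The paper, however, takes a slightly different route: instead of exploiting the Volterra structure to get the factorial decay $\|\Kk^j\|\le\|M\|_1^j/j!$, it first observes that one may pass to a tail $\{k,k+1,\dots\}$ on which $\sum_{m>k}M(m)<\tfrac12$, so that the operator $T$ restricted to $\ell^\infty(\{k,k+1,\dots\},\Mm)$ is a strict contraction and $(\one-T)^{-1}$ exists by the plain geometric series; the solution is then extended to $n<k$ by the recursion \eqref{volt} itself. Your approach avoids this two-step reduction and yields the sharper global bound $\|f\|_\infty\le e^{\|M\|_1}\|g\|_\infty$ directly, at the cost of writing out the ordered-tuple combinatorics. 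The paper's version is a bit quicker to state but less explicit about constants; both are standard and either is acceptable here.
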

\begin{proof}
	For each $k\in \mathbb{N}$, if one finds a solution $f\in l^\infty(\mathbb{N}\cap[k,\infty),\Mm)$, then it can be extended to a solution in $l^\infty(\mathbb{N},\Mm)$ by defining recursively $f(n)=g(n)+\sum_{m=n+1}^{\infty}K(n,m)f(m)$ for each $n<k$. Since $M \in l^1(\mathbb{N},{\RM})$, there exists  $k\in \mathbb{N}$ such that $\sum_{m=k+1}^{\infty} M(m)<1/2$. Then, w.l.o.g., we can assume that $k=0$, {\it i.e.}, $\sum_{m=1}^{\infty}M(m)<1/2$. Then let us introduce the operator   $T:l^{\infty}(\mathbb{N},\Mm) \to l^{\infty}(\mathbb{N},\Mm)$  by $$(Tf)(n)=\sum_{m=n+1}^{\infty}K(n,m)f(m)$$ 
which is well-defined  because 
	$$\sum_{m=n+1}^{\infty}\|K(n,m)f(m)\|\leq\sum_{m=n+1}^{\infty}\|K(n,m)\|\|f(m)\|\leq 1/2 \|f\|_{\infty}. $$ Moreover, the last equation also implies that $T$ is bounded and $\|T\|<1/2$, therefore $I-T$ is invertible and $f:=(I-T)^{-1}g$ is a solution to the equation on $l^{\infty}(\mathbb{N},\Mm)$. 
	
\vspace{.1cm}

Now we assume that	$g(n) \equiv g^z(n)$ and $K(n,m) =  K^z(n,m)$ depend  continuously (resp. holomorphically) on a parameter $z$ (for every $n$),  $M$ does not depend on $z$, and $g^z(n)$ is uniformly bounded with respect to $n$ and $z$. Since the series  $(Tg^z)(n)=\sum_{m=n+1}^{\infty} K^z(n,m)g^z(m)$ converges uniformly,  $(Tg^z)(n)$ is then continuous (holomorphic) for each $n\in \mathbb{N}$ and it is uniformly bounded with respect to $n$ and $z$. Repeating the argument, one obtains that the same holds true for $ T^j g^z $, for every natural number $j$.   Using that $\|T^jg\|\leq (1/2)^j \sup_{n, z}\{\|g(n)\| \} $, it follows that the series  $f^z(n)=\sum_{j=0}^{\infty} (T^{j}g)^z(n)$ converges uniformly. This implies that the map $z\mapsto f^z(n)$ is continuous (holomorphic).
\end{proof}

\vspace{.1cm}

The following well-known result is recalled without proof {(see \cite{Gro} for a proof)}.

\begin{lemma}[Gronwall's Lemma]\label{gronwall}
	Let $(u_n)_{n\in \mathbb{N}}$, $(w_n)_{n\in \mathbb{N}}$ be {non}-negative real sequences and $\alpha$ a real number such that for $n\in \mathbb{N}$ 
	$$u_n\leq \alpha + \sum_{i=1}^{n-1}u_iw_i.$$
	Then, for $n\in \mathbb{N}$ we have that
	$$u_n\leq \alpha \exp\left(\sum_{i=1}^{n-1}w_i\right).$$
\end{lemma}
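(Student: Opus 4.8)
The plan is to run the classical telescoping argument on the majorant sequence. First I would define $S_n := \alpha + \sum_{i=1}^{n-1} u_i w_i$, which is exactly the right-hand side of the hypothesis, so that the assumption reads simply $u_n \le S_n$ for every $n\in\mathbb{N}$. Observe that $S_1 = \alpha$ since the sum is empty, and that $0 \le u_n \le S_n$ forces $S_n \ge 0$ for all $n$; in particular the base case $n=1$ already yields $u_1 \le \alpha$.

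Next I would extract a one-step recursion for the majorant. Computing the difference of consecutive terms gives $S_{n+1} - S_n = u_n w_n$, and bounding $u_n \le S_n$ together with $w_n \ge 0$ yields $u_n w_n \le S_n w_n$, hence
$$
S_{n+1} \;\le\; (1 + w_n)\, S_n .
$$
Because each factor satisfies $1 + w_n \ge 1 > 0$ and each $S_n \ge 0$, these inequalities can be iterated without any change of direction.

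A straightforward induction on $n$ then produces
$$
S_n \;\le\; S_1 \prod_{i=1}^{n-1}(1+w_i) \;=\; \alpha \prod_{i=1}^{n-1}(1+w_i).
$$
Finally I would invoke the elementary convexity estimate $1 + x \le e^x$, valid for every real $x$, applied with $x = w_i$, to replace the product by $\exp\!\big(\sum_{i=1}^{n-1} w_i\big)$. Combining this with $u_n \le S_n$ gives the claimed bound $u_n \le \alpha \exp\!\big(\sum_{i=1}^{n-1} w_i\big)$.

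There is no substantive obstacle here, as the result is classical; the only point requiring a little care is the bookkeeping of signs, namely verifying that $S_n \ge 0$ so that multiplication by the nonnegative factor $1+w_n$ preserves the inequality, and handling the base case $n=1$ where the empty sum forces $u_1 \le \alpha$. Alternatively, one could avoid introducing $S_n$ altogether and prove the sharper intermediate bound $u_n \le \alpha \prod_{i=1}^{n-1}(1+w_i)$ directly by induction, but the majorant formulation makes the telescoping step cleanest.
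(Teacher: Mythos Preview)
Your proof is correct and is the standard telescoping argument for the discrete Gronwall inequality. The paper itself does not prove this lemma at all; it simply records the statement and refers the reader to \cite{Gro}, so there is nothing to compare against and your argument in fact supplies more than the paper does.
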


\begin{lemma}[Variation of parameters]\label{var_par}
	Consider the following difference equation
	\begin{equation}\label{diff}
		X(n-1)+AX(n)+X(n+1)=B(n)X(n),
	\end{equation}
	where $A , B(n) \in {\mathcal{M}}$ for each $n\in \mathbb{Z}$.
	Suppose that $S_1,S_2$ are solutions of the equation
	\begin{align}\label{PH5} 
	 X(n-1)+AX(n)+X(n+1)=0,
	\end{align}
	such that $S_1(0)=0$, $S_1(1)=\mathbf{1}$ and $S_2(0)=S_2(1)=\mathbf{1}$. Then, for $C,D \in {\mathcal{M}}$,  the solution $S$ to Equation \eqref{diff}, with initial conditions $S(0)=C$, $S(1)=D$, satisfies for $n\in \mathbb{N}$
\begin{align}\label{PH4}	
	S(n)=S_1(n)(D-C)+S_2(n)C+\sum_{j=1}^{n-1}S_1(n-j)B(j)S(j), 
\end{align}
	and for $n\in \mathbb{Z}^-\cup\{0\}$
	\begin{align} \label{como}
	S(n)=S_1(n)(D-C)+S_2(n)C-\sum_{j=n+1}^{0}S_1(n-j)B(j)S(j),
	\end{align}
where we identify $ \sum_{j=1}^{0}S_1(-j)B(j)S(j)  \equiv 0 $ and  $ \sum_{j=1}^{0}S_1(1-j)B(j)S(j) \equiv 0$.    
\end{lemma}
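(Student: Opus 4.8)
The strategy is a direct induction. First note that Equation~\eqref{diff}, rewritten as $X(n+1)=(B(n)-A)X(n)-X(n-1)$, together with the data $S(0)=C$, $S(1)=D$ determines $S(n)$ uniquely for $n\geq 2$; rewritten as $X(n-1)=(B(n)-A)X(n)-X(n+1)$ it determines $S(n)$ for $n\leq -1$. Hence $S$ is well defined, and since the right-hand sides of \eqref{PH4} and \eqref{como} also equal $C$ at $n=0$ and $D$ at $n=1$ (the sums being empty there, and using $S_1(0)=0$, $S_1(1)=\mathbf 1$, $S_2(0)=S_2(1)=\mathbf 1$), it suffices to verify that these right-hand sides obey the same three-term recursion. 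I would do this by induction on $n$, separately for $n\geq 0$ and for $n\leq 0$.

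The three facts that make everything telescope are: (a) for each fixed $j$ the sequences $m\mapsto S_1(m-j)$ and $m\mapsto S_2(m)$ solve the homogeneous equation \eqref{PH5}, so $-A\,S_1(m-j)-S_1(m-1-j)=S_1(m+1-j)$ and similarly for $S_2$; (b) $S_1(0)=0$, which lets one lengthen or shorten a sum by a vanishing boundary term; and (c) $S_1(1)=\mathbf 1$. For the inductive step of \eqref{PH4}, assume the formula at $n$ and $n-1$ with $n\geq 1$ and write $S(n+1)=B(n)S(n)-A\,S(n)-S(n-1)$. Substituting the formulas for $S(n)$ and $S(n-1)$ into $-A\,S(n)-S(n-1)$: the terms in $D-C$ and in $C$ collapse by (a) into $S_1(n+1)(D-C)+S_2(n+1)C$; the two finite sums, after extending the shorter one up to $j=n-1$ by (b), combine by (a) into $\sum_{j=1}^{n-1}S_1(n+1-j)B(j)S(j)$; and $B(n)S(n)=S_1(1)B(n)S(n)$ supplies, by (c), the missing $j=n$ term of that sum. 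This is \eqref{PH4} at $n+1$.

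For $n\leq 0$ the argument is the mirror image. The base case $n=0$ is immediate; for $n=-1$ one uses $S_1(-1)=-\mathbf 1$ and $S_2(-1)=-A-\mathbf 1$ (read off from \eqref{PH5} at $m=0$) to check that the right-hand side of \eqref{como} equals $(B(0)-A)C-D=S(-1)$. The inductive step runs downward via $S(n-1)=B(n)S(n)-A\,S(n)-S(n+1)$, using (a) in the form $A\,S_1(m-j)+S_1(m+1-j)=-S_1(m-1-j)$, again $S_1(0)=0$ to re-index, and $S_1(-1)=-\mathbf 1$ to absorb the leftover $B(n)S(n)$ (the relevant sum now carrying a minus sign), giving \eqref{como} at $n-1$.

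I expect no genuine obstacle: the whole content is bookkeeping with the index shifts inside the finite sums, and the only places needing care are the empty-sum conventions in the statement and the systematic use of the boundary values $S_1(0)=0$, $S_1(\pm1)=\pm\mathbf 1$, which are precisely what makes the shifted terms cancel. Conceptually, subtracting the homogeneous solution $S_1(n)(D-C)+S_2(n)C$ — which by the base-case computation is the solution of \eqref{PH5} with data $C,D$ — reduces both identities to the single claim that $n\mapsto\sum_{j=1}^{n-1}S_1(n-j)F(j)$, respectively $-\sum_{j=n+1}^{0}S_1(n-j)F(j)$, is the particular solution of $X(n-1)+AX(n)+X(n+1)=F(n)$ vanishing at $0$ and $1$, i.e.\ the discrete Duhamel formula, with $F(n)=B(n)S(n)$.
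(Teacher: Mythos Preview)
Your proposal is correct and follows essentially the same approach as the paper: both verify that the right-hand side of \eqref{PH4} (resp.\ \eqref{como}) satisfies the three-term recursion \eqref{diff} using exactly the ingredients you list---$h_m(S_1)=h_m(S_2)=0$, $S_1(0)=0$, and $S_1(1)=\mathbf 1$---and both handle only the forward case in detail. The paper packages the computation as a direct evaluation of $h_n(S)$ rather than phrasing it as an induction, but the algebra is identical.
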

\begin{proof}
Equations \eqref{PH4} and \eqref{como} together with the initial conditions define recursively a matrix valued function that is denoted by $S$. Now we prove that $S$ satisfies {Equation}  \eqref{diff}.  The proof  is carried out only for $n \geq 2$, the other cases are similarly treated.  
For every $ X \in {\mathcal{M}^\ZM}  $,
we {use the notation}
$$h_n( X) =  X(n-1)+AX(n)+X(n+1) .$$ Using Equation~\eqref{PH4} and $n \geq 2$ we get (here recall that $S_1$ ans $S_2$ satisfy \eqref{PH5} and, therefore, $  h_n(S_1) = 0 =  h_n(S_2) $; moreover $S_1(0)= 0,   S_1(1)={\bf 1}$)
\begin{align*}
h_n(S)  = & \, h_n(S_1)(D-C) +  h_n(S_2)C + \sum_{j=1}^{n-2} h_{n-j} (S_1) B(j)S(j)
\\ \notag &
+  A S_1(n -(n-1))B(n-1)S(n-1) +  S_1(n +1 -(n-1))B(n-1)S(n-1)  \\ \notag & + 
 S_1(n +1 - n)B(n)S(n)
 \\ = & \,A  S_1(1) B(n-1)S(n-1) +   S_1(2)B(n-1)S(n-1) +  B(n)S(n)
 \\ = &\,\Big ( A S_1(1)   +   S_1(2) \Big )B(n-1)  S(n-1) +  B(n)S(n) 
 \\ = & \,-S_1(0)B(n-1) S(n-1) +  B(n)S(n) = B(n)S(n). 
\end{align*} 
We obtain that $h_n(S) =  B(n)S(n)$, which is \eqref{diff}. 
\end{proof}

\begin{lemma}\label{PPH}
{Consider the set-up of Theorem \ref{volterra} with $g=\mathbf{1}, K^z(n,j)=-z^{j-n}s^z(j-n)V (j)$ and $M(j)=j\|V(j)\|$ where $s^z$ is as in Definition~\ref{FreeSolo1}. Furthermore let $\tilde{u}_+^z$ be} the corresponding solution to the  Volterra equation 
(for $n \in \mathbb{N}$)
 \begin{align}\label{PH8}	\tilde{u}_{+}^z(n)=\mathbf{1} - \sum_{j=n+1}^{\infty} s^z(j-n)V(j)z^{j-n}\tilde{u}_{+}^z(j).
\end{align} 
{Finally} denote by   $u_+^z(n)=z^n\tilde{u}_+^z(n)$,  $n \in \mathbb{N}$. It follows that
\begin{align}\label{PH9}
 u_+^z(n-1)  +V(n)u^z_+(n) + u^z_+(n+1) =(z+1/z)u^z_+(n),   \hspace{2cm}  n \geq 2.
\end{align}
\end{lemma}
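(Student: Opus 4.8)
The plan is to convert the Volterra equation \eqref{PH8} for $\tilde u_+^z$ into the corresponding integral equation for $u_+^z$ and then verify \eqref{PH9} by a direct substitution, using crucially that both $n\mapsto z^n\mathbf 1$ and $s^z$ solve the \emph{free} equation \eqref{Fsolutions}. First I would multiply \eqref{PH8} by $z^n$ and use $z^j\tilde u_+^z(j)=u_+^z(j)$ to obtain, for every $n\in\NM$,
\begin{align}\label{prop-uint}
u_+^z(n)\;=\;z^n\mathbf 1\;-\;\sum_{j=n+1}^{\infty}s^z(j-n)\,V(j)\,u_+^z(j),
\end{align}
which is precisely the first line of \eqref{6}. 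The series converges absolutely by Theorem~\ref{volterra} (the defining datum being $M(j)=j\|V(j)\|\in l^1(\NM,\RM)$, by \eqref{shortrange}), so all the rearrangements below are legitimate.

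Next I would fix $n\ge 2$ and apply \eqref{prop-uint} at the three sites $m=n-1,n,n+1$; this is exactly where the hypothesis $n\ge 2$ enters, since \eqref{prop-uint} is only available for $m\ge 1$. Writing $u_+^z(m)=z^m\mathbf 1-S(m)$ with $S(m)=\sum_{j=m+1}^{\infty}s^z(j-m)V(j)u_+^z(j)$, one has
\begin{align*}
u_+^z(n-1)&+u_+^z(n+1)-(z+\tfrac1z)u_+^z(n)\\
&=\bigl(z^{n-1}+z^{n+1}-(z+\tfrac1z)z^n\bigr)\mathbf 1-\bigl(S(n-1)+S(n+1)-(z+\tfrac1z)S(n)\bigr).
\end{align*}
The first bracket vanishes because $z^m$ solves the free equation. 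For the second bracket I would, for each fixed $j$, collect the coefficient multiplying $V(j)u_+^z(j)$: for $j\ge n+2$ it equals $-\bigl(s^z(j-n+1)+s^z(j-n-1)-(z+\tfrac1z)s^z(j-n)\bigr)=0$, again because $s^z$ solves the free equation (here $j-n\ge 2$); for $j=n+1$ it collapses, using $s^z(1)=1$ and $s^z(2)=(z+\tfrac1z)s^z(1)-s^z(0)=z+\tfrac1z$, to $-\bigl(s^z(2)-(z+\tfrac1z)\bigr)=0$; and for $j=n$ only the $S(n-1)$ summand contributes, namely $-s^z(1)V(n)u_+^z(n)=-V(n)u_+^z(n)$.

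Assembling these contributions yields $u_+^z(n-1)+u_+^z(n+1)-(z+\tfrac1z)u_+^z(n)=-V(n)u_+^z(n)$, which is \eqref{PH9}. Everything here is elementary; the only point demanding care is the bookkeeping of the index shifts in $S(n-1)$, $S(n)$, $S(n+1)$ together with the separate treatment of the two boundary indices $j=n$ and $j=n+1$, where the special values $s^z(0)=0$, $s^z(1)=1$ and $s^z(2)=z+\tfrac1z$ come in. I expect this boundary-term accounting to be the only (mild) obstacle; there is no analytic subtlety beyond the absolute convergence already guaranteed by Theorem~\ref{volterra}.
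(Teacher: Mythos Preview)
Your proposal is correct and follows essentially the same approach as the paper's proof: both rewrite \eqref{PH8} as the integral equation \eqref{prop-uint} (equivalently \eqref{PH10} in the paper, via $s^z(n-j)=-s^z(j-n)$), apply the free Schr\"odinger operator at site $n$, use that $z^n$ and $s^z$ are free solutions so the bulk contributions vanish, and then isolate the boundary terms at $j=n$ and $j=n+1$ using $s^z(0)=0$, $s^z(1)=1$ to obtain $-V(n)u_+^z(n)$. The paper packages this with the shorthand $h_n(X)=X(n-1)+AX(n)+X(n+1)$, but the bookkeeping is identical to yours.
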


\begin{proof}
Let us take $n \geq 2$.  We use the notation of  Lemma \ref{var_par} and its proof, taking $A=-(z+1/z)\mathbf{1}$, and set $\gamma(n) = z^n$.  A direct calculation shows that
$h_m(s^z) = 0$ and $h_m (\gamma) =0 $, for every $m$. Further let us note that   
 \begin{align}\label{PH10} u_{+}^z(n)=  \gamma(n) + \sum_{j=n+1}^{\infty} s^z(n-j)V(j) u_{+}^z(j).
\end{align} 
It follows from \eqref{PH10} by an algebraic calculation using the definition of $h_m$ that 
 \begin{align}
 \label{PH1p1}
  h_n(u_{+}^z)= \,& h_n(\gamma) + \sum_{j=n+2}^{\infty} h_{n-j}(s^z)V(j) u_{+}^z(j) 
  + A s^z(n-(n+1)) V(n+1) u_{+}^z(n+1) \\ & \notag  +  
  s^z(n-1-(n+1)) V(n+1) u_{+}^z(n+1) 
  +s^z(n-1-n) V(n) u_{+}^z(n). 
\end{align} 
Using \eqref{PH1p1}, the facts that $h_m (s^z) = 0$,  $h_n(\gamma)  =0 $ and $s^z(m) = - s^z(-m)$ together with $s^z(0)= 0$,  $s^z(1) = 1$, one gets that 
 \begin{align}
\notag
  h_n(u_{+}^z)=\, &  -(A s^z(1) + s^z(2))   V(n+1) u_{+}^z(n+1) - s^z(1) V(n)   u_{+}^z(n)
\\ \notag   =\, & \, s^z(0)  V(n+1) u_{+}^z(n+1) -  V(n)   u_{+}^z(n) 
\\ \notag   = \,& -  V(n)   u_{+}^z(n), 
\end{align} 
which is 
\eqref{PH9}. 
\end{proof}

\noindent {\noindent {\bf Acknowledgements:} This research was supported by CONACYT, FORDECYT-PRONACES
	 429825/2020 (proyecto apoyado por el FORDECYT-PRONACES, PRONACES/429825). The work of M. B., G.G. and G.F. was also supported by the project PAPIIT-DGAPA-UNAM IN101621,  that of H. S.-B. alos by
	PAPIIT-UNAM IN105718, CONACYT Ciencia Basica 283531 and the DFG SCHU 1358/6-2.}


\end{document}